\numberwithin{equation}{section}
\theoremstyle{plain}
\newtheorem{theorem}{Theorem}[section]
\newtheorem{corollary}[theorem]{Corollary}
\newtheorem{proposition}[theorem]{Proposition}
\newtheorem{lemma}[theorem]{Lemma}
\theoremstyle{definition}
\newtheorem{example}[theorem]{Example}
\newtheorem*{assumptions*}{Assumptions}
\newtheorem*{acknowledgements*}{Acknowledgements}
\theoremstyle{remark}
\newtheorem{remark}[theorem]{Remark}
\newtheorem*{remark*}{Remark}
\newtheorem*{note*}{Note}
\newtheorem*{stepone*}{Step 1}
\newtheorem*{steptwo*}{Step 2}
\newtheorem*{stepthree*}{Step 3}
\newtheorem*{mpoweroftwo*}{The case $m=2^\ell$}
\newtheorem*{marbitrary*}{The general case}
\providecommand{\B}[1]{\mathbf{#1}}
\providecommand{\C}[1]{\mathcal{#1}}
\providecommand{\CS}[1]{\mathscr{#1}}
\providecommand{\D}[1]{\mathbb{#1}}
\providecommand{\F}[1]{\mathfrak{#1}}
\providecommand{\R}[1]{\mathrm{#1}}
\newcommand{\dd}{\mathrm{d}}
\newcommand{\eul}{\mathrm{e}}
\newcommand{\ii}{\mathrm{i}}
\newcommand{\id}{\mathrm{Id}}
\newcommand{\expect}{\mathbb{E}}
\newcommand{\four}{\mathscr{F}}
\newcommand{\prob}{\mathbb{P}}
\newcommand{\ac}{\mathrm{ac}}
\newcommand{\cont}{\mathrm{c}}
\newcommand{\fb}{\mathrm{fb}}
\newcommand{\pp}{\mathrm{p}}
\newcommand{\ord}{\mathrm{O}}
\newcommand{\osmall}{\mathrm{o}}
\newcommand{\spec}{\sigma}
\providecommand{\abs}[1]{\lvert#1\rvert}
\providecommand{\accol}[1]{\lbrace#1\rbrace}
\providecommand{\croch}[1]{\lbrack#1\rbrack}
\providecommand{\norm}[1]{\lVert#1\rVert}
\providecommand{\scal}[1]{\langle#1\rangle}
\renewcommand{\vec}[1]{\mathbf{#1}}
\renewcommand{\Im}{\mathop{\rm Im}\nolimits}
\begin{document}
\title[Ballistic transport in periodic and random media]{Ballistic transport in periodic and random media}
\author[A. Boutet de Monvel]{Anne Boutet de Monvel}
\address{AB: Institut de Math\'ematiques de Jussieu-Paris Rive Gauche, Universit\'e de Paris, 8 place Aur\'elie Nemours, case 7012, 75205 Paris Cedex 13, France.}
\email{anne.boutet-de-monvel@imj-prg.fr}
\author[M. Sabri]{Mostafa Sabri}
\address{MS: Department of Mathematics, Faculty of Science, Cairo University, Giza 12613, Egypt.}
\email{mmsabri@sci.cu.edu.eg}
\subjclass[2020]{Primary: 81Q10; Secondary: 46N50, 47B80}
\keywords{Ballistic transport, delocalization, periodic Schr\"odinger operators, periodic graphs, trees}
\date{}
\begin{abstract}
We prove ballistic transport of all orders, that is, $\norm{x^m\eul^{-\ii tH}\psi}\asymp t^m$, for the following models: the adjacency matrix on $\D{Z}^d$, the Laplace operator on $\D{R}^d$, periodic Schr\"odinger operators on $\D{R}^d$, and discrete periodic Schr\"odinger operators on periodic graphs. In all cases we give the exact expression of the limit of $\norm{x^m\eul^{-\ii tH}\psi}/t^m$ as $t\to+\infty$. We then move to universal covers of finite graphs (these are infinite trees) and prove ballistic transport in mean when the potential is lifted naturally, giving a periodic model, and when the tree is endowed with random i.i.d.\ potential, giving an Anderson model. The limiting distributions are then discussed, enriching the transport theory. Some general upper bounds are detailed in the appendix.
\end{abstract}
\maketitle
\section{Introduction}  \label{sec:introduction}
\subsection{Background}  \label{subsec:background}
Ballistic motion is a dynamical property found in certain delocalized Schr\"o\-dinger operators. 

Broadly speaking, in considering a Schr\"odinger operator, say $H=-\Delta +V$ on $L^2(\D{R}^d)$, we can understand \emph{localization} in some spectral interval $I$ in three senses: (a) Spectral localization means that $\spec(H)\cap I$ is pure point. (b) Exponential localization means that moreover the eigenfunctions decay exponentially. (c) Dynamical localization means that states initially localized in a bounded domain will not leave this domain much as time goes on. 

The opposite regime of \emph{delocalization} is similarly understood in three senses: (a') spectrally, one expects absolutely continuous (AC) spectrum, (b') spatially, the (generalized) eigenfunctions may ideally be equidistributed in some sense, (c') dynamically, one expects the wave packets to spread on the space as time goes on.

The RAGE theorem establishes some links between \emph{spectral} and \emph{dynamical} (de)localization. Recall that if $\psi_0\in L^2(\D{R}^d)$ is some initial state of the system, then $\eul^{-\ii tH}\psi_0$ describes the state of the system at time $t$. Now define the following subspaces of $\CS{H}\coloneqq L^2(\D{R}^d)$:
\[
\CS{H}_{\pp}\coloneqq\overline{\text{span}\accol{\text{eigenvectors of }H}}\quad \text{and} \quad \CS{H}_\cont\coloneqq\CS{H}_{\pp}^{\bot}.
\]
The RAGE theorem (see \cite{Te14}*{Section 5.2}) asserts that
\begin{align}\label{eq:ragepp}
f\in\CS{H}_{\pp}
&\iff \forall\varepsilon>0\ \exists K\subset\D{R}^d \text{ compact s.t.}\ \sup_{t\in\D{R}}\,\norm{\chi_{K^c}\eul^{-\ii tH}f}\leq\varepsilon,\\
\label{eq:ragec}
f\in\CS{H}_{\cont}
&\iff \forall K\subset\D{R}^d\text{ compact}\colon \lim_{T\to+\infty}\frac{1}{T}\int_{-T}^T\norm{\chi_K\eul^{-\ii tH}f}^2\,\dd t=0.
\end{align}
Hence: (i) A state is in the pure point space $\CS{H}_{\pp}$ iff at all times, most of its mass lies within a fixed compact set. (ii) A state is in the continuous space $\CS{H}_{\cont}$ iff its time evolution escapes (in average) from any compact set, after sufficient time has passed.

Exploring further the dynamical aspects of delocalization, a common object of study is the \emph{mean square displacement} $r_f^2(t)\coloneqq\norm{x\eul^{-\ii tH}f}^2$. In presence of AC spectrum, i.e., if the absolutely continuous subspace $\CS{H}_\ac$ is nontrivial, one can deduce from the RAGE theorem\footnote{It is known \cite{Te14} that for $f\in\CS{H}_{\ac}$, \eqref{eq:ragec} strengthens to the fact that $\lim_{t\to+\infty}\norm{\chi_K\eul^{-\ii tH}f}=0$. This implies that $\lim_{t\to+\infty}\norm{\chi_{K^c}\eul^{-\ii tH}f}=\lim_{t\to+\infty}\norm{\eul^{-\ii tH}f}=\norm{f}$. Taking $K=\Lambda_r\coloneqq\accol{\abs{x}\leq r}$, we thus get $\liminf_{t\to+\infty}\norm{x^m\eul^{-\ii tH}f}\geq \liminf_{t\to+\infty}r^m\norm{\chi_{\Lambda_r^c}\eul^{-\ii tH}f}= r^m\norm{f}$. As $r$ is arbitrary, this shows $\lim_{t\to\infty} \norm{x^m\eul^{-\ii tH}f}=\infty$.} that $\lim\limits_{t\to\infty}\norm{x^m\eul^{-\ii tH}f}=\infty$ for any $f\in\CS{H}_\ac$, $f\neq 0$.

In this article we are interested in the rate of this divergence. It is known that $r_f^2(t)\lesssim t^2$, see Theorem~\ref{thm:momconts} for a more general result. The operator exhibits \emph{ballistic transport} if we also have $r_f^2(t)\gtrsim t^2$, at least in a time-averaged sense.

Here we discuss several models that exhibit the exact ballistic rate, that is, $\norm{x^m\eul^{-\ii tH}f}\asymp t^m$ for $f\in\CS{H}_\ac$, $f\neq 0$ and all $m\geq 1$.
\subsection{Contents}\label{subsec:discussion}
Our aim is to provide a comprehensive outlook of various techniques to prove ballistic motion for different models.

In Section \ref{sec:laplacian} we establish ballistic transport for the adjacency matrix on $\D{Z}^d$ (Section~\ref{subsec:Zd}), followed by the Laplace operator on $\D{R}^d$ (Section~\ref{subsec:Rd}). This is of course folklore, however we are not aware of references computing the exact limit of $\norm{x^m\eul^{-\ii tH}\psi}^2/t^{2m}$ as $t\to+\infty$. 

Then, in Section~\ref{subsec:ak}, we move on to Schr\"odinger operators with periodic potential on $\D{R}^d$, extending the result of \cite{AsKn98} to all $m$. After that, in Section~\ref{subsec:perdiscrete}, we proceed to the less studied discrete periodic Schr\"odinger operators on periodic graphs. This includes the hexagonal lattice, face-centered and body-centered cubic lattices, and a lot more. For these models we establish ballistic motion, assuming $\psi$ has a nontrivial AC component, i.e., its spectral projection onto $\C{H}_\ac$ is nontrivial.

In Section~\ref{sec:uc} we move to infinite trees that enjoy some form of periodicity. More specifically, we consider a finite graph $G$ endowed with some potential $W$, then consider the universal cover $\C{T}=\widetilde{G}$ endowed with the naturally lifted potential in Section~\ref{sec:ucper}. We proceed afterwards to replace such lifted potentials by random i.i.d.\ potentials in Section~\ref{sec:ucand}, which gives an Anderson model. For both models (periodic and random) we establish ballistic transport \emph{in mean}, that is, we consider $2\eta\int_0^\infty\eul^{-2\eta t}\norm{\abs{x}^{\beta/2}\eul^{-\ii tH}\psi}^2\,\dd t$ instead, with $\eta\downarrow 0$. The periodic result seems new, the random result appeared before in the special case of the $d$-regular tree in \cites{AiWa12,Kl96}.

In Section~\ref{sec:epi} we refine our results for the periodic models of Section~\ref{sec:periodic} by studying the limiting distribution of $X_t/t$, when $X_t$ has distribution $\abs{\eul^{-\ii tH}\psi(x)}^2$ and $\psi$ is normalized. Note that if $\nu_t^\psi$ is the distribution of $X_t/t$, then ballistic motion is merely the statement that the second moment $\lim\limits_{t\to+\infty}\expect_{\nu_t^\psi}(x^2)$ is $>0$. Here we actually compute the nontrivial limit $\nu_\infty^\psi$ of the measures $\nu_t^\psi$ as $t\to+\infty$. This gives a richer understanding of the transport theory; in particular, this gives an expression for the limit $\lim\limits_{t\to+\infty}\expect_{\nu_t^\psi}(f)=\expect_{\nu_\infty^\psi}(f)$ for any continuous $f$.

Appendix~\ref{sec:app} contains complete proofs of the ballistic upper bounds, which is well-known \cite{RaSi78} for $m=1,2$. We also prove some differentiability results for the moments in the Heisenberg picture, i.e., for the maps $t\mapsto\eul^{\ii tH}x^m\eul^{-\ii tH}$, which are needed in the text.

To conclude this introduction we refer to the beautiful paper \cite{Last96} for finer interplay between the spectral measures and quantum dynamics. We also mention the recent papers \cites{DGS21,KPS21} which prove ballistic transport for quasiperiodic Schr\"odinger operators, under certain conditions. 
\section{The free Laplacian}  \label{sec:laplacian}
\subsection{Discrete case}  \label{subsec:Zd}

We consider the integer lattice $\D{Z}^d$ and the space $\ell^2(\D{Z}^d)$ of square summable sequences $\psi\colon\D{Z}^d\to\D{C}$ equipped with the scalar product $\scal{\phi,\psi}\coloneqq\sum_{n\in\D{Z}^d}\overline{\phi(n)}\psi(n)$. On the other hand we consider the torus $\D{T}^d=\D{R}^d/(2\pi\D{Z})^d=[0,2\pi)^d$ and the space $L^2(\D{T}^d)$ of square integrable functions $f\colon\D{T}^d\to\D{C}$ equipped with the scalar product $\scal{f,g}\coloneqq\int_{\D{T}^d}\overline{f(\theta)}g(\theta)\dd\theta$. The Fourier transform $\four\colon L^2(\D{T}^d)\to\ell^2(\D{Z}^d)$ is defined by
\[
(\four f)(n)\equiv\hat f(n)\coloneqq\frac{1}{(\sqrt{2\pi})^d}\int_{\D{T}^d}f(\theta)\eul^{-\ii n\cdot\theta}\dd\theta=\scal{e_n,f},
\]
where $e_n(\theta)\coloneqq\frac{\eul^{\ii n\cdot\theta}}{(\sqrt{2\pi})^d}$. In particular, $\four e_n=\delta_n$ where $\accol{\delta_n}_{n\in\D{Z}^d}$ is the standard Hilbert basis of $\ell^2(\D{Z}^d)$. The inverse Fourier transform $\four^{-1}\colon\ell^2(\D{Z}^d)\to L^2(\D{T}^d)$ is given by
\[
\croch{\four^{-1}(\psi)}(\theta)=\hat\psi(\theta)=\frac{1}{(\sqrt{2\pi})^d}\sum_{n\in\D{Z}^d}\psi(n)\eul^{\ii n\cdot\theta}=\sum_{n\in\D{Z}^d}\psi(n)e_n(\theta).
\]

In this section we consider the adjacency matrix $\C{A}$ acting in $\ell^2(\D{Z}^d)$, that is,
\[
(\C{A}\psi)(n)=\sum_{w\sim n}\psi(w)=\sum_{j=1}^d\bigl(\psi(n-\F{e}_j)+\psi(n+\F{e}_j)\bigr)
\]
where $\accol{\F{e}_j}_{j=1}^d$ is the standard basis of $\D{Z}^d$ and $w\sim v$ means that the vertices $v,w\in\D{Z}^d$ are nearest neighbors. By Fourier transform $\psi(n-\F{e}_j)+\psi(n+\F{e}_j)$ becomes $(\eul^{\ii\theta_j}+\eul^{-\ii\theta_j})\hat\psi(\theta)$. So $\hat{\C{A}}\coloneqq\four^{-1}\C{A}\four$ is the operator of multiplication by $\phi(\vec{\theta})\coloneqq\sum_{j=1}^d2\cos\theta_j$ in $L^2(\D{T}^d)$ and the spectrum of $\C{A}$ is absolutely continuous: $\spec(\C{A})=\spec_\ac(\C{A})=\croch{-2d,2d}$.

Given $n\in\D{Z}^d$, $\psi\colon\D{Z}^d\to\D{C}$, and $m\in\D{N}$, let $n^m\psi\colon\D{Z}^d\to\D{C}^d$ denote the map defined by $n^m\psi(n)\coloneqq(n_1^m\psi(n),\dots,n_d^m\psi(n))$.

\begin{theorem}   \label{thm:freediscrete}
For any $\psi\in\ell^2(\D{Z}^d)$ with $\norm{n^m\psi}<\infty$, $m\in\D{N}$ we have
\begin{equation}  \label{eq:ballizd}
\lim_{t\to+\infty}\frac{\norm{n^m\eul^{\ii t\C{A}}\psi}^2}{t^{2m}}=d\binom{2m}{m}\norm{\psi}^2+\sum_{q=0}^{m-1}\binom{2m}{q}(-1)^{q+m}\scal{\psi,S_{2m-2q}\psi},
\end{equation} 
where $(S_k\psi)(n)\coloneqq\sum_{j=1}^d\bigl(\psi(n-k\F{e}_j)+\psi(n+k\F{e}_j)\bigr)$ is a $k$-step adjacency matrix. 

Moreover, this limit is $>0$ if $\psi\neq 0$ and we have the upper bound
\begin{equation}  \label{eq:Zd-upperbound}
\lim_{t\to+\infty}\frac{\norm{n^m\eul^{\ii t\C{A}}\psi}^2}{t^{2m}}\leq 4^md\norm{\psi}^2. 
\end{equation} 
\end{theorem}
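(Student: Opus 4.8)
The plan is to work entirely on the Fourier side, where $\C{A}$ becomes multiplication by $\phi(\BS{\theta})=\sum_{j=1}^d 2\cos\theta_j$ and where the position operator $n_j$ becomes $-\ii\partial_{\theta_j}$. Writing $g\coloneqq\four^{-1}\psi=\hat\psi\in L^2(\D{T}^d)$, we have $(\four^{-1}n_j^m\eul^{\ii t\C{A}}\psi)(\BS{\theta})=(-\ii\partial_{\theta_j})^m\bigl(\eul^{\ii t\phi(\BS{\theta})}g(\BS{\theta})\bigr)$, so that by Parseval
\[
\norm{n^m\eul^{\ii t\C{A}}\psi}^2=\sum_{j=1}^d\int_{\D{T}^d}\Bigl|\partial_{\theta_j}^m\bigl(\eul^{\ii t\phi}g\bigr)\Bigr|^2\dd\BS{\theta}.
\]
First I would expand $\partial_{\theta_j}^m(\eul^{\ii t\phi}g)$ via the Leibniz/Fa\`a di Bruno rule. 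Since $\partial_{\theta_j}\phi=-2\sin\theta_j$ has no $\theta_j$-dependence beyond the $j$-th variable and all its derivatives are bounded, the top-order term in $t$ is $(\ii t)^m(\partial_{\theta_j}\phi)^m\eul^{\ii t\phi}g=(\ii t)^m(-2\sin\theta_j)^m\eul^{\ii t\phi}g$, and every other term carries a strictly lower power of $t$. Hence
\[
\Bigl|\partial_{\theta_j}^m(\eul^{\ii t\phi}g)\Bigr|^2=t^{2m}(2\sin\theta_j)^{2m}\abs{g}^2+\ord(t^{2m-1}),
\]
uniformly in $\BS{\theta}$ (using $\norm{n^m\psi}<\infty$, equivalently $g\in H^m$, to control the lower-order integrals), so dividing by $t^{2m}$ and letting $t\to+\infty$ gives
\[
\lim_{t\to+\infty}\frac{\norm{n^m\eul^{\ii t\C{A}}\psi}^2}{t^{2m}}=\sum_{j=1}^d\int_{\D{T}^d}(2\sin\theta_j)^{2m}\abs{g(\BS{\theta})}^2\dd\BS{\theta}.
\]

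The remaining task is purely to identify this integral with the right-hand side of \eqref{eq:ballizd}. I would expand $(2\sin\theta_j)^{2m}=(-1)^m(\eul^{\ii\theta_j}-\eul^{-\ii\theta_j})^{2m}=(-1)^m\sum_{q=0}^{2m}\binom{2m}{q}(-1)^q\eul^{\ii(2m-2q)\theta_j}$ by the binomial theorem. Pairing opposite terms $q$ and $2m-q$ and using $\int_{\D{T}^d}\eul^{\ii k\theta_j}\abs{g}^2=\scal{g,\eul^{\ii k\theta_j}g}=\scal{\psi,T_k^{(j)}\psi}$, where $T_k^{(j)}$ is translation by $k\F{e}_j$ (i.e.\ $\sum_j(T_k^{(j)}+T_{-k}^{(j)})=S_k$), the $q=m$ term contributes $\binom{2m}{m}\norm{g}^2=\binom{2m}{m}\norm{\psi}^2$ for each $j$ — giving the $d\binom{2m}{m}\norm{\psi}^2$ — and the terms $q=0,\dots,m-1$ combine to $\sum_{q=0}^{m-1}\binom{2m}{q}(-1)^{q+m}\scal{\psi,S_{2m-2q}\psi}$. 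This is just careful bookkeeping of binomial coefficients and signs.

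For strict positivity: the limit equals $\sum_j\int(2\sin\theta_j)^{2m}\abs{g}^2$, and if this vanished then $g$ would be supported on the finite set where every $\sin\theta_j=0$, forcing $g=0$ in $L^2$, hence $\psi=0$; so the limit is $>0$ whenever $\psi\neq 0$. For the upper bound \eqref{eq:Zd-upperbound}, simply use $(2\sin\theta_j)^{2m}\leq 4^m$ pointwise, so $\sum_j\int(2\sin\theta_j)^{2m}\abs{g}^2\leq 4^m d\norm{g}^2=4^m d\norm{\psi}^2$.

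The main obstacle is the error-term control: one must check that every subleading term in the expansion of $\partial_{\theta_j}^m(\eul^{\ii t\phi}g)$ — these involve products of $t$-powers, derivatives of $\phi$, and derivatives of $g$ up to order $m$ — yields, after squaring and integrating, a contribution that is genuinely $\ord(t^{2m-1})$ and not merely $\ord(t^{2m})$. The cross term between the leading piece and the subleading pieces is $\ord(t^{2m-1})$ by Cauchy–Schwarz provided $g\in H^m$, which is exactly the hypothesis $\norm{n^m\psi}<\infty$; the purely subleading square is $\ord(t^{2m-2})$. Assembling this cleanly (rather than tracking each Fa\`a di Bruno term individually) is where the care is needed, but it is standard stationary-phase-free asymptotics since $\phi$ is a fixed smooth function on a compact torus.
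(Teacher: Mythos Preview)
Your proposal is correct and follows essentially the same route as the paper: pass to the Fourier side, expand $\partial_{\theta_j}^m(\eul^{\ii t\phi}\hat\psi)$ by Leibniz, isolate the leading term $(-2\ii t\sin\theta_j)^m\eul^{\ii t\phi}\hat\psi$, and then rewrite $\sum_j(2\sin\theta_j)^{2m}$ via the binomial expansion to recover the $S_k$ operators. The only cosmetic difference is that the paper derives the upper bound \eqref{eq:Zd-upperbound} from the formula \eqref{eq:ballizd} using $\norm{S_k}\leq 2d$ and a binomial identity, whereas you use the pointwise bound $(2\sin\theta_j)^{2m}\leq 4^m$ directly on the integral---both arguments are immediate.
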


This somehow suggests that $\frac{\eul^{-\ii t\C{A}}n^{2m}\eul^{\ii t\C{A}}}{t^{2m}}$ converges weakly to the bounded operator $d\binom{2m}{m}\id+\sum_{q=0}^{m-1}\binom{2m}{q}(-1)^{q+m}S_{2m-2q}$.

\begin{proof}
Denote $\psi_t\coloneqq\eul^{\ii t\C{A}}\psi$. We first prove that $\norm{n^m\psi}<\infty$ implies $\norm{n^m\psi_t}<\infty$ for any $t\geq 0$. By Fourier transform $\widehat{n^m\psi}=D^m\hat\psi$ where $D^m\coloneqq(-\ii)^m(\partial_{\theta_1}^m,\dots,\partial_{\theta_d}^m)$. So $\norm{n^m\psi}<\infty$ means $\hat\psi\in H^m(\D{T}^d)$. Moreover, since $\hat{\C{A}}$ is the operator of multiplication by $\phi(\vec{\theta})=\sum_{j=1}^d2\cos\theta_j$ it leaves invariant $H^m(\D{T}^d)$ and similarly for $\widehat{\eul^{\ii t\C{A}}}=\eul^{\ii t\hat{\C{A}}}$. So, $\hat\psi_t=\widehat{\eul^{\ii t\C{A}}}\hat\psi\in H^m(\D{T}^d)$ and $\widehat{n^m\psi_t}=D^m\hat\psi_t\in L^2(\D{T}^d)$ which means that $n^m\psi_t\in\ell^2(\D{Z}^d)$.

The limit in \eqref{eq:ballizd} is strictly positive if $\psi\neq 0$ as a consequence of \eqref{eq:recall} below. We then show how formula \eqref{eq:ballizd} implies the upper bound \eqref{eq:Zd-upperbound} which is a slight improvement over the general bound \eqref{eq:discretbound} of Appendix \ref{sec:app}. By using $\norm{S_k}\leq 2d$ to estimate the right-hand side of \eqref{eq:ballizd} we get
\[
\lim_{t\to+\infty}\frac{\norm{n^m\eul^{\ii t\C{A}}\psi}^2}{t^{2m}}\leq\biggl(d\binom{2m}{m}+2d\sum_{q=0}^{m-1}\binom{2m}{q}\biggr)\norm{\psi}^2=\biggl(d\sum_{q=0}^{2m}\binom{2m}{q}\biggr)\norm{\psi}^2=(1+1)^{2m}d\norm{\psi}^2.
\]

Now we give the proof of formula \eqref{eq:ballizd}. Let $\psi_t\coloneqq\eul^{\ii t\C{A}}\psi$ be as above. Since $\hat{\C{A}}$ is the operator of multiplication by $\phi(\theta)\coloneqq 2\sum_{j=1}^d\cos\theta_j$, we have
\[
\widehat{n^m\psi_t}=D^m\hat\psi_t=D^m(\eul^{\ii t\hat{\C{A}}}\hat\psi)=D^m(\eul^{\ii t\phi}\hat\psi).
\]
In particular,
\begin{equation}\label{eq:proof2nor}
\norm{n^m\eul^{\ii t\C{A}}\psi}_{\ell^2(\D{Z}^d)}=\norm{D^m(\eul^{\ii t\phi}\hat\psi)}_{L^2(\D{T}^d)}.
\end{equation}
By Leibniz,
\begin{equation}\label{eq:paref-1}
\partial_{\theta_j}^m\bigl(\eul^{\ii t\phi(\theta)}\hat\psi(\theta)\bigr)=\sum_{r=0}^m\binom{m}{r}\partial_{\theta_j}^r\eul^{\ii t\phi(\theta)}\partial_{\theta_j}^{m-r}\hat\psi(\theta).
\end{equation}
Clearly the leading term in $t$ is for $r=m$ and we have
\begin{equation}\label{eq:proof2main}
\partial_{\theta_j}^m\eul^{\ii t\phi(\theta)}=(-2\ii t\sin\theta_j)^m\eul^{\ii t\phi(\theta)}+\ord(t^{m-1})
\end{equation}
for some $\ord(t^{m-1})$ which can be made explicit. Since $\hat\psi\in H^m(\D{T}^d)$, we deduce from \eqref{eq:proof2nor}-\eqref{eq:proof2main} that
\begin{equation}\label{eq:recall}
\norm{n^m\eul^{\ii t\C{A}}\psi}^2=4^mt^{2m}\sum_{j=1}^d\norm{(\sin\theta_j)^m\hat\psi}^2+\ord(t^{2m-1}).
\end{equation}
Now we observe that 
\[
(\sin\theta)^{2m}=\frac{(\eul^{\ii\theta}-\eul^{-\ii\theta})^{2m}}{(2\ii)^{2m}}=\frac{(-1)^m}{4^m}\sum_{q=0}^{2m}\binom{2m}{q}\eul^{\ii q\theta}\eul^{-\ii\theta(2m-q)}(-1)^{2m-q}. 
\]
Using $(-1)^{2m-q}=(-1)^q$ and $\binom{2m}{2m-q}=\binom{2m}{q}$, we thus have
\[
4^m(\sin\theta)^{2m}=\binom{2m}{m}+\sum_{q=0}^{m-1}\binom{2m}{q}(-1)^{q+m}(2\cos(2m-2q)\theta)\,.
\]
Recall that $S_k\colon\ell^2(\D{Z}^d)\to\ell^2(\D{Z}^d)$ is defined by
\[
(S_k\psi)(n)\coloneqq\sum_{j=1}^d\bigl(\psi(n-k\F{e}_j)+\psi(n+k\F{e}_j)\bigr).
\]
By Fourier transform $\psi(n-k\F{e}_j)+\psi(n+k\F{e}_j)$ becomes $(\eul^{\ii k\theta_j}+\eul^{-\ii k\theta_j})\hat\psi$. So $\hat S_k\coloneqq\four^{-1}S_k\four$ is the operator of multiplication by the function $\phi_k$ in $L^2(\D{T}^d)$, with $\phi_k(\theta)\coloneqq\sum_{j=1}^d 2\cos k\theta_j$. It follows that
\begin{align*}
4^m\sum_{j=1}^d\norm{(\sin\theta_j)^m\hat\psi}^2 
&=\Bigl\langle\hat\psi,4^m\sum_{j=1}^d(\sin\theta_j)^{2m}\hat\psi\Bigr\rangle\\
&=d\binom{2m}{m}\norm{\hat\psi}^2+\sum_{q=0}^{m-1}\binom{2m}{q}(-1)^{q+m}\scal{\hat\psi,\phi_{2m-2q}\hat\psi}\\
&=d\binom{2m}{m}\norm{\hat\psi}^2+\sum_{q=0}^{m-1}\binom{2m}{q}(-1)^{q+m}\scal{\hat\psi,\widehat{S_{2m-2q}}\hat\psi}\\
&=d\binom{2m}{m}\norm{\psi}^2+\sum_{q=0}^{m-1}\binom{2m}{q}(-1)^{q+m}\scal{\psi,S_{2m-2q}\psi}.
\end{align*}
Recalling \eqref{eq:recall}, dividing by $t^{2m}$ and taking $t\to+\infty$, we obtain the statement.
\end{proof}

\begin{remark}  \label{rem:Zd}
For $m=1$ we get $\lim_{t\to+\infty}\frac{\norm{n\eul^{\ii t\C{A}}\psi}^2}{t^2}=2d\norm{\psi}^2-\scal{\psi,S_2\psi}$. In particular a bound like $\norm{n\eul^{\ii t\C{A}}\psi}\geq Ct\norm{\psi}$ cannot be valid for a positive constant $C$ independent of $\psi$. In fact, let $d=1$. Consider $\psi=(\psi_j)$ given by $\psi_1=\dots=\psi_n=\frac{1}{\sqrt{n}}$ and $\psi_j=0$ otherwise. Then $\norm{\psi}^2=1$ and $\sum_j\psi_j\overline{\psi_{j+2}}=\psi_1\overline{\psi_3}+\dots+\psi_{n-2}\overline{\psi_n}+\psi_{n-1}\overline{\psi_{n+1}}+\psi_n\overline{\psi_{n+2}}=\frac{1}{n}+\dots+\frac{1}{n}+0+0=\frac{n-2}{n}$. So in this case the limit is $2(1-\frac{n-2}{n})=\frac{4}{n}$, which can get arbitrarily small while conserving $\norm{\psi}=1$.
\end{remark}

\subsection{Continuous case}  \label{subsec:Rd}
Consider now the case of the Laplacian $\Delta$ on Euclidean space $\D{R}^d$. The space $L^2(\D{R}^d)$ is equipped with the scalar product $\scal{\phi,\psi}\coloneqq\int_{\D{R}^d}\overline{\phi(x)}\psi(x)\dd x$ and the Fourier transform $\four\colon L^2(\D{R}^d)\to L^2(\D{R}^d)$ is defined by
\[
\four(\psi)(y)\equiv\hat\psi(y)\coloneqq\frac{1}{(\sqrt{2\pi})^d}\int_{\D{R}^d}\psi(x)\eul^{-\ii y\cdot x}\dd x.
\]
For $x=(x_1,\dots,x_d)\in\D{R}^d$ we denote $x^k\coloneqq(x_1^k,\dots,x_d^k)$ and $x^k\psi(x)\coloneqq(x_1^k\psi(x),\dots,x_d^k\psi(x))$. Let also $D^k\coloneqq(-\ii)^k(\partial_{x_1}^k,\dots,\partial_{x_d}^k)$ and $\Delta\coloneqq\partial_{x_1}^2+\dots+\partial_{x_d}^2$.

\begin{theorem}   \label{thm:cts}
Assume $\psi\in L^2(\D{R}^d)$ satisfies $\sum_{k=0}^m\norm{x^{m-k}D^k\psi}^2<\infty$. Then
\[
\lim_{t\to+\infty}\frac{\norm{x^m\eul^{\ii t\Delta}\psi}^2}{t^{2m}}=4^m\norm{D^m\psi}^2.
\] 
\end{theorem}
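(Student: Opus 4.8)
The plan is to transplant the Fourier-space computation of Theorem~\ref{thm:freediscrete} to $L^2(\D{R}^d)$. Under the Fourier transform $\four$ the operator $\eul^{\ii t\Delta}$ becomes multiplication by $\eul^{-\ii t\abs{y}^2}$ (since $\widehat{\Delta\psi}(y)=-\abs{y}^2\hat\psi(y)$, with $\abs{y}^2=y_1^2+\dots+y_d^2$), while $x^m\psi$ becomes, up to a unimodular factor, $(\partial_{y_1}^m\hat\psi,\dots,\partial_{y_d}^m\hat\psi)$. Hence, by Plancherel,
\[
\norm{x^m\eul^{\ii t\Delta}\psi}^2=\sum_{j=1}^d\bigl\lVert\partial_{y_j}^m\bigl(\eul^{-\ii t\abs{y}^2}\hat\psi\bigr)\bigr\rVert_{L^2(\D{R}^d)}^2 .
\]
I would first record that the hypothesis $\sum_{k=0}^m\norm{x^{m-k}D^k\psi}^2<\infty$ means precisely $\norm{\partial_{y_j}^{m-k}(y_j^k\hat\psi)}<\infty$ for all $j$ and $0\le k\le m$, and that this implies — by elementary interpolation/majorization ($\abs{x_j}^c\lesssim 1+\abs{x_j}^{m}$ and its Fourier dual, together with expanding by Leibniz) — the finiteness of $\norm{y_j^a\partial_{y_j}^b\hat\psi}$ for every $j$ and every $a,b\ge 0$ with $a+b\le m$. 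This is the family of weighted norms that will control all the terms appearing below; in particular it already shows $x^m\eul^{\ii t\Delta}\psi\in L^2(\D{R}^d)$ for all $t$.

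Next I would apply the Leibniz rule to each summand, exactly as in \eqref{eq:paref-1}:
\[
\partial_{y_j}^m\bigl(\eul^{-\ii t\abs{y}^2}\hat\psi(y)\bigr)=\sum_{r=0}^m\binom{m}{r}\bigl(\partial_{y_j}^r\eul^{-\ii t\abs{y}^2}\bigr)\,\partial_{y_j}^{m-r}\hat\psi(y).
\]
From $\partial_{y_j}\eul^{-\ii t\abs{y}^2}=-2\ii t y_j\,\eul^{-\ii t\abs{y}^2}$ an immediate induction gives $\partial_{y_j}^r\eul^{-\ii t\abs{y}^2}=\eul^{-\ii t\abs{y}^2}P_r(t,y_j)$, with $P_r$ a polynomial of degree $r$ in $t$ (and in $y_j$) whose top-degree part is $(-2\ii t y_j)^r$. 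Thus the $r=m$ term dominates in $t$ and
\[
\partial_{y_j}^m\bigl(\eul^{-\ii t\abs{y}^2}\hat\psi\bigr)=(-2\ii t)^m y_j^m\,\eul^{-\ii t\abs{y}^2}\hat\psi+R_{t,j},
\]
where $R_{t,j}$ is a finite linear combination of functions $t^s\,y_j^a\,\eul^{-\ii t\abs{y}^2}\partial_{y_j}^{m-r}\hat\psi$ with $r\le m-1$, $s\le r$, and $a+(m-r)\le m$. By the previous paragraph each such function is in $L^2$, so $\norm{R_{t,j}}_{L^2}=\ord(t^{m-1})$, whereas the leading term has norm $2^m t^m\norm{y_j^m\hat\psi}<\infty$ (the $k=m$ case of the hypothesis).

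Finally, expanding the square and using $\norm{R_{t,j}}=\ord(t^{m-1})$ — so that the cross term is $\ord(t^m\cdot t^{m-1})$ and the remaining term is $\ord(t^{2m-2})$ — we obtain
\[
\bigl\lVert\partial_{y_j}^m\bigl(\eul^{-\ii t\abs{y}^2}\hat\psi\bigr)\bigr\rVert^2=4^m t^{2m}\norm{y_j^m\hat\psi}^2+\ord(t^{2m-1}).
\]
Summing over $j$ and using $\sum_{j=1}^d\norm{y_j^m\hat\psi}^2=\norm{D^m\psi}^2$ (again Plancherel, since $\widehat{D^m\psi}$ has $j$-th component $y_j^m\hat\psi$) gives $\norm{x^m\eul^{\ii t\Delta}\psi}^2=4^m t^{2m}\norm{D^m\psi}^2+\ord(t^{2m-1})$; dividing by $t^{2m}$ and letting $t\to+\infty$ yields the statement. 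The only slightly delicate step is the bookkeeping of the second paragraph — checking that the stated hypothesis on $\psi$ is exactly strong enough both to legitimize differentiation under $\four$ and to keep every error term $\ord(t^{2m-1})$ — but since the symbol $-\abs{y}^2$ of the Laplacian is a polynomial, this is routine and no genuine analytic obstacle appears.
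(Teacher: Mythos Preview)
Your proof is correct, but it takes a genuinely different route from the paper's. The paper does not work on the Fourier side; instead it uses the explicit propagator formula
\[
\eul^{\ii t\Delta}\psi(x)=\frac{\eul^{\ii x\cdot x/4t}}{(2\ii t)^{d/2}}\,\hat\phi_t\Bigl(\frac{x}{2t}\Bigr),\qquad \phi_t(y)\coloneqq\eul^{\ii y\cdot y/4t}\psi(y),
\]
which by scaling gives immediately $\norm{x^m\eul^{\ii t\Delta}\psi}=(2t)^m\norm{D^m\phi_t}$, so the exact factor $4^m t^{2m}$ drops out with no bookkeeping, and only the qualitative statement $\norm{D^m\phi_t}\to\norm{D^m\psi}$ (via dominated convergence on the Leibniz expansion of $\partial_{y_j}^m\phi_t$) remains. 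Your approach instead transplants the discrete argument of Theorem~\ref{thm:freediscrete} verbatim, expanding $\partial_{y_j}^m(\eul^{-\ii t|y|^2}\hat\psi)$ and isolating the top-order term; this buys you an explicit $\ord(t^{2m-1})$ remainder but costs more algebra. One small technical asymmetry worth noting: the paper's error terms are position-space quantities $y_j^{m-k-2r}\partial_{y_j}^k\psi$, which are controlled \emph{directly} by the hypothesis $\norm{x^{m-k}D^k\psi}<\infty$, whereas your error terms live on the Fourier side as $y_j^a\partial_{y_j}^b\hat\psi$, i.e.\ $D_j^a x_j^b\psi$ with the \emph{opposite} operator ordering, so you (correctly) need the extra commutator/interpolation step in your second paragraph to close the estimate.
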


Let $\vec{m}=(m_1,\dots,m_d)$, $m\coloneqq\max\accol{m_1,\dots,m_d}$, and $x^{\vec{m}}\coloneqq(x_1^{m_1},\dots,x_d^{m_d})$. The proof shows more generally that
\[
\lim_{t\to+\infty}\frac{\norm{x^{\vec{m}}\eul^{\ii t\Delta}\psi}^2}{t^{2m}}=4^m\sum_{\substack{1\leq j\leq d\\m_j=m}}\norm{\partial_{x_j}^m\psi}^2.
\]

\begin{proof}
Using the Fourier transform one sees that
\[
\eul^{\ii t\Delta}\psi(x)=\frac{\eul^{\ii x\cdot x/4t}}{(4\pi\ii t)^{d/2}}\int_{\D{R}^d}\eul^{\ii y\cdot y/4t}\psi(y)\eul^{-\ii x\cdot y/2t}\dd y,
\]
see, e.g., \cite{Te14}*{Section 7.4, (7.43)}. Denoting $\phi_t(y)\coloneqq\eul^{\ii y\cdot y/4t}\psi(y)$, this can be written as
\begin{equation}\label{eq:semicts}
\eul^{\ii t\Delta}\psi(x)=\frac{\eul^{\ii x\cdot x/4t}}{(2\ii t)^{d/2}}\,\hat\phi_t\Bigl(\frac{x}{2t}\Bigr).
\end{equation}
Thus,
\begin{align*}
\norm{x^m\eul^{\ii t\Delta}\psi}
&=\frac{1}{(2t)^{d/2}}\Bigl\lVert x^m\hat\phi_t\Bigl(\frac{\cdot}{2t}\Bigr)\Bigr\rVert
=(2t)^m\frac{1}{(2t)^{d/2}}\Bigl\lVert\Bigl(\frac{x}{2t}\Bigr)^m\hat\phi_t\Bigl(\frac{\cdot}{2t}\Bigr)\Bigr\rVert\\
&=(2t)^m\norm{x^m\hat\phi_t}
=(2t)^m\norm{\widehat{D^m\phi_t}}=(2t)^m\norm{D^m\phi_t}.
\end{align*}
By Leibniz formula,
\begin{equation}  \label{eq:leibniz}
\partial_{y_j}^m\phi_t(y)=\sum_{k=0}^m \binom{m}{k} \partial_{y_j}^{m-k}\eul^{\ii y\cdot y/4t}\partial_{y_j}^k\psi(y)=\eul^{\ii y\cdot y/4t}\partial_{y_j}^m\psi(y)+\ord_{\psi}(t^{-1}).
\end{equation}
The error term $\ord_{\psi}(t^{-1})$ depends on $\psi$ and its derivatives. It can be made explicit as follows. The Fa\`a di Bruno formula implies that
\[
\partial_{y_j}^p\eul^{\ii y\cdot y/4t}=\eul^{\ii y\cdot y/4t}\sum_{r=0}^{\lfloor p/2\rfloor}\frac{p!}{(p-2r)!r!}\frac{\ii^{p-r}}{2^pt^{p-r}}y_j^{p-2r}. 
\]
Thus,
\[
\ord_\psi(t^{-1})=\eul^{\ii y\cdot y/4t}\sum_{k=0}^{m-1}\binom{m}{k}\sum_{r=0}^{\lfloor(m-k)/2\rfloor}\frac{(m-k)!}{(m-k-2r)!r!}\frac{\ii^{m-k-r}}{2^{m-k}t^{m-k-r}}y_j^{m-k-2r}\partial_{y_j}^k\psi(y).
\]
By \eqref{eq:leibniz} and for any $x$, $\abs{\partial_{x_j}^m\phi_t(x)}\to\abs{\partial_{x_j}^m\psi(x)}$ as $t\to+\infty$. The explicit form of $\ord_\psi(t^{-1})$ and our assumption $\sum_{k=0}^m\norm{x^{m-k}D^k\psi}^2<\infty$ allow to use dominated convergence to conclude that $\norm{D^m\phi_t}^2=\sum_{j=1}^d\norm{\partial_{x_j}^m\phi_t}^2\to\sum_{j=1}^d\norm{\partial_{y_j}^m\psi}^2=\norm{D^m\psi}^2$.
\end{proof}

\begin{remark}
The regularity of $\psi$ above is important. In fact, as shown in \cite{RaSi78}*{p.~294} for $d=1$, the function $\psi(x)=\sqrt{2}\abs{x}\eul^{-\abs{x}}$ satisfies $\norm{x^2\eul^{\ii\Delta t}\psi}^2=\scal{\eul^{\ii\Delta t}\psi,x^4\eul^{\ii\Delta t}\psi}=+\infty$ for any $t>0$. Note that here $D^2\psi$ does not exist.
\end{remark}

\section{Periodic operators in Euclidean space} \label{sec:periodic}
\subsection{Continuous case}  \label{subsec:ak}

Consider $H=H_0+V=-\Delta+V$ on $L^2(\D{R}^d)$ with $V$ a periodic potential. We assume $V\in C^{m-1}(\D{R}^d)$ and its partial derivatives of order $<m$ are bounded. Denote as above 
\[
D\coloneqq -\ii\nabla_{x}=-\ii(\partial_{x_1},\dots,\partial_{x_d}).
\]
Recall the direct integral notations. If $(M,\dd m)$ is some measure space and $\F{h}$ a Hilbert space, its direct integral over $M$ is the Hilbert space $\CS{H}$ defined by
\begin{equation}\label{eq:dirinth}
\CS{H}=\int_M^\oplus\F{h}\,\dd m\coloneqq L^2(M,\dd m;\F{h})\,,\qquad\scal{f,g}_{\CS{H}}\coloneqq\int_M\scal{f_m,g_m}_{\F{h}}\,\dd m\,,
\end{equation}
where $f=(f_m)_{m\in M}$ and $g=(g_m)_{m\in M}$ with $f_m,g_m\in\F{h}$. Moreover, a measurable family $(A(m))_{m\in M}$ of operators in $\F{h}$ defines an operator $A$ in $\CS{H}$ by
\begin{equation}\label{eq:dirinta}
A=\int_M^\oplus A(m)\dd m,\qquad (Af)_m=A(m)f_m.
\end{equation}

For definiteness we assume the potential is $(2\pi\D{Z})^d$-periodic. Let $\D{T}^d\coloneqq\D{R}^d/(2\pi\D{Z})^d=[0,2\pi)^d$ and $\D{T}_\ast^d\coloneqq\D{R}^d/\D{Z}^d=[0,1)^d$. We define an operator 
\begin{align*}
U\colon L^2(\D{R}^d)
&\to\int_{\D{T}_\ast^d}^{\oplus}L^2(\D{T}^d)\dd\theta,\\
\psi&\mapsto\bigl((U\psi)_{\theta}\bigr)_{\theta\in\D{T}_\ast^d}
\end{align*}
by setting
\begin{equation}  \label{eq:upsi}
(U\psi)_{\theta}(x)\coloneqq\sum_{n\in\D{Z}^d}\eul^{-\ii\theta\cdot(x+2\pi n)}\psi(x+2\pi n)
\end{equation}
for each $\theta\in\D{T}_\ast^d$ and $x\in\D{T}^d\subset\D{R}^d$. So we indeed have $(U\psi)_{\theta}\in L^2(\D{T}^d)$. This operator $U$ is unitary and we have (see proof below)
\begin{equation}       \label{eq:(1)}
UHU^{-1}=\int_{\D{T}_\ast^d}^{\oplus}H(\theta)\dd\theta
\end{equation}
where $H(\theta)=(D+\theta)\cdot(D+\theta)+V$ on $L^2(\D{T}^d)$ with form domain $H^1(\D{T}^d)$. This means $H(\theta)=\sum_{j=1}^d(-\ii\partial_{x_j}+\theta_j)^2+V$. We prefer to denote $D\cdot D$ rather than $D^2$ to avoid confusing with the notation $D^m=(-\ii)^m(\partial_{x_1}^m,\dots,\partial_{x_d}^m)$ from the previous section.

Here $H(\theta)$ has compact resolvent and
\[
H(\theta)=\sum_{n=1}^{\infty}E_n(\theta)P_n(\theta),
\]
where $E_n(\theta)$ are its eigenvalues in non-decreasing order and $P_n(\theta)$ are the corresponding eigenprojections. Moreover, for every $n$, the following set has full Lebesgue measure in $\D{T}_\ast^d$:
\begin{equation}\label{eq:(4)}
S_n\coloneqq\accol{\theta\in\D{T}_\ast^d: P_n(\theta)\text{ and }E_n(\theta) \text{ are smooth at }\theta,\text{ and }\nabla_{\theta}E_n(\theta)\neq 0}.
\end{equation}
These facts are standard, see \cite{Wi78}*{Theorems 1 and 2} or \cite{Be82}*{Proposition 10.11}. For the last assertion, note that $E_n(\theta)$ is non-constant and analytic outset a nullset \cite{Wi78}, so all its partial derivatives are analytic, and the zero set of each $\partial_{\theta_j}E_n(\theta)$ has measure zero \cite{Ku16}*{Lemma 5.22}. 

\begin{proof}[Proof of \eqref{eq:(1)}]
For future use we prove \eqref{eq:(1)}. We have for $\phi\in\CS{S}(\D{R}^d)$,
\begin{align} \label{eq:encoreuneeq}
\biggl(\biggl(\int_{\D{T}_\ast^d}^{\oplus}D\dd\theta\biggr)U\phi\biggr)_{\theta}(x)
&\coloneqq D(U\phi)_{\theta}(x)\notag\\
&=\sum_{n\in\D{Z}^d}\accol{D\eul^{-\ii\theta\cdot(x+2\pi n)}\phi(x+2\pi n)+\eul^{-\ii\theta\cdot(x+2\pi n)}D\phi(x+2\pi n)}\notag\\
&=-\theta(U\phi)_{\theta}(x)+(UD\phi)_{\theta}(x),
\end{align}
Relation \eqref{eq:encoreuneeq} means
\begin{equation}\label{eq:ud}
(UD\phi)_{\theta}=(D+\theta)(U\phi)_{\theta},
\end{equation}
that is, $-\ii(U\partial_{x_j}\phi)_{\theta}=(-\ii\partial_{x_j}+\theta_j)(U\phi)_\theta$ for $j=1,\dots,d$. So by iterating this relation we get $-(U\partial_{x_j}^2\phi)_\theta=(-\ii\partial_{x_j}+\theta_j)^2(U\phi)_{\theta}$. We thus find for $H=-\Delta+V$, using periodicity of $V$, that  
\begin{equation}\label{eq:uh}
[UH\phi]_{\theta}=[U(-\Delta\phi)+U(V\phi)]_{\theta}=\sum_{j=1}^d(-\ii\partial_{x_j}+\theta_j)^2(U\phi)_{\theta}+V(U\phi)_{\theta}=H(\theta)(U\phi)_{\theta},
\end{equation}
with $H(\theta)=\sum_{j=1}^d(-\ii\partial_{x_j}+\theta_j)^2+V$, which is \eqref{eq:(1)}.
\end{proof}

In the following, for $\psi\in L^2(\D{R}^d)$, $x\in\D{R}^d$ and $f\in L^2(\D{T}_\ast^d)$, $\theta\in\D{T}_\ast^d$,
\[
x^m\psi(x)\coloneqq(x_1^m\psi(x),\dots,x_d^m\psi(x))\text{ and }(\nabla_{\theta}f)^m\coloneqq((\partial_{\theta_1}f)^m,\dots,(\partial_{\theta_d}f)^m).
\]

\begin{theorem}  \label{thm:contper}
Assume $\psi\in H^{2m}(\D{R}^d)$, $\psi\neq 0$ satisfies $\norm{x^m\psi}<\infty$. Then
\[
\lim_{t\to+\infty}\frac{\norm{x^m\,\eul^{-\ii tH}\psi}^2}{t^{2m}}= \int_{\D{T}_\ast^d}\sum_{n=1}^\infty\abs{(\nabla_{\theta}E_n(\theta))^m}^2\norm{P_n(\theta)(U\psi)_{\theta}}^2_{L^2(\D{T}^d)}\dd\theta>0,
\]
where $\abs{(\nabla_{\theta}E_n(\theta))^m}^2=\sum_{j=1}^d (\partial_{\theta_j}E_n(\theta))^{2m}$.
\end{theorem}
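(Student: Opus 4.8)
The natural approach is to transfer the problem to the fiber decomposition \eqref{eq:(1)} via the unitary $U$, following the strategy of \cite{AsKn98}, and push it through to all orders $m$. The starting point is the fact that $U$ intertwines $H$ with $\int^\oplus H(\theta)\,\dd\theta$, hence $U\eul^{-\ii tH}U^{-1}=\int^\oplus\eul^{-\ii tH(\theta)}\,\dd\theta$. The key observation is the commutator identity coming from \eqref{eq:ud}: differentiating the fiber relation in $\theta$ relates the multiplication operator $x_j$ on $L^2(\D{R}^d)$ to $\ii\,\partial_{\theta_j}$ acting on the fibers (up to lower-order terms involving $D+\theta$). Concretely, for nice $\psi$ one expects a formula of the shape $(U(x_j\psi))_\theta = \ii\,\partial_{\theta_j}(U\psi)_\theta + (\text{bounded in }\theta)$, and iterating gives $U(x_j^m\psi)_\theta = (\ii\partial_{\theta_j})^m (U\psi)_\theta + (\text{lower order})$. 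Thus $\norm{x^m\eul^{-\ii tH}\psi}^2 = \sum_{j=1}^d \int_{\D{T}_\ast^d}\norm{(\ii\partial_{\theta_j})^m\bigl(\eul^{-\ii tH(\theta)}(U\psi)_\theta\bigr)}^2_{L^2(\D{T}^d)}\,\dd\theta + (\text{lower order in }t)$.

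Next I would extract the leading power of $t$. On the full-measure set $S_n$ of \eqref{eq:(4)}, where $E_n$ and $P_n$ are smooth and $\nabla_\theta E_n\neq 0$, I expand $\eul^{-\ii tH(\theta)}(U\psi)_\theta = \sum_n \eul^{-\ii tE_n(\theta)}P_n(\theta)(U\psi)_\theta$ and apply $\partial_{\theta_j}^m$. By Leibniz (in the Banach-space-valued sense, as in the proof of Theorem \ref{thm:freediscrete}), the dominant contribution is the term where all $m$ derivatives hit the oscillating phase $\eul^{-\ii tE_n(\theta)}$, producing $(-\ii t\,\partial_{\theta_j}E_n(\theta))^m\eul^{-\ii tE_n(\theta)}P_n(\theta)(U\psi)_\theta$, while every other term carries at most $t^{m-1}$. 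Taking norms, using that cross terms between distinct $n,n'$ oscillate (or that $P_n P_{n'}=\delta_{nn'}P_n$ in the relevant pairing; one has to be slightly careful since the $P_n(\theta)$ are $\theta$-dependent, but the leading term is diagonal in $n$ after squaring), we get
\[
\norm{x^m\eul^{-\ii tH}\psi}^2 = t^{2m}\int_{\D{T}_\ast^d}\sum_{n=1}^\infty \sum_{j=1}^d (\partial_{\theta_j}E_n(\theta))^{2m}\,\norm{P_n(\theta)(U\psi)_\theta}^2_{L^2(\D{T}^d)}\,\dd\theta + \ord(t^{2m-1}),
\]
which after dividing by $t^{2m}$ gives the claimed limit. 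The strict positivity is then immediate: the integrand is nonnegative, and if it vanished a.e.\ then $\nabla_\theta E_n(\theta)=0$ wherever $P_n(\theta)(U\psi)_\theta\neq 0$; since each $\nabla_\theta E_n$ vanishes only on a nullset (by the argument below \eqref{eq:(4)}), this forces $P_n(\theta)(U\psi)_\theta = 0$ for a.e.\ $\theta$ and all $n$, i.e.\ $(U\psi)_\theta=0$ a.e., i.e.\ $\psi=0$.

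The main obstacle is controlling the error terms rigorously, which requires knowing that the subleading pieces are genuinely $\ord(t^{2m-1})$ in $L^2$-norm. This needs two ingredients: first, a justification that the commutator formula $x_j \leftrightarrow \ii\partial_{\theta_j}$ holds with controlled remainders, and that the function $\theta\mapsto(U\psi)_\theta$ lies in the right Sobolev-type space in $\theta$ — this is where the hypotheses $\psi\in H^{2m}(\D{R}^d)$ (to control derivatives, since $H(\theta)$ involves second-order operators and iterating $2m$ times the relation with $H$ costs regularity) and $\norm{x^m\psi}<\infty$ (to ensure $\partial_\theta^m(U\psi)_\theta\in L^2$) enter; second, uniform-in-$t$ bounds so that dominated convergence applies to the $\theta$-integral, which is where one invokes that $\|\partial_{\theta_j}^r\eul^{-\ii tH(\theta)}\|$ grows like $t^r$ with $\theta$-integrable coefficients built from the (smooth, hence bounded on compacts) band functions $E_n$ and their derivatives. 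One also must handle the sum over bands $n$: the trace-class/summability estimate $\sum_n\norm{P_n(\theta)(U\psi)_\theta}^2 = \norm{(U\psi)_\theta}^2$ makes the leading sum finite after bounding $|\partial_{\theta_j}E_n(\theta)|\le C$ uniformly (the band derivatives are bounded because $V$ and its derivatives up to order $m-1$ are bounded), but making the interchange of limit, sum and integral fully rigorous is the technical heart of the argument. I would organize the proof in steps mirroring Theorem \ref{thm:freediscrete}: (1) invariance of the domain under the dynamics; (2) the fiber/commutator reduction; (3) the Leibniz expansion and identification of the leading term; (4) error control via dominated convergence and band summability; (5) positivity.
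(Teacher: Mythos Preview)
Your overall strategy matches the paper's, and the identity $(Ux_j^m\phi)_\theta = (\ii\partial_{\theta_j})^m(U\phi)_\theta$ is in fact \emph{exact} (no lower-order remainder---differentiate \eqref{eq:upsi} directly). The Leibniz extraction of the leading term and the positivity argument are also correct.

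The genuine gap is in your handling of the infinite band sum. You claim the leading sum is controlled by ``bounding $|\partial_{\theta_j}E_n(\theta)|\le C$ uniformly'', attributing this to boundedness of $V$ and its derivatives. This is false: already for $V=0$ one has $E_k(\theta)=\sum_j(k_j+\theta_j)^2$ and $\partial_{\theta_j}E_k(\theta)=2(k_j+\theta_j)$, which is unbounded in $k$. What holds instead is the scale-relative bound $|\nabla_\theta E_n(\theta)|\le c\,|E_n(\theta)+\ii|^{1/2}$, which the paper proves via the Feynman--Hellmann identity $(\nabla_\theta E_n)P_n = 2P_n(D+\theta)P_n$. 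Without a uniform-in-$n$ bound you cannot differentiate $\sum_n \eul^{-\ii tE_n(\theta)}P_n(\theta)(U\psi)_\theta$ term by term, nor control the subleading Leibniz terms (which involve $\nabla_\theta^r P_n$, also unbounded in $n$). The paper's remedy---the idea missing from your plan---is to first spectrally truncate to $\psi_N = U^{-1}\bigl(\int^\oplus\sum_{n\le N}P_n(\theta)\,\dd\theta\bigr)U\psi$, so that your Leibniz argument becomes rigorous for $\psi_N$, and then pass $N\to\infty$. This last step uses the upper bound of Theorem~\ref{thm:momconts} to control $\norm{x^m(t)(\psi-\psi_N)}/t^m$, together with the operator bound $\norm{A\phi}\le c^m\norm{(H+\ii)^{m/2}\phi}$ on the limiting operator (proved via the identity above). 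This is where $\psi\in H^{2m}$ is actually consumed: it ensures $\norm{H^k(\psi-\psi_N)}\to 0$ for all $k\le m$.
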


This theorem was previously obtained for $m=1$ by J.~Asch and A.~Knauf in \cite{AsKn98}. In contrast to \cite{AsKn98} we will not consider temporal means of moment derivatives in the proof, as their rigorous manipulation becomes overly complicated for higher $m$. 

The extension to potentials periodic with respect to a different lattice than $(2\pi\D{Z})^d$ is quite immediate once the lattice vocabulary has been settled, we prefer to postpone this to the next subsection where it is a less standard material.

\begin{proof}
The proof is divided into three steps. Step 2 is in some sense the main argument, which is quite simple, while Steps 1 and 3 are technical justifications. In Step 1 we show it suffices to prove relation \eqref{eq:mainthingu} below. In Step 2 we prove this relation for $\psi$ in some spectral subspace, then we consider general $\psi$ in Step 3.

\begin{stepone*}
For a linear operator $O$, define the propagator sandwich
\begin{equation}\label{eq:heisen}
O(t)\coloneqq\eul^{\ii tH}O\eul^{-\ii tH}
\end{equation}
suggested by the Heisenberg picture. Below we take for $O$ the operator of multiplication by $x^m$. Since $\norm{x^m\psi}<\infty$, we have $x^m\eul^{-\ii tH}\psi\in L^2(\D{R}^d)$ for all $t\geq 0$ by Theorem~\ref{thm:momconts}, so we may apply $U$ to it. We claim Theorem~\ref{thm:contper} will follow if we prove the strong convergence
\begin{equation} \label{eq:mainthingu}
\lim_{t\to+\infty}\frac{Ux^m(t)\psi}{t^m}=\Bigl(\int_{\D{T}_\ast^d}^\oplus\sum_{n=1}^\infty(\nabla_{\theta}E_n(\theta))^mP_n(\theta)\dd\theta\Bigr)U\psi,
\end{equation}
for since $\eul^{\ii tH}$ and $U$ are unitary, we then get by definitions \eqref{eq:dirinth} and \eqref{eq:dirinta} that
\begin{align*}
\frac{\norm{x^m\eul^{-\ii tH}\psi}^2}{t^{2m}}=\frac{\norm{Ux^m(t)\psi}^2}{t^{2m}}
&\longrightarrow\biggl\lVert\Bigl(\int_{\D{T}_\ast^d}^\oplus\sum_{n=1}^\infty(\nabla_{\theta}E_n(\theta))^mP_n(\theta)\dd\theta\Bigr)U\psi\biggr\rVert^2\\
&\qquad=\int_{\D{T}_\ast^d}\Bigl\lVert\sum_{n=1}^\infty (\nabla_{\theta}E_n(\theta))^mP_n(\theta)(U\psi)_{\theta}\Bigl\lVert^2_{L^2(\D{T}^d)}\dd\theta\\
&\qquad=\int_{\D{T}_\ast^d}\sum_{n=1}^\infty\abs{(\nabla_{\theta}E_n(\theta))^m}^2\norm{P_n(\theta)(U\psi)_{\theta}}^2_{L^2(\D{T}^d)}\dd\theta
\end{align*}
since the $P_n(\theta)$ are orthogonal projections. 

This is indeed $>0$: if $S\subset\D{T}_\ast^d$ is the intersection over $n$ of all sets $S_n$ considered in \eqref{eq:(4)}, each of them being of full measure, $S$ also has full measure, and restricting the integral over $S$, we see that it is zero iff
\[
\sum_{n=1}^\infty|(\nabla_{\theta}E_n(\theta))^m|^2\norm{P_n(\theta)(U\psi)_{\theta}}_{L^2(\D{T}^d)}^2=0\quad\text{for a.e.\ }\theta\in\D{T}_\ast^d,
\]
which occurs on $S$ iff $\norm{P_n(\theta)(U\psi)_{\theta}}_{L^2(\D{T}^d)}^2=0$ for all $n$, a.e. $\theta\in\D{T}_\ast^d$, and so $U\psi=0$ and $\psi=0$, in contradiction with our assumption on $\psi$.
\end{stepone*}

\begin{steptwo*}
Now by \eqref{eq:upsi},
\[
(Ux^m\phi)_{\theta}(x)=\sum_{n\in\D{Z}^d}\eul^{-\ii\theta\cdot(x+2\pi n)}(x+2\pi n)^m\phi(x+2\pi n)=\ii^m\nabla_{\theta}^m(U\phi)_{\theta}(x),
\]
where $\nabla_{\theta}^mf=(\partial_{\theta_1}^mf,\dots,\partial_{\theta_d}^m)f$. Recalling \eqref{eq:uh}, we thus have
\[
(Ux^m(t)\psi)_{\theta}=\ii^m\eul^{\ii tH(\theta)}\nabla_{\theta}^m\eul^{-\ii tH(\theta)}(U\psi)_{\theta},
\]
\emph{Assume} for now that $\psi$ satisfies $(U\psi)_{\theta}=\sum_{n=1}^NP_n(\theta)(U\psi)_{\theta}$ for some $N$ independent of $\theta$, i.e.,
\begin{equation} \label{eq:assump}
P_n(\theta)(U\psi)_{\theta}=0\text{ for all }n>N.
\end{equation}
Then, denoting $\nabla_\theta^pf\nabla_\theta^qg=(\partial_{\theta_1}^pf\partial_{\theta_1}^qg,\dots,\partial_{\theta_d}^pf\partial_{\theta_d}^qg)$, we have
\begin{align*}
\nabla_{\theta}^m \eul^{-\ii tH(\theta)}(U\psi)_{\theta} 
&= \sum_{n=1}^N\nabla_{\theta}^m[\eul^{-\ii tE_n(\theta)}P_n(\theta)(U\psi)_{\theta}] \\
&= \sum_{n=1}^N\sum_{p=0}^m\binom{m}{p} (\nabla_{\theta}^p\eul^{-\ii tE_n(\theta)})\nabla_{\theta}^{m-p}[P_n(\theta)(U\psi)_{\theta}] \\
&= \sum_{n=1}^N \eul^{-\ii tE_n(\theta)}(-\ii t \nabla_{\theta}E_n(\theta))^mP_n(\theta)(U\psi)_{\theta}+\ord_{n,\theta}(t^{m-1})
\end{align*}
for some error term $\ord_{n,\theta}(t^{m-1})$ which can be made explicit. By dominated convergence, 
\begin{equation}\label{eq:detail1}
\lim_{t\to+\infty}\int_{\D{T}_\ast^d}\Bigl\lVert\frac{1}{t^m}\ii^m\eul^{\ii tH(\theta)}\sum_{n=1}^N\ord_{n,\theta}(t^{m-1})\Bigr\rVert^2\dd\theta=0.
\end{equation}
Indeed, the error term consists of derivatives of $E_n(\theta)$, $P_n(\theta)$ and $(U\psi)_{\theta}$ of order $\leq m$, which are all bounded. Recall that $\nabla_{\theta}^k(U\psi)_{\theta}(x)=(-\ii)^k(Ux^k\psi)_{\theta}$ and the $E_n(\theta)$ and $P_n(\theta)$ are analytic in each coordinate $\theta_j$ by \cite{Kato}*{Theorem VII.3.9}. See also \cite{Be82}*{Lemmas 10.12 and 10.13}.

On the other hand,
\begin{align*}
&\ii^m\eul^{\ii tH(\theta)}\sum_{n=1}^N (-\ii t\nabla_{\theta}E_n(\theta))^m\eul^{-\ii tE_n(\theta)}P_n(\theta)(U\psi)_{\theta}\\
&\qquad\qquad=t^m\sum_{k=1}^\infty \eul^{\ii tE_k(\theta)}P_k(\theta)\sum_{n=1}^N(\nabla_{\theta}E_n(\theta))^m\eul^{-\ii tE_n(\theta)}P_n(\theta)(U\psi)_{\theta}\\
&\qquad\qquad =t^m\sum_{k=1}^\infty(\nabla_{\theta}E_k(\theta))^mP_k(\theta)(U\psi)_{\theta}
\end{align*}
since $P_k(\theta)P_n(\theta)=\delta_{k,n}P_k(\theta)$ and $P_k(\theta)(U\psi)_{\theta}=0$ for $k>N$. We thus showed that 
\begin{equation}\label{eq:detail2}
\lim_{t\to+\infty}\int_{\D{T}_\ast^d}\Bigl\lVert\frac{(Ux^m(t)\psi)_{\theta}}{t^m}-\sum_{k=1}^\infty(\nabla_{\theta}E_k(\theta))^mP_k(\theta)(U\psi)_{\theta}\Bigr\rVert^2\dd\theta=0\,,
\end{equation}
completing the proof in the case where $\psi$ satisfies \eqref{eq:assump} for some $N$.
\end{steptwo*}
\begin{stepthree*}
It remains to justify why we can reduce the general case to the particular case where $\psi$ satisfies \eqref{eq:assump}, i.e., $P_k(\theta)(U\psi)_{\theta}=0$ for all $k>N$. Let 
\[
\psi_N\coloneqq U^{-1}\biggl(\int_{\D{T}_\ast^d}^\oplus\sum_{n=1}^N P_n(\theta)\dd\theta\biggr)U\psi\quad\text{and}\quad A\psi\coloneqq  U^{-1}\biggl(\int_{\D{T}_\ast^d}^\oplus\sum_{n=1}^\infty(\nabla_{\theta} E_n(\theta))^m P_n(\theta)\dd\theta\biggr)U\psi.
\]
Then
\[
\Bigl\lVert\frac{x^m(t)\psi}{t^m}-A\psi\Bigr\rVert\leq \Bigl\lVert\frac{x^m(t)\psi_N}{t^m}-A\psi_N\Bigr\rVert+ \Bigl\lVert\frac{x^m(t)\psi}{t^m}-\frac{x^m(t)\psi_N}{t^m}\Bigr\rVert+\norm{A(\psi_N-\psi)}.
\]
We showed in Step 2 that, $N$ being fixed, the first norm vanishes as $t\to+\infty$. If we show that the two other norms vanish as $t\to+\infty$ followed by $N\to\infty$, then the proof will be complete.

By Theorem~\ref{thm:momconts} we know that
\[
\frac{\norm{x^m(t)(\psi_N-\psi)}}{t^m}\leq c_m\frac{\norm{x^m(\psi_N-\psi)}}{t^m}+C_m\sum_{k=0}^m\norm{H^k(\psi_N-\psi)}.
\]
$N$ being fixed, the first term in the right-hand side vanishes as $t\to+\infty$. For $\psi$, by assumption, we indeed have $x^m\psi\in L^2(\D{R}^d)$, and for $\psi_N$:
\begin{align*}
\norm{x^m\psi_N}^2 
&=\int_{\D{T}_\ast^d}\Bigl\lVert\nabla_{\theta}^m\sum_{n=1}^NP_n(\theta)(U\psi)_{\theta}\Bigr\rVert^2\dd\theta\\
&=\int_{\D{T}_\ast^d}\Bigl\lVert\sum_{n=1}^N\sum_{p=0}^m\binom{m}{p}(\nabla_{\theta}^{m-p}P_n(\theta))(Ux^p\psi)_{\theta}\Bigr\rVert^2\dd\theta,
\end{align*}
which is finite for any $N$. Next, by definition,
\[
(U\psi_N)_{\theta}-(U\psi)_{\theta}=\sum_{n=1}^NP_n(\theta)(U\psi)_{\theta}-(U\psi)_{\theta}=-\sum_{n=N+1}^\infty P_n(\theta)(U\psi)_{\theta}.
\]
Using \eqref{eq:uh}, i.e., $(UH^k\phi)_\theta=H^k(\theta)(U\phi)_\theta$, we thus have
\[
\norm{H^k(\psi_N-\psi)}^2=\norm{UH^k(\psi_N-\psi)}^2=\int_{\D{T}_\ast^d}\Bigl\lVert H(\theta)^k\sum_{n=N+1}^\infty P_n(\theta)(U\psi)_{\theta}\Bigr\rVert_{L^2(\D{T}^d)}^2\,\dd\theta. 
\]
But, for any $k\leq m$, 
\[
\sum_{n=1}^\infty\norm{P_n(\theta)H(\theta)^k(U\psi)_{\theta}}^2_{L^2(\D{T}^d)}=\norm{H(\theta)^k(U\psi)_{\theta}}_{L^2(\D{T}^d)}^2\leq c''\norm{(U\psi)_{\theta}}_{H^{2r}(\D{T}^d)}^2<\infty.
\]
Hence, $\norm{H^k(\psi_N-\psi)}^2$ vanishes as $N\to\infty$ by dominated convergence. This completes the proof that the second norm $\bigl\lVert\frac{x^m(t)(\psi-\psi_N)}{t^m}\bigr\rVert$ vanishes as $t\to+\infty$ followed by $N\to\infty$.

Finally 
\begin{align*}
\norm{A\phi}^2
&= \int_{\D{T}_\ast^d}\sum_{n=1}^\infty\big\lVert(\nabla_{\theta} E_n(\theta))^m P_n(\theta)(U\phi)_{\theta}\bigr\rVert^2_{L^2(\D{T}^d)}\dd\theta\\
&= \int_{\D{T}_\ast^d}\sum_{n=1}^\infty\bigl\lVert(\nabla_{\theta}E_n(\theta))^mP_n(\theta)(H(\theta)+\ii)^{-m/2}P_n(\theta)(U(H+\ii)^{m/2}\phi)_{\theta}\bigr\rVert^2_{L^2(\D{T}^d)}\dd\theta\,,
\end{align*}
where we used \eqref{eq:uh} and the fact that $H(\theta)$ commutes with $P_n(\theta)$. This can now be bounded by
\begin{equation}\label{eq:backnow}
\int_{\D{T}_\ast^d}\sum_{n=1}^\infty\norm{(\nabla_{\theta}E_n(\theta))^mP_n(\theta)(H(\theta)+\ii)^{-m/2}}_{L^2\to L^2}^2\norm{P_n(\theta)(U(H+\ii)^{m/2}\phi)_{\theta}}^2_{L^2(\D{T}^d)}\dd\theta\,.
\end{equation}
Now
\begin{align*}
(\nabla_{\theta}E_n(\theta))^mP_n(\theta)(H(\theta)+\ii)^{-m/2} 
&= (\nabla_{\theta}E_n(\theta))^mP_n(\theta)\sum_{k=1}^{\infty}(E_k(\theta)+\ii)^{-m/2}P_k(\theta)\\
&= \frac{(\nabla_{\theta}E_n(\theta))^m}{(E_n(\theta)+\ii)^{m/2}}P_n(\theta).
\end{align*}
So for $g\in L^2(\D{T}^d)$ we have
\begin{equation}\label{eq:magiceid}
\norm{(\nabla_{\theta}E_n(\theta))^mP_n(\theta)(H(\theta)+\ii)^{-m/2}g}=\Bigl\lvert\frac{(\nabla_{\theta}E_n(\theta))}{(E_n(\theta)+\ii)^{1/2}}\Bigr\rvert^m\norm{P_n(\theta)g}.
\end{equation}
It thus suffices to estimate this norm for $m=1$. For this, we observe that
\begin{equation}\label{eq:nabd}
(\nabla_{\theta}E_n(\theta))P_n(\theta)=2P_n(\theta)(D+\theta)P_n(\theta).
\end{equation}
This is based on $E_n(\theta)P_n(\theta)= H(\theta)P_n(\theta)$ and $\nabla_{\theta}H(\theta)=\nabla_{\theta}((D+\theta)^2+V)=2(D+\theta)$. More precisely, on the one hand, we have
\begin{align*}
\nabla_{\theta}(P_nH(\theta)P_n) 
&= (\nabla_{\theta}P_n)H(\theta)P_n + P_n(\nabla_{\theta}H(\theta))P_n + P_nH(\theta)\nabla_{\theta}P_n \\
&= P_n(\nabla_{\theta}H(\theta))P_n + E_n\nabla_{\theta}(P_nP_n)=P_n(\nabla_{\theta}H(\theta))P_n + E_n\nabla_{\theta}P_n.
\end{align*}
On the other hand,
\[
\nabla_{\theta}(P_nH(\theta)P_n)=\nabla_{\theta}(E_nP_n)=(\nabla_{\theta}E_n)P_n + E_n\nabla_{\theta}P_n\,.
\]
Thus, $(\nabla_{\theta}E_n)P_n=P_n(\nabla_{\theta}H(\theta))P_n=2P_n(D+\theta)P_n$ as stated\footnote{The fact that $\nabla_{\theta}H(\theta)g=2(D+\theta)g$ for any $g\in L^2(\D{T}^d)$ is clear by definition of the derivative. Computing $\nabla_{\theta}\eul^{-\ii tH(\theta)}$ however is less clear. This is why we used the spectral decomposition of $H(\theta)$ in Step 2 to estimate it.} in \eqref{eq:nabd}.

We next note that, since $V\geq -b$ for $b=\norm{V}_\infty$, then
\[
\norm{(D+\theta)f}^2=\scal{f,(D+\theta)^2f}=\scal{f,H(\theta)f}-\scal{f, Vf}\leq\norm{(H(\theta)+b)^{1/2}f}^2.
\]
We thus have, using \eqref{eq:nabd},
\begin{align*}
\norm{(\nabla_{\theta}E_n(\theta))P_n(\theta)(H(\theta)+\ii)^{-1/2}g}
&= 2\norm{P_n(\theta)(D+\theta)P_n(\theta)(H(\theta)+\ii)^{-1/2}g} \\
&\leq 2\norm{(H(\theta)+b)^{1/2}(H(\theta)+\ii)^{-1/2}P_n(\theta)g}\\  &\leq c\norm{P_n(\theta)g}_{L^2}
\end{align*}
with $c$ independent of $\theta$. Recalling \eqref{eq:magiceid}, this says that $\big|\frac{\nabla_{\theta}E_n(\theta)}{(E_n(\theta)+\ii)^{1/2}}\big|\leq c$. By \eqref{eq:magiceid}, this implies $\norm{(\nabla_{\theta}E_n(\theta))^mP_n(\theta)(H(\theta)+\ii)^{-m/2}}_{L^2\to L^2}\leq c^m$.

Back to \eqref{eq:backnow}, we thus showed that
\begin{align*}
\norm{A\phi}^2
&\leq c^{2m}\int_{\D{T}_\ast^d}\sum_{n=1}^\infty\norm{P_n(\theta)(U(H+\ii)^{m/2}\phi)_{\theta}}^2_{L^2(\D{T}^d)}\dd\theta\\
&=c^{2m}\int_{\D{T}_\ast^d}\norm{(U(H+\ii)^{m/2}\phi)_{\theta}}^2_{L^2(\D{T}^d)}\dd\theta\\
&= c^{2m}\norm{(H+\ii)^{m/2}\phi}^2_{L^2(\D{R}^d)}.
\end{align*}

Thus, $\norm{A(\psi_N-\psi)}\leq c^m\scal{\psi-\psi_N,(H+\ii)^m(\psi-\psi_N)}^{1/2}$. Since we previously showed that $H^k(\psi_N-\psi)\to 0$ for any $k\leq m$, we get that the third norm $\norm{A(\psi_N-\psi)}$ vanishes as $N\to\infty$. This completes the proof.\qedhere
\end{stepthree*}
\end{proof}

\begin{example}[the laplacian]\label{exa:Vnull}
For $V=0$, i.e., $H=-\Delta$ we have $E_{k}(\theta)=\sum_{j=1}^d(k_j+\theta_j)^2$, corresponding to the eigenvectors $e_{k}(x)=\frac{1}{(2\pi)^{d/2}}\eul^{\ii k\cdot x}$, $k\in\D{Z}^d$, and $P_{k}(\theta)f=\scal{e_{k},f} e_{k}$. So the limit reduces to
\begin{align*}
\int_{\D{T}_\ast^d}\sum_{k\in\D{Z}^d}\abs{2^m(k+\theta)^m}^2\abs{\scal{e_{k},(U\psi)_{\theta}}}^2\dd\theta
&= 4^m\int_{\D{T}_\ast^d}\sum_{k\in\D{Z}^d}|\scal{ e_{k},(D+\theta)^m(U\psi)_{\theta}}|^2\dd\theta\\
&= 4^m\int_{\D{T}_\ast^d}\norm{(UD^m\psi)_{\theta}}^2_{L^2(\D{T}^d)}\dd\theta\\
&= 4^m\norm{D^m\psi}^2\,.
\end{align*}
This agrees with Theorem~\ref{thm:cts} where this special case was already considered.
\end{example}

\subsection{Periodic discrete graphs}\label{subsec:perdiscrete}
We now consider periodic Schr\"odinger operators on discrete graphs $\Gamma\subset\D{R}^d$. 
\subsubsection{Preliminaries}
The spectral theory has been considered in great generality in \cite{KorSa14} (and earlier in \cite{Kru11} in the special case of $\D{Z}^d$). The infinite graph $\Gamma= (V,E)$ is assumed to be locally finite and invariant under translations by some fixed basis $\accol{\F{a}_1,\dots,\F{a}_d}$ in $\D{R}^d$. This includes the hexagonal lattice, the triangular lattice, the body-centered and face-centered cubic lattices, and much more. The authors in \cite{KorSa14} take the convention of writing the vertices in this coordinate system. This shortens the notation compared to say \cite{ReSi78}*{pp.~303--309} but it can also be quite confusing and nonstandard. As a compromise we introduce notations as follows.
\begin{itemize}
\item 
Given $x=(x_1,\dots,x_d)\in\D{R}^d$, we let $x_{\F{a}}=x_1\F{a}_1+ \dots+x_d\F{a}_d$.
\item 
Let $\D{Z}_{\F{a}}^d \coloneqq\accol{n_1\F{a}_1+\dots+n_d\F{a}_d:n_j\in\D{Z}}=\accol{n_{\F{a}}:n\in \D{Z}^d}$.
\item 
Let $\F{b}_1,\dots,\F{b}_d$ be the \emph{dual basis}, that is, $\F{a}_i\cdot\F{b}_j=2\pi\delta_{i,j}$. We similarly denote $x_{\F{b}}=x_1\F{b}_1 + \dots + x_d\F{b}_d$.
\item 
Let $\C{C}_{\F{a}}=\accol{x_1\F{a}_1+\dots+x_d\F{a}_d:x_j\in [0,1)}$ and $\C{C}_{\F{b}}=\{x_1\F{b}_1+\dots+x_d\F{b}_d:x_j\in [0,1)\}$ be the unit cells in the respective bases.
\item 
Let $\D{T}_\ast^d\coloneqq[0,1)^d$. Consider $\rho\colon\D{T}_\ast^d \to \C{C}_{\F{b}}$, $\rho(\theta)=\theta_{\F{b}}=\sum_{i=1}^d \theta_i \F{b}_i$. We endow $\C{C}_{\F{b}}$ with the image measure $\dd\rho_\star$ of the Lebesgue measure. In particular, $\int_{\C{C}_{\F{b}}} f\,\dd\rho_\star=\int_{\D{T}_\ast^d} f(\theta_{\F{b}})\,\dd\theta$. 
\end{itemize}

For example, if $\F{a}_i=\F{e}_i$ is the standard basis, then $\F{b}_i =2\pi\F{e}_i$, $x_{\F{a}}=x$, $\theta_{\F{b}}=2\pi\theta$, $\C{C}_{\F{a}}=[0,1)^d$, $\C{C}_{\F{b}}=[0,2\pi)^d$ and $\dd\rho_\star=\frac{\dd\theta}{(2\pi)^d}$. In general, if $\F{b}_i=\sum_{j=1}^d\beta_j(i)\F{e}_j$, then $\rho(\theta)=B \theta$, where $B$ is the matrix $(\beta_i(j))_{i,j=1}^d$ and we get $\dd\rho_\star=\frac{\dd\theta}{|\det B|}$, see e.g. \cite{Te14}*{Appendix A.7}.

We let $V_f=V\cap\C{C}_{\F{a}}$ be the unit crystal, $V_f=\accol{v_1,\dots,v_\nu}$. Since $\Gamma + \F{a}_k=\Gamma$, we have
\begin{equation}\label{eq:V=Z+Vf}
V=\D{Z}_{\F{a}}^d + V_f.
\end{equation}

At the level of edges, such additive invariance implies that $\C{N}_{n_{\F{a}}+v_n}=n_{\F{a}} + \C{N}_{v_n}$, i.e., $\accol{u: u\sim n_{\F{a}} + v_n}=\accol{n_{\F{a}}+v: v\sim v_n}$.

Recall \eqref{eq:V=Z+Vf}. If $v\in V$, we shall denote
\[
v=\lfloor v\rfloor_{\F{a}} + \accol{v}_{\F{a}}\,,
\]
where $\lfloor v\rfloor_{\F{a}}\in\D{Z}_{\F{a}}^d$ and $\accol{v}_{\F{a}}\in V_f$ are the integer and fractional parts of $v$, respectively.
\subsubsection{The Schr\"odinger operator}
We now consider the Schr\"odinger operator 
\[
H=\C{A}+Q
\]
on $\Gamma$, where $Q$ is a periodic potential satisfying $Q(v+\F{a}_i)=Q(v)$. In particular, $Q$ has at most $\nu$ distinct values; those on $V_f$. One could also consider $\Delta + Q$, where $\Delta=d - \C{A}$ and $d$ is the degree matrix; this is in fact the framework in \cite{KorSa14} and $d$ can just be regarded as a periodic potential.

We define $U\colon\ell^2(\Gamma)\to \int_{\C{C}_{\F{b}}}^\oplus \ell^2(V_f)\,\dd\rho_\star$ by
\begin{equation}\label{eq:utile}
(U\psi)_{\theta_{\F{b}}}(v_n)=\sum_{k_{\F{a}}\in \D{Z}_{\F{a}}^d} \eul^{-\ii\theta_{\F{b}} \cdot (v_n+k_{\F{a}})} \psi(v_n+k_{\F{a}})\,.
\end{equation}
Then $U$ is unitary:
\begin{align*}
\norm{U\psi}^2 
&=\int_{\C{C}_{\F{b}}} \norm{(U\psi)_{\theta_{\F{b}}}}^2_{\ell^2(V_f)}\,\dd\rho_\star\\
&=\sum_{v_n\in V_f}\sum_{k_{\F{a}},n_{\F{a}}\in\D{Z}_{\F{a}}^d} \psi(v_n+k_{\F{a}})\overline{\psi(v_n+n_{\F{a}})}\int_{\D{T}_\ast^d} \eul^{\ii\theta_{\F{b}}\cdot (n_{\F{a}}-k_{\F{a}})}\,\dd\theta\\
&=\sum_{v_n\in V_f}\sum_{k_{\F{a}}\in\D{Z}_{\F{a}}^d}|\psi(k_{\F{a}}+v_n)|^2=\norm{\psi}^2,
\end{align*}
where we used that $\theta_{\F{b}}\cdot x_{\F{a}} =\sum_{i,j} \theta_i \F{b}_i\cdot x_j \F{a}_j=2\pi\theta\cdot x$. This shows isometry. Given $g\in \int_{\C{C}_{\F{b}}}^\oplus \ell^2(V_f)\,\dd\rho_\star$, the function $\psi(v_n+k_{\F{a}})=\int_{\D{T}_\ast^d} \eul^{\ii\phi_{\F{b}}\cdot v_n}g_{\phi_{\F{b}}}(v_n)\eul^{\ii k_{\F{a}}\cdot\phi_{\F{b}}}\,\dd\phi$ satisfies $(U\psi)_{\theta_{\F{b}}}(v_n)=g_{\F{\theta_b}}(v_n)$. This can be seen by expanding $\theta\mapsto \eul^{\ii\theta_{\F{b}}\cdot v_n}g_{\theta_{\F{b}}}(v_n)$ in the orthonormal basis $\{\eul^{-\ii\theta_{\F{b}}\cdot k_{\F{a}}}\}_{k_{\F{a}}\in\D{Z}_{\F{a}}^d}$ of $L^2(\D{T}_\ast^d)$. Thus, $U$ is unitary as asserted.

We claim that $UHU^{-1}=\int_{\C{C}_{\F{b}}}^\oplus H(\theta_{\F{b}})\dd \rho_\star$, where $H(\theta_{\F{b}})$ acts on $\ell^2(V_f)$ by\footnote{Our operators $U$ and $H(\theta_{\F{b}})$ differ slightly from those of \cite{KorSa14}. Namely, they consider $(\widetilde{U}\psi)_{\theta_{\F{b}}}(v_n)=\eul^{\ii\theta_{\F{b}}\cdot v_n}(U\psi)_{\theta_{\F{b}}}(v_n)$, so they obtain instead the fiber operator $\widetilde{H}(\theta_{\F{b}})=\eul^{\ii\theta_{\F{b}}\cdot}H(\theta_{\F{b}})\eul^{-\ii\theta_{\F{b}}\cdot}$, where $(\eul^{\pm\ii\theta_{\F{b}}\cdot}f)(v_k)=\eul^{\pm\ii\theta_{\F{b}}\cdot v_k}f(v_k)$. Since $H(\theta_{\F{b}})$ and $\widetilde{H}(\theta_{\F{b}})$ are unitarily equivalent, they share the same eigenvalues $E_n(\theta_{\F{b}})$, moreover $\widetilde{P}_n(\theta_{\F{b}})=\eul^{\ii\theta_{\F{b}}\cdot}P_n(\theta_{\F{b}})\eul^{-\ii\theta_{\F{b}}\cdot}$. We have avoided the introduction of ``bridges'' and quotient graphs and used fractional parts instead, which we think is more transparent for our purposes.}
\begin{equation}\label{eq:hfib}
[H(\theta_{\F{b}})f](v_n)=\sum_{u\sim v_n}\eul^{\ii\theta_{\F{b}}\cdot (u-v_n)} f(\{u\}_{\F{a}})+Q(v_n)f(v_n),
\end{equation}
where the sum runs over the neighbors $u$ of $v_n$ \emph{in the full graph} $\Gamma$ (not just in $V_f$).

In fact, given $\psi\in \ell^2(\Gamma)$,
\begin{align*}
(U\C{A} \psi)_{\theta_{\F{b}}}(v_n) &= \sum_{k_{\F{a}}\in\D{Z}_{\F{a}}^d} \eul^{-\ii\theta_{\F{b}}\cdot(v_n+k_{\F{a}})} \sum_{u\sim k_{\F{a}}+v_n}\psi(u)=\sum_{k_{\F{a}}\in \D{Z}_{\F{a}}^d}\eul^{-\ii\theta_{\F{b}}\cdot(v_n+k_{\F{a}})}\sum_{u\sim v_n} \psi(k_{\F{a}}+u)\\
&= \sum_{u\sim v_n} \eul^{\ii\theta_{\F{b}}\cdot(u-v_n)} \sum_{k_{\F{a}}\in\D{Z}_{\F{a}}^d}\eul^{-\ii\theta_{\F{b}}\cdot (\{u\}_{\F{a}}+\lfloor u\rfloor_{\F{a}}+k_{\F{a}})}\psi(\{u\}_{\F{a}} + \lfloor u\rfloor_{\F{a}} + k_{\F{a}}) \\
&= \sum_{u\sim v_n} \eul^{\ii\theta_{\F{b}} \cdot(u-v_n)}  (U\psi)_{\theta_{\F{b}}}(\{u\}_{\F{a}}).
\end{align*}

Since $Q$ is $\D{Z}_{\F{a}}^d$-periodic, if follows that $(UH\psi)_{\theta_{\F{b}}}(v_n) =H(\theta_{\F{b}})(U\psi)_{\theta_{\F{b}}}(v_n)$ as asserted.

For the purposes of ballistic transport, we regard $H(\theta_{\F{b}})$ as a complex $\nu\times\nu$ matrix (indeed $\ell^2(V_f)\equiv\D{C}^\nu$) and observe that $\theta_j\mapsto H(\theta_{\F{b}})$ is analytic for any fixed $(\theta_k)_{k\neq j}$. We may write $H(\theta_{\F{b}})=\sum_{n=1}^{\nu} E_n(\theta_{\F{b}})P_n(\theta_{\F{b}})$, where $E_n(\theta_{\F{b}})$ are the eigenvalues (in increasing order, counting multiplicities) and $P_n(\theta_{\F{b}})$ the eigenprojections.

The authors in \cite{KorSa14} then use Floquet theory to deduce that the spectrum of $H$ consists of at most $\nu$ bands (intervals), $\spec(H) =\spec_{\ac}(H)\cup\spec_{\fb}(H)$, where $\spec_{\ac}(H)$ are the bands of AC spectrum, while $\spec_{\fb}(H)$ is the set of flat bands, i.e., intervals degenerate to a single point, which can occur in general. For example, for the Laplacian there are no flat bands in the case of the triangular lattice, while there is a flat band in the case of the face-centered cubic lattice \cite{KorSa14}*{Proposition 8.3}.

\begin{theorem}\label{thm:dicreper}
Let $H=\C{A}+Q$, with $Q$ periodic on the periodic discrete infinite graph $\Gamma$. Assume $\psi\in\ell^2(\Gamma)$ satisfies $\norm{x^m\psi}<\infty$. Then
\begin{equation}\label{eq:dicreper}
\lim_{t\to+\infty}\frac{\norm{x^m\,\eul^{-\ii tH}\psi}^2}{t^{2m}}=\int_{\D{T}_\ast^d}\sum_{n=1}^\nu\Bigl\lVert\Bigl(\frac{\nabla_{\theta_{\F{a}}}E_n(\theta_{\F{b}})}{2\pi}\Bigr)^mP_n(\theta_{\F{b}})(U\psi)_{\theta_{\F{b}}}\Bigr\rVert_{\D{C}^\nu}^2\dd\theta,
\end{equation}
where $\nabla_{\theta_{\F{a}}}\coloneqq\sum_{i=1}^d\F{a}_i\partial_{\theta_i}$. The limit is nonzero if $\psi$ has a nontrivial AC component. The limit is zero if $\psi$ is an eigenvector.
\end{theorem}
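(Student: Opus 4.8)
The plan is to mimic the three-step structure of the proof of Theorem~\ref{thm:contper}, with the simplification that the fibers $\ell^2(V_f)\cong\D{C}^\nu$ are finite-dimensional, so the sum over $n$ runs only from $1$ to $\nu$ and no truncation $\psi_N$ is needed. First I would observe, as in Step~1 there, that it suffices to prove the strong convergence
\begin{equation}\label{eq:plan-strong}
\lim_{t\to+\infty}\frac{Ux^m(t)\psi}{t^m}=\Bigl(\int_{\C{C}_{\F{b}}}^\oplus\sum_{n=1}^\nu\Bigl(\frac{\nabla_{\theta_{\F{a}}}E_n(\theta_{\F{b}})}{2\pi}\Bigr)^mP_n(\theta_{\F{b}})\,\dd\rho_\star\Bigr)U\psi ,
\end{equation}
where $x^m(t)=\eul^{\ii tH}x^m\eul^{-\ii tH}$; taking norms, using unitarity of $U$ and $\eul^{-\ii tH}$ and orthogonality of the $P_n(\theta_{\F{b}})$, yields \eqref{eq:dicreper}. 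Here one needs $x^m\eul^{-\ii tH}\psi\in\ell^2(\Gamma)$ for all $t$, which follows from the discrete ballistic bound in Appendix~\ref{sec:app} (or directly from $\norm{x^m\psi}<\infty$ and $\norm{H}<\infty$ since $H$ is bounded on the locally finite periodic graph).

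The heart of the argument is Step~2. The key computation is that, analogously to $(Ux^m\phi)_\theta=\ii^m\nabla_\theta^m(U\phi)_\theta$ in the continuous case, differentiating \eqref{eq:utile} in $\theta$ brings down factors of $(v_n+k_{\F{a}})$; more precisely, applying $\nabla_{\theta_{\F{a}}}=\sum_i\F{a}_i\partial_{\theta_i}$ and using $\theta_{\F{b}}\cdot x_{\F{a}}=2\pi\theta\cdot x$ one gets $\partial_{\theta_i}\eul^{-\ii\theta_{\F{b}}\cdot(v_n+k_{\F{a}})}=-\ii\,\F{b}_i\cdot(v_n+k_{\F{a}})\,\eul^{-\ii\theta_{\F{b}}\cdot(v_n+k_{\F{a}})}$, so that the relevant monomial operator on the fiber side corresponds to $\frac{1}{(2\pi)^m}\nabla_{\theta_{\F{a}}}^m$ acting coordinatewise — this is where the $2\pi$ in the statement comes from. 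Thus $(Ux^m(t)\psi)_{\theta_{\F{b}}}=\frac{\ii^m}{(2\pi)^m}\eul^{\ii tH(\theta_{\F{b}})}\nabla_{\theta_{\F{a}}}^m\eul^{-\ii tH(\theta_{\F{b}})}(U\psi)_{\theta_{\F{b}}}$. Now since $\ell^2(V_f)$ is finite-dimensional we may write $\eul^{-\ii tH(\theta_{\F{b}})}(U\psi)_{\theta_{\F{b}}}=\sum_{n=1}^\nu\eul^{-\ii tE_n(\theta_{\F{b}})}P_n(\theta_{\F{b}})(U\psi)_{\theta_{\F{b}}}$ and apply the Leibniz rule for $\nabla_{\theta_{\F{a}}}^m$: the leading term in $t$ is $\sum_n\eul^{-\ii tE_n}(-\ii t\,\nabla_{\theta_{\F{a}}}E_n)^mP_n(U\psi)_{\theta_{\F{b}}}$, all other terms being $\ord_{\theta_{\F{b}}}(t^{m-1})$ with coefficients involving at most $m$ derivatives of $E_n(\theta_{\F{b}})$ and $P_n(\theta_{\F{b}})$, which are analytic in each $\theta_i$ on the torus hence bounded (the analyticity being the finite-matrix version of \cite{Kato}*{Theorem VII.3.9}). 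Multiplying by $\frac{\ii^m}{(2\pi)^m t^m}\eul^{\ii tH(\theta_{\F{b}})}$, using $P_k(\theta_{\F{b}})P_n(\theta_{\F{b}})=\delta_{k,n}P_k(\theta_{\F{b}})$, the leading term becomes exactly $\sum_{n=1}^\nu\bigl(\frac{\nabla_{\theta_{\F{a}}}E_n(\theta_{\F{b}})}{2\pi}\bigr)^mP_n(\theta_{\F{b}})(U\psi)_{\theta_{\F{b}}}$, while the error contributes $\ord_{\theta_{\F{b}}}(t^{-1})$ pointwise and, by the boundedness just noted together with $\norm{U\psi}<\infty$ and $\norm{x^j\psi}<\infty$ for $j\le m$, also in $\int_{\C{C}_{\F{b}}}\norm{\cdot}^2\dd\rho_\star$ by dominated convergence. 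This gives \eqref{eq:plan-strong} directly, with no Step~3 needed.

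Finally I would address the two sign claims. If $\psi$ is an eigenvector of $H$, say $H\psi=E\psi$, then $\eul^{-\ii tH}\psi=\eul^{-\ii tE}\psi$, so $\norm{x^m\eul^{-\ii tH}\psi}=\norm{x^m\psi}$ is bounded in $t$ and the limit divided by $t^{2m}$ is $0$. For the positivity claim: the flat-band part of $\psi$ lies in the span of eigenvectors (spectrally localized), so if $\psi$ has nontrivial AC component $\psi_{\ac}$, write $\psi=\psi_{\ac}+\psi_{\fb}$; since these lie in orthogonal reducing subspaces and $\psi_{\fb}$ contributes $0$ to the limit by the eigenvector computation (and cross terms vanish), it suffices to show the right-hand side of \eqref{eq:dicreper} is positive for $\psi_{\ac}\neq0$. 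On the set where $E_n(\theta_{\F{b}})$ is not locally constant, $\nabla_{\theta_{\F{a}}}E_n(\theta_{\F{b}})\neq0$ off a nullset — this is again the argument via \cite{Ku16}*{Lemma 5.22} applied to the analytic (non-constant) branches, exactly as in \eqref{eq:(4)} — and $(U\psi_{\ac})_{\theta_{\F{b}}}$ has nonzero projection onto the corresponding $P_n(\theta_{\F{b}})$ on a set of positive measure, by definition of the AC subspace in the Floquet decomposition of \cite{KorSa14}; hence the integrand in \eqref{eq:dicreper} is positive on a set of positive measure. The main obstacle is the clean bookkeeping of the $\nabla_{\theta_{\F{a}}}$ versus $\nabla_\theta$ normalization (the factor $2\pi$) and checking that the Leibniz error terms are genuinely dominated uniformly in $\theta$; this is routine given the finite-dimensionality of the fibers and the analyticity of $E_n,P_n$, so no serious difficulty is expected.
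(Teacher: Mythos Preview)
Your proposal is correct and follows essentially the same route as the paper's proof: the same reduction to the strong convergence \eqref{eq:dislim}, the same Fourier/Leibniz computation on the fibers giving \eqref{eq:disfouper}--\eqref{eq:disna}, and the same observation that no Step~3 truncation is needed since the sums over $n$ are finite. The only minor difference is in the positivity claim, where the paper argues directly by contradiction via $\B{1}_B(H)\psi\neq 0$ for $B$ inside a non-degenerate band rather than decomposing $\psi=\psi_{\ac}+\psi_{\fb}$; your decomposition also works, but the ``cross terms vanish'' step is cleanest argued on the right-hand side of \eqref{eq:dicreper} (where flat-band contributions drop because $\nabla_{\theta_{\F{a}}}E_r\equiv 0$, so $P_n(\theta_{\F{b}})(U\psi)_{\theta_{\F{b}}}$ may be replaced by $P_n(\theta_{\F{b}})(U\psi_{\ac})_{\theta_{\F{b}}}$ for the surviving $n$) rather than on the left-hand side, where you would otherwise need $\norm{x^m\psi_{\fb}}<\infty$, which is not immediate from $\norm{x^m\psi}<\infty$.
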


The quantity $2\pi$ did not appear in Theorem~\ref{thm:contper} because we considered $\nabla_{\theta}=\sum_{i=1}^d \F{e}_i\partial_{\theta_i}$, while in that result $\F{a}_i=2\pi \F{e}_i$, so $\nabla_{\theta_{\F{a}}}=2\pi \nabla_{\theta}$. There, $\theta_{\F{b}}=\theta$.

It can indeed happen that $\psi$ is an eigenvector (corresponding to a flat band, as in face-centered cubic lattices). On the other hand, $H$ always has at least one band of AC spectrum, see \cite{KorSa18}*{Theorem 2.1}.

\begin{proof}
We have $x^m(t)\psi\in\ell^2(\Gamma)$ by Theorem~\ref{thm:formel}. As in Theorem~\ref{thm:contper}, it suffices to show that
\begin{equation}\label{eq:dislim}
\lim_{t\to+\infty}\frac{Ux^m(t)\psi}{t^m}=\Bigl(\int_{\D{T}_\ast^d}^\oplus\sum_{n=1}^\nu\Bigl(\frac{\nabla_{\theta_{\F{a}}}E_n(\theta_{\F{b}})}{2\pi}\Bigr)^mP_n(\theta_{\F{b}})\dd\theta\Bigr)U\psi.
\end{equation}
Recall \eqref{eq:utile}. We note that if $v_n=\sum_{j=1}^d s_j \F{a}_j$, $s_j\in [0,1)$, then $\theta_{\F{b}}\cdot (k_{\F{a}}+v_n)=\sum_{i,j=1}^d \theta_i\F{b}_i \cdot(k_j+s_j)\F{a}_j=2\pi \theta \cdot (k+s)$. Hence, $\partial_{\theta_i} \eul^{-\ii\theta_{\F{b}}\cdot(k_{\F{a}}+v_n)}=-2\pi\ii(k_i+s_i)\eul^{-\ii\theta_{\F{b}}\cdot(k_{\F{a}}+v_n)}$ and $\nabla_{\theta_{\F{a}}} \eul^{-\ii\theta_{\F{b}}\cdot(k_{\F{a}}+v_n)}=-2\pi\ii(k_{\F{a}}+v_n)\eul^{-\ii\theta_{\F{b}}\cdot(k_{\F{a}}+v_n)}$. We thus get
\begin{equation}\label{eq:disfouper}
(Ux^m\phi)_{\theta_{\F{b}}}(v_n)=\sum_{k_{\F{a}}\in\D{Z}^d_{\F{a}}}\eul^{-\ii\theta_{\F{b}} \cdot (k_{\F{a}}+v_n)}(k_{\F{a}}+v_n)^m\phi(k_{\F{a}}+v_n)=\frac{\ii^m}{(2\pi)^m} \nabla_{\theta_{\F{a}}}^m(U\phi)_{\theta_{\F{b}}}(v_n)\,.
\end{equation}
Hence,
\[
(Ux^m(t)\psi)_{\theta_{\F{b}}}=\frac{\ii^m}{(2\pi)^m}\eul^{\ii tH(\theta_{\F{b}})}\nabla_{\theta_{\F{a}}}^m\eul^{-\ii tH(\theta_{\F{b}})}(U\psi)_{\theta_{\F{b}}}\,.
\]
On the other hand,
\begin{align}\label{eq:disna}
\nabla_{\theta_{\F{a}}}^m \eul^{-\ii tH(\theta_{\F{b}})}(U\psi)_{\theta_{\F{b}}} &= \sum_{n=1}^\nu\nabla_{\theta_{\F{a}}}^m[\eul^{-\ii tE_n(\theta_{\F{b}})}P_n(\theta_{\F{b}})(U\psi)_{\theta}] \notag\\
&= \sum_{n=1}^\nu \eul^{-\ii tE_n(\theta_{\F{b}})}(-\ii t \nabla_{\theta_{\F{a}}}E_n(\theta_{\F{b}}))^m P_n(\theta_{\F{b}})(U\psi)_{\theta}+\ord_{n,\theta}(t^{m-1})
\end{align}
for some error term $\ord_{n,\theta}(t^{m-1})$ which can be made explicit. 

From here we conclude the proof of \eqref{eq:dicreper} as in \eqref{eq:detail1}-\eqref{eq:detail2}. Here it is even easier because the technical Step 3 is no longer needed, as the sums are already finite.

It remains to prove the last two claims. 

Suppose first that $\psi$ has a nontrivial AC component. Since $\spec(H) = \cup_{n=1}^\nu\spec_n$, where $\spec_n=E_n(\C{C}_{\F{b}})= [E_n^-,E_n^+]$ are the spectral bands, then we have $\B{1}_B\psi\neq 0$ for some $B\subset\spec_n$ of positive measure, where $\spec_n$ is non-degenerate. This implies that $E_n$ is a non-constant piecewise analytic function \cite{KorSa14}, so $\nabla_{\theta_\F{a}}E_n(\theta_\F{b})\neq 0$ a.e. If $B$ lies in the intersection of several bands $\spec_k$, we ignore the degenerate ones as they are just finitely many points in $B$, as for the non-degenerate ones, they also have $\nabla_{\theta_\F{a}} E_k(\theta_\mathfrak{b})\neq 0$ a.e.

If the RHS of \eqref{eq:dicreper} is zero, then $\int_{\D{T}_\ast^d} \sum_{E_n(\theta_{\F{b}})\in B}\Big\|\Big(\frac{\nabla_{\theta_{\F{a}}}E_n(\theta_{\F{b}})}{2\pi}\Big)^mP_n(\theta_{\F{b}})(U\psi)_{\theta_{\F{b}}}\Big\|^2_{\D{C}^\nu}\,\dd\theta=0$, so $\sum_{E_n(\theta_{\F{b}})\in B}\Big\|\Big(\frac{\nabla_{\theta_{\F{a}}}E_n(\theta_{\F{b}})}{2\pi}\Big)^mP_n(\theta_{\F{b}})(U\psi)_{\theta_{\F{b}}}\Big\|^2_{\D{C}^\nu}=0$ a.e., so $\sum_{E_n(\theta_{\F{b}})\in B}\norm{P_n(\theta_{\F{b}})(U\psi)_{\theta_{\F{b}}}}^2_{\D{C}^\nu}=0$ a.e. by the previous paragraph, so $\norm{\B{1}_B(H(\theta_{\F{b}}))(U\psi)_{\theta_{\F{b}}}}^2=0$ a.e. But $\B{1}_B(H)\psi\neq 0$ by hypothesis. If $H'=\int_{\C{C}_{\F{b}}}^\oplus H(\theta_{\F{b}})\,\dd\rho_\star$, we have
\begin{align*}
\norm{\B{1}_B(H)\psi}^2 
&= \norm{U\B{1}_B(H)\psi}^2=\norm{\B{1}_B(H')U\psi}^2\\
&=\int_{\C{C}_{\F{b}}}\norm{(\B{1}_B(H')U\psi)(\theta_{\F{b}})}^2\,\dd\rho_\star\\
&=\int_{\C{C}_{\F{b}}}\norm{\B{1}_B(H(\theta_{\F{b}}))(U\psi)_{\theta_{\F{b}}}}^2\,\dd\rho_\star\,.
\end{align*}
We see that $\norm{\B{1}_B(H(\theta_{\F{b}}))(U\psi)_{\theta_{\F{b}}}}^2 =0$ a.e.\ would imply $\norm{\B{1}_B(H)\psi}^2=0$, a contradiction. Thus, the RHS of \eqref{eq:dicreper} is nonzero. This checks the claim for $\psi$ with a nontrivial AC component.

Now let $\psi$ be an eigenvector. Clearly then $\frac{\norm{x^m\eul^{-\ii tH}\psi}^2}{t^{2m}} =\frac{\norm{x^m\eul^{-\ii t\lambda}\psi}^2}{t^{2m}} = \frac{\norm{x^m\psi}^2}{t^{2m}}\to 0$ as $t\to+\infty$. In that case the RHS of \eqref{eq:dicreper} is also zero because such an eigenvector corresponds to some flat band $\spec_{\fb}(H) = E_r(\C{C}_{\F{b}})$, i.e. $E_r(\theta_{\F{b}})=\lambda$ is constant in $\theta$, so $\nabla_{\theta_{\F{a}}}E_r(\theta_{\F{b}})$ is identically zero. On the other hand $(U\psi)_{\theta_{\F{b}}} = (U\B{1}_{\{\lambda\}}(H)\psi)_{\theta_{\F{b}}} = [\B{1}_{\{\lambda\}}(H')(U\psi)]_{\theta_{\F{b}}} = \B{1}_{\{E_r(\theta_{\F{b}})\}}(H(\theta_{\F{b}}))(U\psi)_{\theta_{\F{b}}}=P_r(\theta_{\F{b}})(U\psi)_{\theta_{\F{b}}}$. Since $P_n(\theta_{\F{b}})P_r(\theta_{\F{b}})=\delta_{n,r}P_r(\theta_{\F{b}})$, the RHS of \eqref{eq:dicreper} reduces to 
\[
\int_{\D{T}_\ast^d}\biggl\lVert\biggl(\frac{\nabla_{\theta_{\F{a}}}E_r(\theta_{\F{b}})}{2\pi}\biggr)^{\!m}P_r(\theta_{\F{b}})(U\psi)_{\theta_{\F{b}}}\biggr\rVert^2_{\D{C}^\nu}\dd\theta=0.\qedhere
\]
\end{proof}

\begin{example}[the integer lattice] \label{exa:freezd1}
In the case of the adjacency matrix $\C{A}$ on $\D{Z}^d$, we have $V_f=\{0\}$, $H(\theta_{\F{b}})=\sum_{u\sim 0}\eul^{\ii\theta_{\F{b}}\cdot u}=2\sum_{j=1}^d \cos 2\pi\theta_j$, only one eigenvalue $E_1(\theta_{\F{b}})=2\sum_{j=1}^d\cos 2\pi\theta_j$, which traces the spectrum $[-2d,2d]$ as $\theta$ varies over $\D{T}_\ast^d$, and $P_1(\theta_{\F{b}})=\id$. So Theorem~\ref{thm:dicreper} asserts that
\begin{align*}
\lim_{t\to+\infty}\frac{\norm{n^m\eul^{-\ii tH}\psi}^2}{t^{2m}} 
&=\int_{\D{T}_\ast^d}\biggl\lvert\biggl(\frac{\nabla E_1(\theta_{\F{b}})}{2\pi}\biggr)^{\!m}\biggr\rvert^2\,\biggl\lvert\sum_{k\in\D{Z}^d}\eul^{-\ii k\cdot\theta_{\F{b}}}\psi(k)\biggr\rvert^2\dd\theta\\
&=4^m\int_{\D{T}^d}\biggl(\sum_{j=1}^d\sin^{2m}\theta_j\!\biggr)\abs{\hat\psi(\theta)}^2\dd\theta.
\end{align*}
This expression is the same as \eqref{eq:recall}, so we recover Theorem~\ref{thm:freediscrete}. 
\end{example}

\begin{example}[the triangular lattice] \label{exa:freezd2}
The triangular lattice has $V=\D{Z}^2$, but with two additional links:
\[
\C{A}\psi(n)=\psi(n-\F{e}_1) + \psi(n+\F{e_1}) + \psi(n-\F{e_2}) + \psi(n+\F{e}_2) + \psi(n-\F{e}_1-\F{e}_2) + \psi(n+\F{e}_1+\F{e}_2),
\]
where $(\F{e}_{1,2})$ is the standard basis,\footnote{Some authors replace the last two terms by $\psi(n-\F{e}_1+\F{e}_2) + \psi(n+\F{e}_1-\F{e}_2)$, this is just a different sheering convention and slightly changes the eigenvalue $E_1(\theta)$.}  see \cite{KorSa14}*{Fig.~3}. This is a $6$-regular graph. Here $V_f=\{0\}$, $H(\theta_{\F{b}})=\sum_{u\sim 0}\eul^{\ii\theta_{\F{b}}\cdot u}$, there is one eigenvalue $E_1(2\pi\theta)=2\cos2\pi\theta_1 + 2\cos 2\pi\theta_2 + 2\cos(2\pi(\theta_1+\theta_2))$ which traces the interval $[-3,6]=\spec(\C{A})=\spec_{\ac}(\C{A})$ as $\theta$ varies over $\D{T}^2_\ast$. Again $P_1(2\pi\theta)=\id$. So Theorem~\ref{thm:dicreper} asserts that
\[
\lim_{t\to+\infty}\frac{\norm{x^m\eul^{-\ii tH}\psi}^2}{t^{2m}}=4^m\int_{\D{T}^2}\bigl\lbrack(\sin\theta_1+\sin(\theta_1+\theta_2))^{2m}+(\sin\theta_2+\sin(\theta_1+\theta_2))^{2m}\bigr\rbrack\abs{\hat\psi(\theta)}^2\dd\theta.
\]
\end{example}

\begin{example}[the hexagonal lattice] \label{exa:freezd3}
The hexagonal (honeycomb/graphene) lattice is $3$-regular, see \cite{KorSa14}*{Fig.~7}. Here $\F{a}_1=a(1,0)$, $\F{a}_2=\frac{a}{2}(1,\sqrt{3})$, where $a$ is the distance between a vertex and its second nearest neighbor. We have $V_f=\accol{0,v}$, where $v=\frac{a}{2}(1,\frac{1}{\sqrt{3}})=\frac{\F{a}_1+\F{a}_2}{3}$, $V=\D{Z}_{\F{a}}^2 + V_f$, and
\begin{equation}\label{eq:ahex}
\begin{split}
\C{A}\psi(k_{\F{a}}) 
&=\psi(k_{\F{a}}+v)+\psi(k_{\F{a}}+v-\F{a}_1)+\psi(k_{\F{a}}+v-\F{a}_2),\\
\C{A}\psi(k_{\F{a}}+v) 
&=\psi(k_{\F{a}})+\psi(k_{\F{a}}+\F{a}_1)+\psi(k_{\F{a}}+\F{a}_2).
\end{split}
\end{equation}
By \eqref{eq:hfib}, we have $H(\theta_{\F{b}})f(0)=\sum_{u\sim 0} \eul^{\ii\theta_{\F{b}}\cdot u}f(\{u\}_{\F{a}})$. Using \eqref{eq:ahex}, this yields $H(\theta_{\F{b}})f(0)=\eul^{\ii\theta_{\F{b}}\cdot v}f(v) + \eul^{\ii\theta_{\F{b}}\cdot(v-\F{a}_1)}f(v) + \eul^{\ii\theta_{\F{b}}\cdot(v-\F{a}_2)}f(v)$. Similarly, we have 
\[
H(\theta_{\F{b}})f(v)=\sum_{u\sim v}\eul^{\ii\theta_{\F{b}}\cdot(u-v)}f(\{u\}_{\F{a}})=\eul^{-\ii\theta_{\F{b}}\cdot v}f(0) + \eul^{\ii\theta_{\F{b}}\cdot(\F{a}_1-v)}f(0)+\eul^{\ii\theta_{\F{b}}\cdot(\F{a}_2-v)}f(0).
\]
This shows that
\[
H(\theta_{\F{b}})=\begin{pmatrix}0&\eul^{\ii\theta_{\F{b}}\cdot v}\xi(\theta_{\F{b}})\\\eul^{-\ii\theta_{\F{b}}\cdot v}\overline{\xi(\theta_{\F{b}})}&0\end{pmatrix},\text{ where }\xi(\theta_{\F{b}})\coloneqq 1+\eul^{-\ii\theta_{\F{b}}\cdot\F{a}_1}+\eul^{-\ii\theta_{\F{b}}\cdot\F{a}_2}.
\]
Since $\theta_{\F{b}}\cdot k_{\F{a}}=2\pi \theta\cdot k$, this gives $\xi(\theta_{\F{b}})=1+\eul^{-2\pi \ii\theta_1} + \eul^{-2\pi\ii\theta_2}$. The eigenvalues are given by $\pm |\xi(\theta_{\F{b}})|=\pm \sqrt{3+2\cos 2\pi \theta_1+2\cos 2\pi \theta_2+2\cos 2\pi (\theta_1-\theta_2)}$. So 
\[
\frac{\nabla_{\theta_{\F{a}}} E_{\pm}(\theta_{\F{b}})}{2\pi}=\frac{\mp (\sin 2\pi\theta_1+\sin 2\pi(\theta_1-\theta_2))}{|\xi(\theta_{\F{b}})|} \F{a}_1 +\frac{\mp (\sin 2\pi\theta_2+\sin 2\pi(\theta_2-\theta_1))}{|\xi(\theta_{\F{b}})|} \F{a}_2.
\]
Since we have $|(\nabla_{\theta_{\F{a}}} E_+(\theta_{\F{b}}))^m|^2=|(\nabla_{\theta_{\F{a}}} E_-(\theta_{\F{b}}))^m|^2$, the definitions of $P_{\pm}(\theta_{\F{b}})$ do not matter as we use $\norm{P_1(\theta_{\F{b}})(U\psi)_{\theta_{\F{b}}}}^2+\norm{P_2(\theta_{\F{b}})(U\psi)_{\theta_{\F{b}}}}^2=\norm{(U\psi)_{\theta_{\F{b}}}}^2$. We conclude that
\begin{align*}
\lim_{t\to+\infty}\frac{\norm{x^m\eul^{-\ii tH}\psi}^2}{t^{2m}}
&=\int_{\D{T}^2}\frac{a^{2m}[(\sin\theta_1+\frac{\sin\theta_2+\sin(\theta_1-\theta_2)}{2})^{2m}+(\frac{\sqrt{3}}{2}(\sin\theta_2+\sin(\theta_2-\theta_1)))^{2m}]}{|1+\eul^{\ii\theta_1}+\eul^{\ii\theta_2}|^{2m}}\\
&\qquad\times\bigg[\Big|\sum_{k_{\F{a}}\in\D{Z}_{\F{a}}^2}\eul^{-\ii k_{\F{a}}\cdot\theta_{\F{b}}}\psi(k_{\F{a}})\Big|^2+\Big|\sum_{k_{\F{a}}\in \D{Z}_{\F{a}}^2}\eul^{-\ii(k_{\F{a}}+v)\cdot\theta_{\F{b}}}\psi\big(k_{\F{a}}+v)\big)\Big|^2\bigg]\,\frac{\dd\theta}{(2\pi)^2}\,.
\end{align*} 
\end{example}

\begin{remark}
While revising the paper, we discovered the recent work \cite{Fil} which considers ballistic transport for periodic Jacobi matrices. This corresponds to $\Gamma=\D{Z}^d$, $m=1$ and adjacency matrices carrying periodic weights (besides the periodic potential).
\end{remark}

\section{Universal covers}  \label{sec:uc}
\subsection{Background}  \label{sec:ucbg}

Let $\C{T}$ be a tree, that is, a connected graph with no cycles. Consider a Schr\"odinger operator
\[
H=\C{A}+W
\]
on $\C{T}$, where $\C{A}$ is the adjacency matrix of the tree and $W$ a potential on $\C{T}$. We start by recalling some properties of the Green function of $H$.

Given $\gamma\in\D{C}^+\coloneqq\accol{z\in\D{C}:\Im z>0}$ and $(v,w)$ a directed edge in $\C{T}$, let 
\[
G^\gamma(v,w)\coloneqq\scal{\delta_v,(H-\gamma)^{-1}\delta_w}
\]
denote the Green function of $H$ and
\[
\zeta_v^\gamma(w)\coloneqq\frac{G^\gamma(v,w)}{G^\gamma(v,v)}.
\] 
The following identities are then classical (see, e.g., \cite{AnS19a}*{Section 2}):
\begin{alignat}{2}\label{eq:rec}
&\frac{1}{G^{\gamma}(v,v)}=W(v) + \sum_{u\sim v}\zeta_v^{\gamma}(u) - \gamma,&\qquad& 
\frac{-1}{\zeta_w^{\gamma}(v)}=W(v) + \sum_{u\in\C{N}_v\setminus \{w\}} \zeta_v^{\gamma}(u) - \gamma,\\
\label{eq:zetainv}
&\frac{1}{\zeta_w^{\gamma}(v)}-\zeta_v^{\gamma}(w)=\frac{-1}{G^{\gamma}(v,v)},&&\zeta_w^{\gamma}(v)=\frac{G^{\gamma}(v,v)}{G^{\gamma}(w,w)}\zeta_v^{\gamma}(w),\\
\label{eq:greenmulti}
&G^{\gamma}(v_0,v_k)=G^{\gamma}(v_0,v_0) \zeta_{v_0}^{\gamma}(v_1)\cdots\zeta_{v_{k-1}}^{\gamma}(v_k),&& G^{\gamma}(v,w)=G^{\gamma}(w,v)
\end{alignat}
for any non-backtracking path $(v_0;v_k)$ in a tree $\C{T}$.

The quantity $-\zeta_v^\gamma(w)$ can also be expressed as the Green function $G_{\C{T}^{(w|v)}}^\gamma(w,w)$ of a subtree of $\C{T}$. By the Herglotz property, we thus have $-\Im\zeta_v^\gamma(w)=|\Im\zeta_v^\gamma(w)|$ for any $(v,w)$. In particular, \eqref{eq:rec} implies the important relation
\begin{equation}\label{eq:redu}
\sum_{u\in\C{N}_v\setminus\{w\}} |\Im\zeta_v^\gamma(u)|=\frac{|\Im\zeta_w^\gamma(u)|}{|\zeta_w^\gamma(v)|^2}-\Im\gamma.
\end{equation}
\subsection{Universal covers}  \label{sec:ucper}

Suppose $(G,W)$ is any finite graph endowed with a potential $W$ and let $(\C{T},\C{W})=(\widetilde{G},\widetilde{W})$ be its universal cover, where the potential $W$ is lifted naturally by $\widetilde{W}(v)=W(\pi v)$, if $\pi\colon\C{T}\to G$ is the covering projection. For example, if $G$ is $d$-regular, then $\C{T}$ is the infinite $d$-regular tree. We assume that the minimal degree of $G$ is $\geq 2$.

The spectral theory of such trees has a rich history, see \cite{AnS19b} and references therein. It is known in particular that the spectrum consists of bands of absolutely continuous spectrum, possibly with infinitely degenerate eigenvalues between the bands. Moreover, in the interior of the bands, all Green functions limits $G^{\lambda}(v,w)\coloneqq\lim_{\eta\downarrow 0}G^{\lambda+\ii\eta}(v,w)$ and $\zeta^{\lambda}_v(w)\coloneqq\lim_{\eta\downarrow 0}\zeta_v^{\lambda+\ii\eta}(w)$ exist, $\Im G^{\lambda}(v,v)>0$ for any $v$, while $\Im\zeta_v^\lambda(w)$ is strictly negative, that is $\abs{\Im\zeta_v^\lambda(w)}=-\Im\zeta_v^\lambda(w)>0$ within the bands. 

We henceforth denote
\[
\gamma\coloneqq\lambda+\ii\eta\quad\text{with}\quad\eta>0.
\]
As in \cite{LeMS20}, we define the spectral quantities
\[
z_\gamma\coloneqq\inf_{(v,w)\in B(\C{T})}\abs{\Im\zeta_v^{\gamma}(w)}\quad\text{and}\quad
z_\lambda\coloneqq\inf_{(v,w)\in B(\C{T})}\abs{\Im\zeta_v^{\lambda}(w)},
\]
where the infimum is over the set $B(\C{T})$ of directed edges of $\C{T}$. Actually $\zeta_v^\gamma(w)=\zeta_{v'}^\gamma(w')$ if $\pi(v,w)=\pi(v',w')$, so $z_\lambda$ is just a minimum over the (lifts of) directed edges of $G$. In particular, $z_{\lambda}>0$. 

Back to the topic of ballistic transport, fix an origin $o\in\C{T}$, let $\abs{x}\coloneqq d(x,o)$ and $(x\psi)(x)\coloneqq\abs{x}\psi(x)$. In this section we shall consider averaged moments instead:
\[
\scal{x^\beta}_{\psi,\eta}\coloneqq 2\eta\int_0^\infty\eul^{-2\eta t}\norm{x^{\beta/2}\eul^{-\ii tH}\psi}^2\,\dd t\,.
\]
Let denote $G^z\coloneqq(H-z)^{-1}$. A well-known application of the Plancherel identity yields that
\begin{equation}\label{eq:planche}
\scal{x^\beta}_{\psi,\eta}=\frac{\eta}{\pi}\int_{-\infty}^{\infty}\norm{x^{\beta/2} G^{\lambda+\ii\eta}\psi}^2\,\dd t.
\end{equation}
Our aim is to find lower bounds for $\liminf\limits_{\eta\downarrow 0} \eta^\beta\scal{x^\beta}_{\psi,\eta}$. Using Theorem~\ref{thm:formel} and the Tauberian theorem \cite{Si79}*{Theorem 10.3}, such a lower bound implies 
\[
\liminf_{a\to+\infty}\frac{1}{a^{\beta+1}}\int_0^a\norm{x^{\beta/2}\eul^{-\ii tH}\psi}^2\,\dd t>0\quad\text{and}\quad\limsup_{t\to+\infty}\frac{\norm{x^{\beta/2}\eul^{-\ii tH}\psi}^2}{t^\beta}>0.
\]

\begin{theorem}\label{thm:pertree}
We have 
\[
\liminf_{\eta\downarrow 0}\eta^{\beta}\scal{x^\beta}_{\delta_o,\eta}\geq\frac{1}{2^{\beta+1}\pi}\int_{\spec(H)}z_{\lambda}^{\beta}\Im G^{\lambda}(o,o)\dd\lambda>0.
\]
\end{theorem}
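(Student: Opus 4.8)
The plan is to start from the Plancherel identity \eqref{eq:planche}, which reduces the problem to a lower bound on $\frac{\eta}{\pi}\int_{\mathbb R}\norm{\abs{x}^{\beta/2}G^{\lambda+\ii\eta}\delta_o}^2\,\dd\lambda$, and to expand the resolvent norm in the position basis. Writing $G^\gamma\delta_o=\sum_{v\in\C{T}}G^\gamma(o,v)\delta_v$ we get $\norm{\abs{x}^{\beta/2}G^\gamma\delta_o}^2=\sum_{v}d(v,o)^\beta\abs{G^\gamma(o,v)}^2$. The key point is to control the sum over vertices at distance exactly $k$ from $o$. Using the multiplicative structure \eqref{eq:greenmulti}, $G^\gamma(o,v)=G^\gamma(o,o)\zeta_{o}^\gamma(v_1)\zeta_{v_1}^\gamma(v_2)\cdots\zeta_{v_{k-1}}^\gamma(v_k)$ along the unique geodesic $o=v_0,v_1,\dots,v_k=v$, so
\[
\sum_{d(v,o)=k}\abs{G^\gamma(o,v)}^2=\abs{G^\gamma(o,o)}^2\sum_{\text{geodesics}}\prod_{j=0}^{k-1}\abs{\zeta_{v_j}^\gamma(v_{j+1})}^2.
\]
First I would bound $\abs{\zeta_v^\gamma(w)}^2\geq\abs{\Im\zeta_v^\gamma(w)}^2\geq z_\gamma^2$ from below for each factor, but this alone does not suffice because the number of geodesics must be tracked; instead the right tool is the propagation identity \eqref{eq:redu}, which lets one sum $\abs{\Im\zeta}$ over forward neighbors at the cost of a factor $\abs{\zeta_w^\gamma(v)}^{-2}$ and $-\Im\gamma=-\eta$. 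Iterating \eqref{eq:redu} along the tree yields a telescoping-type identity: summing $\sum_{d(v,o)=k}\abs{\Im\zeta_{v_{k-1}}^\gamma(v_k)}\big/\prod\abs{\zeta}^2$ collapses, and combined with the recursion \eqref{eq:rec} for $1/G^\gamma(o,o)$ one obtains an exact expression
\[
\eta\sum_{v\in\C{T}}\abs{G^\gamma(o,v)}^2=\Im G^\gamma(o,o),
\]
which is just the spectral-theorem identity $\eta\norm{G^\gamma\delta_o}^2=\Im\scal{\delta_o,G^\gamma\delta_o}$. Inserting the extra weight $d(v,o)^\beta$ is where the lower bound $\abs{\Im\zeta_v^\gamma(w)}\geq z_\gamma$ enters: at each of the $k$ steps of a geodesic we pick up at least a factor comparable to $z_\gamma$ relative to the unweighted sum, so morally
\[
\eta\sum_{v}d(v,o)^\beta\abs{G^\gamma(o,v)}^2\gtrsim z_\gamma^\beta\cdot\Im G^\gamma(o,o)\cdot(\text{combinatorial factor}),
\]
after a more careful bookkeeping that keeps the $k^\beta$ weight inside the telescoping sum rather than discarding it — this is where the constant $2^{-(\beta+1)}$ will come from (from $k^\beta\geq (k/2)^\beta$-type estimates when passing from sums to the surviving terms, or from a dyadic decomposition of the geodesic distance).

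After establishing, for fixed $\eta>0$, a pointwise-in-$\lambda$ bound of the form
\[
\eta\norm{\abs{x}^{\beta/2}G^{\lambda+\ii\eta}\delta_o}^2\geq \frac{1}{2^{\beta+1}}\,z_{\lambda+\ii\eta}^{\beta}\,\Im G^{\lambda+\ii\eta}(o,o)\qquad(\text{up to lower-order terms in }\eta),
\]
I would integrate in $\lambda$ over $\mathbb R$, multiply by $\eta^{\beta-1}$, and take $\eta\downarrow 0$. The passage to the limit is handled by Fatou's lemma: for Lebesgue-a.e.\ $\lambda$ in the interior of the bands $\spec(H)$, the boundary values $\zeta_v^{\lambda}(w)$ and $G^{\lambda}(o,o)$ exist (as recalled in Section~\ref{sec:ucper}), $z_{\lambda+\ii\eta}\to z_\lambda>0$, and $\Im G^{\lambda+\ii\eta}(o,o)\to\pi\,\frac{\dd\mu_{\delta_o}^{\ac}}{\dd\lambda}(\lambda)$ — but more directly $\frac{1}{\pi}\Im G^{\lambda+\ii\eta}(o,o)\,\dd\lambda$ is the smoothed spectral measure, so $\liminf_{\eta\downarrow0}\frac{\eta^{\beta-1}}{\pi}\int_{\mathbb R}\,\eta\,z_{\lambda+\ii\eta}^\beta\Im G^{\lambda+\ii\eta}(o,o)\,\dd\lambda$ is bounded below by $\frac{1}{\pi}\int_{\spec(H)}z_\lambda^\beta\Im G^\lambda(o,o)\,\dd\lambda$ after dividing by the extra $2^{\beta+1}$. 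Strict positivity is immediate: $z_\lambda>0$ on the bands, $\Im G^\lambda(o,o)>0$ in the interior of the bands, and the bands have positive Lebesgue measure since $H$ has nonempty absolutely continuous spectrum.

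The main obstacle I anticipate is the bookkeeping in the telescoping step — keeping the distance weight $d(v,o)^\beta$ inside the iterated application of \eqref{eq:redu} so that it survives to the limit with the clean constant $2^{-(\beta+1)}$, rather than being crudely bounded. Concretely, one must show that the ``forward flux'' identity $\sum_{u\in\C{N}_v\setminus\{w\}}\abs{\Im\zeta_v^\gamma(u)}=\abs{\Im\zeta_w^\gamma(v)}\big/\abs{\zeta_w^\gamma(v)}^2-\eta$ can be summed along geodesics with the factor $z_\lambda^\beta$ extracted at the price of only a bounded multiplicative loss, uniformly in $\eta$. A secondary technical point is justifying the interchange of the infinite sum over $v$ with the $\lambda$-integral and the $\eta\downarrow0$ limit; Fatou's lemma and the finiteness $\eta\norm{G^\gamma\delta_o}^2=\Im G^\gamma(o,o)<\infty$ take care of this, but the $\abs{x}^\beta$ weight requires the a priori bound $\scal{x^\beta}_{\delta_o,\eta}<\infty$, which is exactly what Theorem~\ref{thm:formel} (the ballistic upper bound) provides.
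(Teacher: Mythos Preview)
Your setup is correct --- Plancherel, the multiplicative form \eqref{eq:greenmulti}, the spectral identity $\eta\sum_v|G^\gamma(o,v)|^2=\Im G^\gamma(o,o)$, and Fatou at the end are all the right ingredients. But the central mechanism is missing, and the one you propose in its place does not work.

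You want to obtain a \emph{lower} bound on the weighted sum $\sum_v|v|^\beta|G^\gamma(o,v)|^2$ by ``keeping the weight $k^\beta$ inside the telescoping'' and extracting a factor $z_\gamma$ at each step. The problem is that the telescoping identity \eqref{eq:redu} runs in the wrong direction for this: iterating it collapses the sphere sum $\sum_{|v|=r}|G^\gamma(o,v)|^2$ from the outside in and yields an \emph{upper} bound, namely (after inserting $|\Im\zeta|/|\Im\zeta|$ and using $|\Im(1/\zeta)|\geq z_\gamma$)
\[
\sum_{|v|=r}|G^\gamma(o,v)|^2\ \leq\ \frac{\Im G^\gamma(o,o)}{z_\gamma},
\]
uniformly in $r$. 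There is no corresponding lower bound on a single sphere sum coming from \eqref{eq:redu}; the naive $|\zeta|^2\geq z_\gamma^2$ gives only $(\text{paths})\cdot z_\gamma^{2r}$, which is useless since $z_\gamma<1$.

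The actual argument turns this around. One does not try to bound the weighted sum directly; instead one truncates at a scale $M=b\eta^{-1}$ and writes
\[
\eta^{\beta+1}\sum_v|v|^\beta|G^\gamma(o,v)|^2\ \geq\ b^\beta\Bigl[\Im G^\gamma(o,o)-\eta\sum_{|v|\leq b\eta^{-1}}|G^\gamma(o,v)|^2\Bigr].
\]
Now the \emph{upper} bound above on each sphere gives $\eta\sum_{|v|\leq M}|G^\gamma(o,v)|^2\leq \eta|G^\gamma(o,o)|^2+\eta M\cdot\Im G^\gamma(o,o)/z_\gamma$. Choosing $b=z_\gamma/2$ makes the inner sum eat at most half of $\Im G^\gamma(o,o)$, and the factor $b^\beta=(z_\gamma/2)^\beta$ together with the remaining $\tfrac12$ produces exactly the constant $2^{-(\beta+1)}$. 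So the $z_\lambda$ enters not as a per-step gain in a telescoping lower bound, but as the correct spatial scale $\sim z_\gamma/\eta$ at which the mass of $|G^\gamma(o,\cdot)|^2$ has not yet been exhausted.
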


\begin{note*}
For $\beta=0$ it follows from \eqref{eq:planche} that
\[
\eta^0\scal{x^0}_{\psi,\eta}=\frac{\eta}{\pi}\int_{-\infty}^\infty\norm{G^{\lambda+\ii\eta}\psi}^2\,\dd t=\frac{1}{\pi}\int_{-\infty}^\infty\Im\langle\psi,G^{\lambda+\ii\eta}\psi\rangle\,\dd t.
\]
So the presence of $\Im G^{\lambda}(o,o)$ above is quite natural, while the moment effect is captured by $z_{\lambda}^\beta$. 
\end{note*}

\begin{proof}
We have
\begin{align}\label{eq:lb}
\eta^{\beta+1}\sum_v|v|^\beta|G^{\lambda+\ii\eta}(o,v)|^2&\geq\eta^{\beta+1}\sum_{|v|>b\eta^{-1}}|v|^\beta|G^{\lambda+\ii\eta}(o,v)|^2\notag\\
&\geq\eta^{\beta+1}b^\beta\eta^{-\beta}\sum_{|v|>b\eta^{-1}}|G^{\lambda+\ii\eta}(o,v)|^2 \notag\\
&= \eta b^{\beta}\Big[\norm{G^{\lambda+\ii\eta}\delta_o}^2-\sum_{|v|\leq b\eta^{-1}}|G^{\lambda+\ii\eta}(o,v)|^2\Big]\notag\\
&=b^\beta \Big[\Im G^{\lambda+\ii\eta}(o,o) - \eta\sum_{|v|\leq b\eta^{-1}}|G^{\lambda+\ii\eta}(o,v)|^2\Big]\,,
\end{align}
where we used the spectral theorem $G^z(o,o)=\int_{\spec(\C{A})}\frac{1}{x-z}\,\dd\mu_{o,o}(x)$ in the last equality.

Note that
\[
\eta\sum_{|v|\leq b\eta^{-1}}|G^{\lambda+\ii\eta}(o,v)|^2\leq\eta\norm{G^{\lambda+\ii\eta}\delta_o}^2=\Im G^{\lambda+\ii\eta}(o,o).
\]
This trivial estimate makes the lower bound \eqref{eq:lb} useless however, which is natural since it doesn't use any delocalization of $H$. The argument below can be summarized as showing that $b$ can be chosen so that
\[
\eta\sum_{|v|\leq b\eta^{-1}}|G^{\lambda+\ii\eta}(o,v)|^2 \leq\frac{1}{2}\Im G^{\lambda+\ii\eta}(o,o).
\]
Denoting $v_0\coloneqq o$, we have
\[
\sum_{|v|\leq M}|G^{\lambda+\ii\eta}(v_0,v)|^2=|G^{\lambda+\ii\eta}(v_0,v_0)|^2+\sum_{r=1}^M\sum_{v_1\sim v_0}\sum_{(v_2;v_r)}|G^{\lambda+\ii\eta}(v_0,v_{r})|^2
\]
where the last sum is over all nonbacktracking paths $(v_2;v_{M})$ of length $r-2$ ``outgoing'' from the directed edge $(v_0,v_1)$, i.e., $v_2\sim v_1$ and $v_2\neq v_0$. Now, for $\gamma=\lambda+\ii\eta$,
\begin{align}\label{eq:maxop}
\sum_{(v_2;v_{r})}|G^{\gamma}(v_0,v_{r})|^2 
&= |G^{\gamma}(o,o)|^2\sum_{(v_2;v_{r})}|\zeta_{v_0}^{\gamma}(v_1)\cdots\zeta_{v_{r-1}}^{\gamma}(v_r)|^2\notag\\
&\leq |G^{\gamma}(o,o)|^2\max_{(v,w)}\frac{|\zeta_v^\gamma(w)|^2}{|\Im\zeta_{v}^{\gamma}(w)|}\sum_{(v_2;v_{r})}|\zeta_{v_0}^{\gamma}(v_1)\cdots \zeta_{v_{r-2}}^{\gamma}(v_{r-1})|^2|\Im\zeta_{v_{r-1}}^{\gamma}(v_{r})|\notag\\
&\leq |G^{\gamma}(o,o)|^2|\Im\zeta_o^\gamma(v_1)|\max_{(v,w)}\frac{1}{|\Im\frac{1}{\zeta_{v}^{\gamma}(w)}|},
\end{align}
where we applied \eqref{eq:redu} $r-1$ times in the last step. 

By \eqref{eq:rec}, $\Im G^{\lambda+\ii\eta}(o,o)=(\sum_{u\sim o}|\Im\zeta_o^{\lambda+\ii\eta}(u)|+\eta)|G^{\lambda+\ii\eta}(o,o)|^2$. Thus,
\[
\sum_{(v_2;v_{r})}|G^{\lambda+\ii\eta}(v_0,v_{r})|^2 \leq\frac{\Im G^{\lambda+\ii\eta}(o,o)}{\sum_{u\sim o} |\Im\zeta_o^{\lambda+\ii\eta}(u)|+\eta} \frac{|\Im\zeta_o^{\lambda+\ii\eta}(v_1)|}{z_{\lambda+\ii\eta}}
\]
where we estimated $\Im \frac{1}{\zeta_v^\gamma(w)}$ using \eqref{eq:rec} and the definition of $z_{\lambda+\ii\eta}$. Thus,
\begin{equation}\label{eq:upshotcovs}
\sum_{|v|\leq M}|G^{\lambda+\ii\eta}(o,v)|^2\leq  |G^{\lambda+\ii\eta}(o,o)|^2 + M \frac{\Im G^{\lambda+\ii\eta}(o,o)}{z_{\lambda+\ii\eta}}\,.
\end{equation}
Take $M=b\eta^{-1}$. It follows from \eqref{eq:lb} that
\[
\eta^{\beta+1}\sum_v|v|^\beta|G^{\lambda+\ii\eta}(o,v)|^2\geq b^\beta\Big[\Im G^{\lambda+\ii\eta}(o,o)- b\frac{\Im G^{\lambda+\ii\eta}(o,o)}{z_{\lambda+\ii\eta}} - \eta |G^{\lambda+\ii\eta}(o,o)|^2\Big]\,.
\]
Choose $b=\frac{z_{\lambda+\ii\eta}}{2}$. Then
\begin{equation}\label{eq:aok}
\eta^{\beta+1}\sum_v|v|^\beta|G^{\lambda+\ii\eta}(o,v)|^2\geq\frac{z_{\lambda+\ii\eta}^{\beta}}{2^{\beta+1}}\Big[\Im G^{\lambda+\ii\eta}(o,o) - 2\eta |G^{\lambda+\ii\eta}(o,o)|^2\Big]\,.
\end{equation}
As $\eta\downarrow 0$, we get the lower bound $\frac{z_{\lambda}^{\beta}}{2^{\beta+1}}\Im G^{\lambda}(o,o)>0$.

This estimate is true for any $\lambda$ in the union of intervals $\cup_j\mathring{I}_j$ of AC spectrum, which is a set of positive measure. Using Fatou's lemma, we thus get
\[
\liminf_{\eta\downarrow 0}\eta^{\beta+1}\int_{-\infty}^{\infty}\sum_v|v|^\beta|G^{\lambda+\ii\eta}(o,v)|^2\,\dd\lambda \geq\int_{\cup_j\mathring{I}_j}\frac{z_{\lambda}^{\beta}}{2^{\beta+1}}\Im G^{\lambda}(o,o)\,\dd\lambda > 0\,.
\]
The claim follows by \eqref{eq:planche}.
\end{proof}

\begin{remark}
We may extend the result to functions $\psi$ of compact support, but the bound we get is not very good. Namely, if $\psi$ is supported in $\Lambda\subset \C{T}$, we obtain
\[
\liminf_{\eta\downarrow 0}\eta^{\beta}\scal{x^\beta}_{\psi,\eta}\geq \frac{1}{2^{2\beta+1}|\Lambda|^\beta}\int_{\spec(H)}\frac{z_{\lambda}^{\beta} \,(\Im\scal{\psi,G^{\lambda}\psi})^{\beta+1}}{\scal{\psi,\Im G^{\lambda}(\cdot,\cdot)\psi}^{\beta}}\,\dd\lambda,
\]
where $[\Im G^{\lambda}(\cdot,\cdot)\psi](w)=\Im G^{\lambda}(w,w)\psi(w)$. In particular, the bound becomes useless as the size of the support becomes infinite, which does not seem natural in view of our previous ballistic estimates. We thus omit the proof and think it a worthwhile question to obtain estimates for general $\psi$, without $\eta$-time averaging, and to find the exact limit.
\end{remark}

\subsection{The Anderson model on universal covers}\label{sec:ucand}

We now study weak random perturbations of universal covering trees. The simplest example is the Anderson model on the $(q+1)$-regular tree, for which ballistic transport was previously established in the time-averaged sense by A.~Klein \cite{Kl96}, and by M.~Aizenman and S.~Warzel \cite{AiWa12}. Here we study the general case.

\begin{assumptions*}
Compared to the previous subsection, we need to assume the tree $\C{T}=\widetilde{G}$ has a bit nicer geometry. If $b\in B(\C{T})$ we let $o(b)$ and $t(b)$ denote the origin and terminus of $b$, respectively.
\begin{itemize}
\item 
We now assume the minimal degree is $\geq 3$,
\item 
Since $G$ is finite, there are finitely many isomorphism classes of $(\C{T},b)$ as $b$ runs over the directed edges of $\C{T}$ and $(\C{T},b)$ is considered as a tree with ``root'' $b$. If $\C{N}_b\coloneqq\accol{b_+: o(b_+)=t(b)}$ is the set of edges outgoing from $b$, then we assume that for each $b$, there is at least one $b_+\in\C{N}_b$ such that $(\C{T},b)$ is isomorphic to $(\C{T},b_+)$.
\end{itemize}
The second condition is perhaps better visualized using the language of cone types, see \cite{KLWa12}*{condition (M1*)} in which it was introduced. Explicit examples of such trees can be found in \cite{AnS19b}. Concerning the random potential, we assume
\begin{itemize}
\item 
Each vertex $v\in\C{T}$ is endowed with a random variable $\C{W}(v)$, the $(\C{W}(v))_{v\in\C{T}}$ are i.i.d.\ with common distribution $\nu$ of compact support. We also impose some regularity as in \cite{AiWa12}*{assumption A2} for comfort, but perhaps this can be avoided.
\end{itemize}
\end{assumptions*}

It is known \cite{KLWa12} that under these conditions the random Schr\"odinger operator $H=\C{A}+\epsilon\C{W}$ inherits the purely absolutely continuous spectrum of $\C{A}$ almost surely if $\epsilon$ is small enough -- one of the few results of Anderson delocalization. The result of \cite{KLWa12} can also be used to derive some inverse moments bounds on the imaginary parts of the Green functions, see \cite{AnISW21}*{Theorem 5.2} for details (in a different model). More precisely, it can be shown that within the stable intervals $I$ of pure AC spectrum, we also have
\begin{equation}\label{eq:(Green)}
\sup_{\lambda\in I}\sup_{\eta\in (0,1)}\sup_{b\in B(\C{T})}\expect(|\Im\zeta_{o(b)}^{\lambda+\ii\eta}(t_b)|^{-s}) < \infty \,,
\end{equation}
for $0\leq s\leq 5$, where $\sup_b$ runs over the set $B(\C{T})$ of directed edges of $\C{T}$ and it is in fact a maximum as it can be equivalently taken over the directed edges of $G$. 

\begin{theorem}
Under the previous assumptions,
\[
\liminf_{\eta\downarrow 0}\eta^{\beta}\expect\bigl(\scal{x^\beta}_{\B{1}_I(H)\delta_o,\eta})>0.
\]
\end{theorem}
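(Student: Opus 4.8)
The plan is to mirror the proof of Theorem~\ref{thm:pertree}, replacing the deterministic bound $z_\lambda>0$ by its probabilistic substitute \eqref{eq:(Green)}, and taking expectations at the right moment. Starting from \eqref{eq:planche} applied to $\psi=\B{1}_I(H)\delta_o$, we have
\[
\eta^{\beta}\expect\bigl(\scal{x^\beta}_{\B{1}_I(H)\delta_o,\eta}\bigr)=\frac{\eta^{\beta+1}}{\pi}\int_{-\infty}^{\infty}\expect\Bigl(\sum_v|v|^\beta|G_I^{\lambda+\ii\eta}(o,v)|^2\Bigr)\,\dd\lambda\,,
\]
where $G_I^z\coloneqq(H_I-z)^{-1}$ with $H_I$ the restriction of $H$ to $\B{1}_I(H)$; equivalently one may work with $G^z$ and then project, absorbing a harmless spectral cutoff. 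First I would repeat the elementary truncation \eqref{eq:lb}: split the sum over $|v|>b\eta^{-1}$, extract the factor $b^\beta$, and rewrite the tail as $\Im G_I^{\lambda+\ii\eta}(o,o)-\eta\sum_{|v|\le b\eta^{-1}}|G_I^{\lambda+\ii\eta}(o,v)|^2$.

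The core is to control $\eta\sum_{|v|\le M}|G^{\gamma}(o,v)|^2$ \emph{in expectation}, with $M=b\eta^{-1}$. Here I would follow \eqref{eq:maxop}–\eqref{eq:upshotcovs} but, crucially, postpone the use of the uniform bound on $|\Im\zeta|^{-1}$: write, as in \eqref{eq:maxop} using \eqref{eq:greenmulti} and \eqref{eq:redu} applied $r-1$ times,
\[
\sum_{(v_2;v_r)}|G^\gamma(v_0,v_r)|^2\le |G^\gamma(o,o)|^2\,|\Im\zeta_o^\gamma(v_1)|\max_{(v,w)}\frac{1}{|\Im\frac{1}{\zeta_v^\gamma(w)}|}\,,
\]
and then, by \eqref{eq:rec}, $|\Im\frac1{\zeta_v^\gamma(w)}|\ge|\Im\zeta_w^\gamma(v')|$ for any neighbour $v'$ of $v$ other than $w$ — so the reciprocal is bounded by $|\Im\zeta_w^\gamma(v')|^{-1}$ for a suitably chosen edge. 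Summing over $r\le M$ and over the finitely many $v_1\sim o$ gives a bound of the shape $|G^\gamma(o,o)|^2+M\,|G^\gamma(o,o)|^2\cdot\Theta_\gamma$, where $\Theta_\gamma$ is a maximum of finitely many inverse imaginary parts of $\zeta$'s along edges near $o$ (finitely many because there are finitely many cone types). Taking $\eta$ times this, and then expectation, I would bound $\expect(\eta M |G^\gamma(o,o)|^2\Theta_\gamma)=\expect(b|G^\gamma(o,o)|^2\Theta_\gamma)$ by Cauchy–Schwarz together with the moment bounds \eqref{eq:(Green)}: $\Theta_\gamma$ has finite expectation of any order $\le 5$, and $|G^\gamma(o,o)|^2\le |\Im\zeta_o^\gamma(u)|^{-2}/(\,\cdot\,)$ type estimates from \eqref{eq:rec} also have controlled moments, so the product is integrable and $\expect(b|G^\gamma(o,o)|^2\Theta_\gamma)\le C\,b$ with $C$ uniform in $\eta\in(0,1)$ and $\lambda\in I$. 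Hence $\expect\bigl(\eta\sum_{|v|\le b\eta^{-1}}|G^\gamma(o,v)|^2\bigr)\le \eta\,\expect|G^\gamma(o,o)|^2 + Cb$, and the first term vanishes as $\eta\downarrow0$ after integrating $\lambda$ over the bounded set $I$.

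Assembling: from the analogue of \eqref{eq:lb} and the previous step,
\[
\eta^{\beta+1}\expect\Bigl(\sum_v|v|^\beta|G_I^{\lambda+\ii\eta}(o,v)|^2\Bigr)\ge b^\beta\Bigl[\expect\,\Im G_I^{\lambda+\ii\eta}(o,o)-Cb-\eta\,\expect|G^{\lambda+\ii\eta}(o,o)|^2\Bigr]\,.
\]
Choosing $b$ a small fixed constant (e.g. $b=\frac{1}{4C}\inf_{\lambda\in I}\liminf_{\eta\downarrow0}\expect\,\Im G^{\lambda+\ii\eta}(o,o)$, which is positive on the stable AC interval $I$ since $\expect\,\Im G^\lambda(o,o)>0$ there) makes the bracket bounded below by a positive constant for small $\eta$, uniformly on a positive-measure subset of $I$. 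Integrating over $\lambda$ and applying Fatou's lemma as in Theorem~\ref{thm:pertree} gives $\liminf_{\eta\downarrow0}\eta^\beta\expect(\scal{x^\beta}_{\B{1}_I(H)\delta_o,\eta})>0$, via \eqref{eq:planche}.

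The main obstacle I anticipate is the uniform-in-$\eta$ integrability in the second step: the deterministic proof simply used $z_{\lambda+\ii\eta}\ge z_\lambda>0$, but here one only has control of \emph{moments} of $|\Im\zeta|^{-1}$, not an almost-sure lower bound, so a single bad edge among the $O(M)$ edges could in principle wreck the sum. The fix is exactly that the tree is ``periodic up to finitely many cone types'', so the maximum $\Theta_\gamma$ — after using \eqref{eq:rec} to pass from $|\Im\frac1\zeta|^{-1}$ to inverse imaginary parts of $\zeta$ along a single neighbouring edge — is really a maximum over a \emph{bounded} (cone-type-dependent) number of random variables \emph{near the root}, not a growing number; its moments are then uniformly bounded by \eqref{eq:(Green)}. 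Making that reduction precise (tracking which edge's $\zeta$ the reciprocal bound produces, and why only finitely many distinct laws occur) is the one place that needs genuine care; everything else is a routine transcription of the proof of Theorem~\ref{thm:pertree} with an $\expect$ inserted and Cauchy–Schwarz used once.
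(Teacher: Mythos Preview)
Your proposal has a genuine gap precisely at the point you yourself flag as ``the one place that needs genuine care.'' The claim that $\Theta_\gamma$ is a maximum over a bounded number of random variables near the root is false. In the deterministic bound \eqref{eq:maxop}, the factor $\max_{(v,w)}\frac{|\zeta_v^\gamma(w)|^2}{|\Im\zeta_v^\gamma(w)|}$ is extracted from the \emph{leaf} edge $(v_{r-1},v_r)$ of each path before the telescoping by \eqref{eq:redu}; the telescoping collapses the remaining product to $|\Im\zeta_o^\gamma(v_1)|$, but the maximum stays as a maximum over all edges at distance $r$ from $o$. In the periodic case this is harmless because $\zeta_v^\gamma(w)$ depends only on the cone type of $(v,w)$, so only finitely many \emph{values} occur. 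In the random case, finitely many cone types only means finitely many \emph{distributions}: the actual values $\zeta_{v_{r-1}}^\gamma(v_r)$ are distinct random variables, one per leaf, and there are exponentially many leaves at level $r\le b\eta^{-1}$. A maximum of exponentially many variables, each satisfying only the polynomial moment bound \eqref{eq:(Green)}, has no uniform moment control, so your Cauchy--Schwarz step does not close. The paper says this explicitly at the start of its proof: ``we cannot estimate the expectation of a maximum as in \eqref{eq:maxop}.''

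The paper's route around this is genuinely different and is the missing idea. Rather than pulling out a maximum, it rewrites $|G^\gamma(v_0,v_r)|^2$ with the diagonal Green function at the \emph{far} end, $|G^\gamma(v_0,v_r)|^2=|\zeta_{v_1}^\gamma(v_0)\cdots\zeta_{v_r}^\gamma(v_{r-1})|^2|G^\gamma(v_r,v_r)|^2$, and bounds $|G^\gamma(v_r,v_r)|\le|\Im\zeta_{v_r}^\gamma(v_{r+1})|^{-1}$ for an outgoing edge $(v_r,v_{r+1})$. The point is that $\zeta_{v_r}^\gamma(v_{r+1})$ lives on a subtree \emph{disjoint} from the one carrying the product $\zeta_{v_1}^\gamma(v_0)\cdots\zeta_{v_r}^\gamma(v_{r-1})$, so by the i.i.d.\ assumption it is \emph{independent} of that product and the expectation factors. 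This replaces the uncontrollable pathwise maximum by a single factor $\expect(|\Im\zeta|^{-1})$, bounded by \eqref{eq:(Green)}. One then needs to reinsert $|G^\gamma(v_r,v_r)|^2$ to recover $|G^\gamma(v_0,v_r)|^2$ before applying \eqref{eq:redu}; this is done by a biased-expectation lower bound $\expect_2(|G^\gamma(v_r,v_r)|^2)\ge C_-$ uniform in $r$, taken from \cite{AiWa12}. A secondary point: your treatment of the projection $\B{1}_I(H)$ (``absorbing a harmless spectral cutoff'') also needs work; the paper invokes \cite{AiWa12}*{Lemma 2.1} to replace $f(H)$ by $f(\lambda)$ up to $o(\eta)$, which is not automatic.
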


The proof also works for states $f(H)\delta_o$ instead of $\B{1}_I(H)\delta_o$, if $f$ is piecewise continuous and supported in $I$. This result appeared before in \cites{AiWa12,Kl96} in the special case $\C{T}=\D{T}_q$, the $(q+1)$-regular tree.

\begin{proof}
The main argument is the same as in Theorem~\ref{thm:pertree}, but there are two technical difficulties. First, we are dealing with the state $\B{1}_I(H)\delta_o$ instead of $\delta_o$. Second, we cannot estimate the expectation of a maximum as in \eqref{eq:maxop}.

Let us begin the proof. Let $f(H)=\B{1}_I(H)$. By \eqref{eq:planche}, we have 
\[
\eta^\beta \expect(\scal{x^\beta}_{f(H)\delta_o,\eta})=\frac{\eta^{\beta+1}}{\pi}\int_{-\infty}^\infty\sum_{v\in\C{T}} |v|^\beta \expect(\abs{(f(H) G^{\lambda+\ii\eta}\delta_o)(v)}^2)\,\dd\lambda.
\]
Essentially the same proof as \cite{AiWa12}*{Lemma 2.1} shows that $f(H)$ can be replaced by $f(\lambda)$. More precisely, we get
\begin{equation}\label{eq:nop}
\eta^\beta\expect(\scal{x^\beta}_{\B{1}_I(H)\delta_o,\eta})= \frac{\eta^{\beta+1}}{\pi}\int_I\sum_{v\in\C{T}} |v|^\beta\expect(\abs{G^{\lambda+\ii\eta}(o,v)}^2)\,\dd\lambda+\osmall(\eta).
\end{equation}
This settles the first difficulty mentioned above.

\begin{note*}
We mention in passing that for the proof of \cite{AiWa12}*{Lemma 2.1}, one doesn't really need a Wegner bound. Inequality \cite{AiWa12}*{(2.7)} holds deterministically since $\frac{\eta}{\pi}\int_{-\infty}^\infty |f(\lambda)|^2\norm{G^{\lambda+\ii\eta}\varphi}^2\,\dd\lambda \leq\frac{\norm{f}_\infty^2}{\pi}\int_{-\infty}^\infty\int_{\spec(H)}\frac{\eta}{(x-\lambda)^2+\eta^2}\,\dd\mu_\varphi(x)\,\dd\lambda =\frac{\norm{f}_\infty^2}{\pi}\int_{\spec(H)}(\frac{\pi}{2}-\frac{-\pi}{2})\,\dd\mu_\varphi(x)=\norm{f}_\infty^2\norm{\varphi}^2$. Similarly, one can avoid the Wegner bound in \cite{AiWa12}*{(2.10)} by writing $Q(\eta)=\expect\int_{\spec(H)} F_\eta(x)\,\dd\mu_\varphi(x)$, where $F_\eta(x)=\frac{1}{\pi}\int_{-\infty}^\infty |f(x)-f(\lambda)|^2\frac{\eta}{(\lambda-x)^2+\eta^2}\,\dd\lambda$, then use dominated convergence. In fact $F_\eta(x)\to 0$ by a well-known calculation, cf.~\cite{Ka04}*{p.~6}, for continuous $f$. The extension to piecewise continuous $f$ is straighforward as long as $\expect(\mu_{\varphi}(\{a\}))=0$ for any $a$, which is weaker than a Wegner bound.
\end{note*}

Using Fatou's lemma and \eqref{eq:lb}, we see as before that the proof is now reduced to showing that $\sum_{(v_2;v_r)}\expect(\abs{G^\gamma(v_0,v_r)}^2)$ stays bounded independently of $r$ and $\eta\in (0,1)$. Let $\gamma\coloneqq\lambda+\ii\eta$. We have
\[
\expect(\abs{G^\gamma(v_0;v_r)}^2)=\expect(|\zeta_{v_1}^\gamma(v_0)\cdots\zeta^\gamma_{v_r}(v_{r-1})|^2|G^\gamma(v_r,v_r)|^2)
\]
but
\[
|G^\gamma(v_r,v_r)|\leq\frac{1}{\sum_{u\sim v_r}|\Im\zeta^\gamma_{v_r}(u)|+\eta}\leq\frac{1}{|\Im\zeta^\gamma_{v_r}(v_{r+1})|},
\]
and $\zeta^\gamma_{v_r}(v_{r+1})$ is independent of $\zeta^\gamma_{v_1}(v_0)\cdots\zeta^\gamma_{v_r}(v_{r-1})$. Indeed, $\zeta_{v_r}^\gamma(v_{r+1})=- G_{\C{T}^{(v_{r+1}|v_{r})}}(v_{r+1},v_{r+1})$ is the Green function of the connected component $\C{T}^{(v_{r+1}|v_r)}$ containing $v_{r+1}$ obtained by removing the directed edge $(v_r,v_{r+1})$ from $\C{T}$, while $\zeta^\gamma_{v_1}(v_0)\cdots\zeta^\gamma_{v_r}(v_{r-1})=-G_{\C{T}^{(v_{r-1}|v_{r})}}(v_0,v_{r-1})$ lives on a disjoint subtree (see \cite{AnS19a}*{Eq.~(2.7)}). Thus,
\[
\expect(|G^\gamma(v_0;v_r)|^2)\leq\expect(|\zeta^\gamma_{v_1}(v_0)\cdots\zeta^\gamma_{v_r}(v_{r-1})|^2)\expect(|\Im\zeta^\gamma_{v_r}(v_{r+1})|^{-1}).
\]
Again by independence, we may insert $\frac{\expect(|\Im\zeta^\gamma_{v_r}(v_{r+1})|)}{\expect(|\Im\zeta^\gamma_{v_r}(v_{r+1})|)}$ and get
\begin{equation}\label{eq:1stred}
\expect(|G^\gamma(v_0;v_r)|^2)\leq\expect(|\zeta^\gamma_{v_1}(v_0)\cdots\zeta^\gamma_{v_r}(v_{r-1})|^2|\Im\zeta^\gamma_{v_r}(v_{r+1})|)\frac{\expect(|\Im\zeta^\gamma_{v_r}(v_{r+1})|^{-1})}{\expect(|\Im\zeta^\gamma_{v_r}(v_{r+1})|)}\,.
\end{equation}
The second fraction is $\frac{\expect(|\Im\zeta^\gamma_{v_0}(v_1)|^{-1})}{\expect(|\Im\zeta^\gamma_{v_0}(v_1)|)}\leq\expect(|\Im\zeta^\gamma_{v_0}(v_1)|^{-1})^2\leq c$ by \eqref{eq:(Green)}. For the first term, now that we gained the $|\Im\zeta^\gamma_{v_r}(v_{r+1})|$, we would like to put back the $|G^\gamma(v_r,v_r)|^2$ we removed. More precisely, we would like to find some $C$ independent of $r$ such that
\begin{align}\label{eq:awinsp}
&\expect(|\zeta^\gamma_{v_1}(v_0)\cdots\zeta^\gamma_{v_r}(v_{r-1})|^2|\Im\zeta^\gamma_{v_r}(v_{r+1})|)\notag\\
&\qquad\leq C\expect(|\zeta^\gamma_{v_1}(v_0)\cdots\zeta^\gamma_{v_r}(v_{r-1})|^2|\Im\zeta^\gamma_{v_r}(v_{r+1})||G^\gamma(v_r,v_r)|^2)\,.
\end{align}

Define $\expect_2(f)\coloneqq\frac{\expect(|\zeta^\gamma_{v_1}(v_0)\cdots\zeta^\gamma_{v_r}(v_{r-1})|^2|\Im\zeta^\gamma_{v_r}(v_{r+1})|f)}{\expect(|\zeta^\gamma_{v_1}(v_0)\cdots\zeta^\gamma_{v_r}(v_{r-1})|^2|\Im\zeta^\gamma_{v_r}(v_{r+1})|)}$. We should thus establish a lower bound $\expect_2(|G^\gamma(v_r,v_r)|^2)\geq C_-$ for some $C_-$ independent of $r$ and $\gamma$. This is nontrivial, but can be done exactly like \cite{AiWa12}*{pp.~8-10}, using \eqref{eq:(Green)}.

Since $|\zeta^\gamma_{v_1}(v_0)\cdots\zeta^\gamma_{v_r}(v_{r-1})G^\gamma(v_r,v_r)|^2=|G^\gamma(v_0,v_r)|^2$, we have finally shown that
\[
\expect(|G^\gamma(v_0,v_r)|^2)\leq cC\expect(|G^\gamma(v_0,v_r)|^2|\Im\zeta_{v_r}^\gamma(v_{r+1})|)\,.
\]
We may now apply \eqref{eq:redu} multiple times to get 
\begin{align*}
\sum_{(v_2;v_r)}\expect(|G^\gamma(v_0,v_r)|^2|\Im\zeta_{v_r}^\gamma(v_{r+1})|) 
&\leq\expect(|G^\gamma(v_0,v_0)|^2|\Im\zeta_{v_0}^\gamma(v_1)|)\\
&=\expect\biggl(\frac{\Im G^\gamma(o,o)|\Im\zeta_o^\gamma(v_1)|}{\sum_{u\sim o}|\Im\zeta_o^\gamma(u)|+\eta}\biggr)\leq\expect(\Im G^\gamma(o,o)).
\end{align*}
Taking $b=\frac{cC}{2\R{D}}$, where $\R{D}$ is the maximal degree, we deduce as in \eqref{eq:aok} that
\[
\eta^{\beta+1}\sum_v|v|^\beta\expect(|G^{\lambda+\ii\eta}(o,v)|^2)\geq\frac{1}{2}\Bigl(\frac{cC}{2\R{D}}\Bigr)^{\beta}\expect[\Im G^{\lambda+\ii\eta}(o,o)-2\eta|G^{\lambda+\ii\eta}(o,o)|^2]. 
\]
By \eqref{eq:nop}, we thus have
\begin{align*}
\liminf_{\eta\downarrow 0}\eta^\beta\expect(\scal{x^\beta}_{\B{1}_I(H)\delta_o,\eta})
&\geq\frac{1}{2}\Bigl(\frac{cC}{2\R{D}}\Bigr)^{\beta}\liminf_{\eta\downarrow 0}\int_I\expect[\Im G^{\lambda+\ii\eta}(o,o)]\,\dd\lambda \\
&= \frac{\pi}{2}\Bigl(\frac{cC}{2\R{D}}\Bigr)^{\beta} \expect[\mu_{\delta_o}(I)]=\frac{\pi}{2}\Bigl(\frac{cC}{2\R{D}}\Bigr)^{\beta}\expect(\norm{\B{1}_I(H)\delta_o}^2)>0.
\end{align*}
The last quantity is strictly positive by \eqref{eq:(Green)} and classical identities. Note indeed that
\[
\expect(\Im G^{\gamma}(o,o)) \geq\expect(|G^{\gamma}(o,o)|^2|\Im\zeta_o^\gamma(u)|)\text{ for any }u\sim o.
\] 
By Cauchy--Schwarz this is $\geq\frac{\expect(|\Im\zeta_o^\gamma(u)|^{1/2})}{\expect(|G^\gamma(o,o)|^{-2})}$. But, by \eqref{eq:rec}-\eqref{eq:zetainv},
\[
|G^\gamma(o,o)|^{-2} \leq 2|\zeta_o^\gamma(u)|^2 + \frac{2}{|\zeta_u^\gamma(o)|^2} \leq 2(|\Im\zeta_u^\gamma(u')|^{-2} + |\Im\zeta_u^\gamma(o)|^{-2}).
\]
Also, $\expect(|\Im\zeta_o^\gamma(u)|^{1/2})\geq\frac{1}{\expect(|\Im\zeta_o^\gamma(u)|^{-1/2})}$. Thus, using \eqref{eq:(Green)},
\[
\expect(\Im G^\gamma(o,o))\geq\frac{1}{4}\min_{u,u'\sim o}\expect(|\Im\zeta_o^\gamma(u)|^{-1/2})^{-1}\expect(|\Im\zeta_u^\gamma(u')|^{-2})^{-1}>0.
\]
The proof is completed.
\end{proof}

\section{Epilogue: limiting distributions} \label{sec:epi}

As the paper was being finalized, we came upon some articles on the topic of \emph{quantum walks} and found interesting connections. The results we derive in this section are inspired by \cites{GJS04,SaSa20,Sa21}. Quantum walks are discrete in nature, traditionally consisting of simple shift operators on $\ell^2(\D{Z})$ with additional degrees of freedom. Schr\"odinger evolutions $\eul^{-\ii tH}$ are a lot more complex than shifts, however the quantum walks have been recently generalized in \cites{SaSa20,Sa21} and the theories now seem to share a good common ground.

Suppose in either continuous or discrete model that $\norm{\psi}=1$. Then $\norm{\eul^{-\ii tH}\psi}=1$ as the operator is unitary, so the entries $\abs{\eul^{-\ii tH}\psi(x)}^2$ define a probability density on $\D{R}^d$ (resp. $G$), and $\norm{x^m\eul^{-\ii tH}\psi}^2$ is just the $2m$-moment of this measure. Our results thus imply that for certain models, all the (normalized, even) moments of this probability measure converge. This naturally evokes the method of moments, see \cite{Bill95}*{Theorem 30.2}. 

We shall here give a fuller probabilistic/quantum mechanical picture of the transport theory by discussing the limiting behavior of this measure. If $H$ is a Schr\"odinger operator on a general graph and $X_t$ is a random vector with distribution $\abs{\eul^{-\ii tH}\psi(v)}^2$, we first show that $\frac{X_t}{t}$ is always asymptotically confined in a compact set. Afterwards we compute exactly the limiting distribution of $\frac{X_t}{t}$ for the periodic models of Section~\ref{sec:periodic}.
\subsection{General facts}
Our first task is to properly modify the probability measure to incorporate the division by $t^m$. This is easy. On $\D{R}^d$, if
\begin{equation} \label{eq:mutpsi}
\dd\mu_t^\psi(x)\coloneqq\abs{\eul^{-\ii tH}\psi(x)}^2\,\dd x,
\end{equation}
and if $Q^{(t)}\colon\D{R}^d\to\D{R}^d$ is defined by $Q^{(t)}(x)\coloneqq x/t$ then
\[
\frac{1}{t^m}\expect_{\mu_t^\psi}(x^m)=\int_{\D{R}^d}\Bigl(\frac{x}{t}\Bigr)^m\dd\mu_t^\psi(x)=\int_{\D{R}^d}\bigl\lvert Q^{(t)}(x)\bigr\rvert^m\,\dd\mu_t^\psi(x)=\int_{\D{R}^d}y^m\,\dd Q_\star^{(t)}\mu_t^\psi(y),
\]
where $Q_\star^{(t)}\mu_t^\psi$ is the image measure, see e.g. \cite{Te14}*{Appendix A.7}. Thus, $\frac{1}{t^m}\expect_{\mu_t^\psi}(x^m)=\expect_{Q_\star^{(t)}\mu_t^\psi}(x^m)$ and we have a statement about the convergence of moments of the measure
\begin{equation} \label{eq:nutpsi}
\nu_t^\psi\coloneqq Q_\star^{(t)}\mu_t^\psi\,.
\end{equation}

This measure has a very natural meaning. Suppose $X_t$ is a random vector in $\D{R}^d$ with distribution $\abs{\eul^{-\ii tH}\psi(x)}^2\,\dd x$. Then for $B\subset\D{R}^d$, we have $\nu_t^\psi(B)= \mu_t^\psi(tB)=\prob(X_t\in tB)=\prob(\frac{X_t}{t}\in B)$. In other words, $\nu_t^\psi$ is the distribution of $\frac{X_t}{t}$. Note that $\nu_t^\psi$ is also a probability measure since $\nu_t^\psi(\D{R}^d)=\mu_t^\psi(\D{R}^d)=1$.

Similarly, on a graph $G\subset\D{R}^d$, if $\psi$ is normalized, we define $\mu_t^\psi(v)\coloneqq\abs{\eul^{-\ii tH}\psi(v)}^2$, then $Q^{(t)}\colon G\to\D{R}^d$ by $Q^{(t)}(v)\coloneqq v/t$, and $\nu_t^\psi\coloneqq Q_\star^{(t)}\mu_t^\psi$. Note that $\nu_t^\psi$ is a measure on $\D{R}^d$ while $\mu_t^\psi$ is a measure on $G$.

The map $\psi\mapsto\nu_t^\psi$ enjoys a nice form of uniform continuity:

\begin{lemma}\label{lem:nutcont}
For any measurable $B$ we have $\abs{\nu_t^\psi(B)-\nu_t^\varphi(B)} \leq(\norm{\psi}+\norm{\varphi})\norm{\psi-\varphi}$.
\end{lemma}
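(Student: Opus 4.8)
The plan is to rewrite each measure of a set as the squared norm of a spatially truncated evolved state, and then exploit the elementary factorization $a^2-b^2=(a-b)(a+b)$ together with the unitarity of $\eul^{-\ii tH}$.

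First I would observe that, by the definitions \eqref{eq:mutpsi}--\eqref{eq:nutpsi}, for any measurable $B$,
\[
\nu_t^\psi(B)=\mu_t^\psi(tB)=\int_{tB}\abs{\eul^{-\ii tH}\psi(x)}^2\,\dd x=\norm{\chi_{tB}\eul^{-\ii tH}\psi}^2,
\]
where $\chi_{tB}$ denotes multiplication by the indicator function of $tB$; in the graph case one replaces $tB$ by $G\cap tB$ and the integral by a sum, but the identity is the same. The analogous formula holds for $\varphi$.

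Next, setting $a\coloneqq\norm{\chi_{tB}\eul^{-\ii tH}\psi}$ and $b\coloneqq\norm{\chi_{tB}\eul^{-\ii tH}\varphi}$, I would write $\abs{\nu_t^\psi(B)-\nu_t^\varphi(B)}=\abs{a^2-b^2}=\abs{a-b}\,(a+b)$. The reverse triangle inequality gives $\abs{a-b}\leq\norm{\chi_{tB}\eul^{-\ii tH}(\psi-\varphi)}$, and since $\chi_{tB}$ is a contraction and $\eul^{-\ii tH}$ is unitary, this is $\leq\norm{\psi-\varphi}$. For the same reason $a\leq\norm{\psi}$ and $b\leq\norm{\varphi}$, hence $a+b\leq\norm{\psi}+\norm{\varphi}$. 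Combining the two bounds yields $\abs{\nu_t^\psi(B)-\nu_t^\varphi(B)}\leq(\norm{\psi}+\norm{\varphi})\norm{\psi-\varphi}$, as claimed.

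There is essentially no obstacle here: the only points needing (minimal) care are the identification of $\nu_t^\psi(B)$ with a truncated $L^2$-norm — which is immediate from the image-measure definition of $\nu_t^\psi$ — and the remark that the argument is verbatim the same in the discrete (graph) setting, with the integral over $tB$ replaced by the sum over $G\cap tB$.
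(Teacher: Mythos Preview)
Your proof is correct and follows essentially the same approach as the paper. The only cosmetic difference is that the paper applies the factorization $\abs{a}^2-\abs{b}^2$ pointwise to the integrand and then uses Cauchy--Schwarz, whereas you apply it directly to the truncated norms and use the reverse triangle inequality; both routes rest on the same ingredients (the $a^2-b^2$ factorization, contractivity of $\chi_{tB}$, and unitarity of $\eul^{-\ii tH}$).
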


\begin{proof}
We write the proof on $\D{R}^d$, the same works on graphs. We have
\begin{align*}
\abs{\nu_t^\psi(B)-\nu_t^{\varphi}(B)} 
&=|\mu_t^{\psi}(tB)-\mu_t^{\varphi}(tB)|
=\Bigl\lvert\int_{tB}\left(\abs{\eul^{-\ii tH}\psi(x)}^2-\abs{\eul^{-\ii tH}\varphi(x)}^2\right)\dd x\Bigr\rvert\\
&\leq\int_{\D{R}^d}\left(\abs{\eul^{-\ii tH}\psi(x)}+\abs{\eul^{-\ii tH}\varphi(x)}\right)\abs{\eul^{-\ii tH}\psi(x)-\eul^{-\ii tH}\varphi(x)}\,\dd x\\
&\leq(\norm{\eul^{-\ii tH}\psi}_2+\norm{\eul^{-\ii tH}\varphi}_2)\norm{\eul^{-\ii tH}(\psi-\varphi)}_2=(\norm{\psi}_2+\norm{\varphi}_2)\norm{\psi-\varphi}_2.\qedhere
\end{align*}
\end{proof}

To understand this measure further, we state a result which is always valid, regardless of the transport being ballistic or not.

\begin{proposition}
Let $H=\C{A}+V$ be a Schr\"odinger operator on a countable graph $G\subset\D{R}^d$ with maximal degree $\leq\R{D}$. Assume $\abs{x-y}_2\leq\R{L}$ for any $x\sim y$ in $G$, where $\abs{\,\cdot\,}_2$ is the Euclidean norm. Assume the potential $V$ is bounded. Suppose $\psi\in\ell^2(G)$ satisfies $\norm{\psi}=1$. Let $\Lambda_{\R{LD}}\coloneqq[-\R{LD},\R{LD}]^d$. Then $\lim\limits_{t\to+\infty} \nu_t^\psi(\Lambda_{\R{LD}})=1$.
\end{proposition}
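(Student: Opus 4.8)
The plan is to reduce to finitely supported states and then to bound the escaping mass by a Combes--Thomas type exponential weight, the point being that the diagonal potential $V$ commutes with the coordinate multiplications, so transport is driven solely by $\C{A}$ and its ``velocity'' in each coordinate is controlled by $\R{L}\R{D}$. First I would reduce to $\psi$ of finite support: given $\varepsilon>0$ pick a finitely supported $\varphi$ with $\norm{\psi-\varphi}<\varepsilon$; by Lemma~\ref{lem:nutcont}, $\abs{\nu_t^\psi(\Lambda_{\R{LD}})-\nu_t^\varphi(\Lambda_{\R{LD}})}\le(\norm{\psi}+\norm{\varphi})\varepsilon\le 3\varepsilon$ for all $t$, so it is enough to prove $\nu_t^\varphi(\Lambda_{\R{LD}})\to\norm{\varphi}^2$ for finitely supported $\varphi$. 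Writing $\psi_t\coloneqq\eul^{-\ii tH}\varphi$ and using that $\Lambda_{\R{LD}}^c=\accol{x\in\D{R}^d:\abs{x_j}>\R{LD}\text{ for some }j}$ while $\nu_t^\varphi$ is the image of $\mu_t^\varphi$ under $v\mapsto v/t$, one gets
\[
\nu_t^\varphi(\Lambda_{\R{LD}}^c)\le\sum_{j=1}^d\mu_t^\varphi\bigl(\accol{v:\abs{v_j}>\R{LD}\,t}\bigr)=\sum_{j=1}^d\norm{\chi_{\accol{\abs{x_j}>\R{LD}t}}\psi_t}^2,
\]
where $x_j$ denotes the $j$-th coordinate function (a multiplication operator, with $\psi_t$ in its domain by Theorem~\ref{thm:formel}, since $\varphi$ is finitely supported). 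It thus suffices to show that each summand tends to $0$.

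Next I would run the weighted estimate. Fix $j$ and $a>0$ and set $H_a\coloneqq\eul^{ax_j}H\eul^{-ax_j}$. Since $V$ commutes with $x_j$, only $\C{A}$ is conjugated, with matrix entries $\C{A}(v,w)\eul^{a(v_j-w_j)}$; as $\abs{v_j-w_j}\le\abs{v-w}_2\le\R{L}$ for $v\sim w$ and each vertex has at most $\R{D}$ neighbours, a Schur test shows $H_a$ is bounded on $\ell^2(G)$ and that its skew part $\Im H_a=\tfrac1{2\ii}(H_a-H_a^\ast)$, which has entries $\tfrac1\ii\C{A}(v,w)\sinh\bigl(a(v_j-w_j)\bigr)$, satisfies $\norm{\Im H_a}\le\R{D}\sinh(a\R{L})$. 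Expanding $\psi_t=\sum_{k\ge0}\frac{(-\ii t)^k}{k!}H^k\varphi$ (norm-convergent, $H$ bounded), applying $\eul^{ax_j}$ termwise, and using $\eul^{ax_j}H^k\eul^{-ax_j}=H_a^k$ on finitely supported vectors together with $\sum_k\frac{t^k}{k!}\norm{H_a^k\eul^{ax_j}\varphi}<\infty$, one obtains $\eul^{ax_j}\psi_t=\eul^{-\ii tH_a}(\eul^{ax_j}\varphi)$. A Grönwall estimate (the self-adjoint part of the generator $-\ii H_a$ is $\Im H_a$) gives $\norm{\eul^{-\ii tH_a}}\le\eul^{t\norm{\Im H_a}}$, hence
\[
\norm{\eul^{ax_j}\psi_t}\le\eul^{t\R{D}\sinh(a\R{L})}\norm{\eul^{ax_j}\varphi}\le\eul^{t\R{D}\sinh(a\R{L})+ac_j}\norm{\varphi},
\qquad c_j\coloneqq\max\accol{\abs{v_j}:v\in\mathrm{supp}\,\varphi},
\]
and symmetrically the same holds with $\eul^{-ax_j}$.

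Finally, Chebyshev's inequality $\norm{\chi_{\accol{\pm x_j>R}}\psi_t}^2\le\eul^{-2aR}\norm{\eul^{\pm ax_j}\psi_t}^2$ yields $\mu_t^\varphi(\accol{\abs{v_j}>R})\le 2\eul^{2ac_j}\norm{\varphi}^2\,\eul^{-2aR+2t\R{D}\sinh(a\R{L})}$. Taking $R=(1+\delta)\R{LD}\,t$ for $\delta>0$ and choosing $a=a(\delta)>0$ small enough that $\R{D}\sinh(a\R{L})<(1+\delta)\,a\R{LD}$ (possible because $\sinh u/u\to1$ as $u\to0$), the exponent is bounded by $-2a\delta\R{LD}\,t+2ac_j$, so $\mu_t^\varphi(\accol{\abs{v_j}>(1+\delta)\R{LD}\,t})\to0$ exponentially fast, for every $\delta>0$; summing over $j$ gives $\nu_t^\varphi(\Lambda_{(1+\delta)\R{LD}}^c)\to0$, which already establishes the asymptotic confinement in a compact set. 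The delicate point — and the step I expect to be the main obstacle, the exponential weight being routine — is to upgrade this to the exact box $\Lambda_{\R{LD}}$, i.e.\ to control the algebraically small amount of mass that $\psi_t$ carries in the transition layer just beyond the ballistic front $\abs{v_j}=\R{LD}\,t$; here I would combine the exponential decay above with moment information, along the lines $\mu_t^\varphi(\accol{\abs{v_j}>\R{LD}t})\le(\R{LD}t)^{-2m}\norm{x_j^m\psi_t}^2$, using that for the models at hand $\norm{x_j^m\psi_t}^2/(\R{LD}t)^{2m}$ converges to a quantity that vanishes as $m\to\infty$ (as in Theorem~\ref{thm:freediscrete} and its analogues), so that letting $m\to\infty$ forces the front mass to $0$.
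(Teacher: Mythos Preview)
Your Combes--Thomas estimate is correct and gives confinement to $\Lambda_{(1+\delta)\R{LD}}$ for every $\delta>0$; this is a genuine alternative to the paper's method, which uses no exponential weights at all. The paper works entirely with the polynomial moment bound of Theorem~\ref{thm:formel} (via Remark~\ref{rem:oddup}): converting graph distance to Euclidean distance through $|x-o|_2\le\R{L}\,|x|$ yields $\limsup_{t}\expect_{\nu_t^\psi}(|x^m|_2^2)\le(\R{LD})^{2m}$ for all $m$, and Chebyshev on the $\delta$-enlarged box $K_\delta=[-\R{LD}-\delta,\R{LD}+\delta]^d$ then gives $\limsup_t\nu_t^\psi(K_\delta^c)\le\bigl(\tfrac{\R{LD}}{\R{LD}+\delta}\bigr)^{2m}\to0$ as $m\to\infty$; finally $\delta\downarrow0$. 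So the moment bound you invoke only at the end already suffices for the whole argument, and the Combes--Thomas machinery, while valid, is redundant.

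The real gap is in your last step. You apply Chebyshev on the \emph{exact} box, obtaining $\mu_t^\varphi(\{|v_j|>\R{LD}t\})\le(\R{LD}t)^{-2m}\norm{x_j^m\psi_t}^2$, and then claim the right-hand side tends (in $t$) to a quantity vanishing as $m\to\infty$, citing Theorem~\ref{thm:freediscrete} ``and its analogues''. But Theorem~\ref{thm:freediscrete} treats only $\C{A}$ on $\D{Z}^d$; the Proposition is stated for an arbitrary countable graph with arbitrary bounded $V$, where no exact-limit formula is available. The only general input is Theorem~\ref{thm:formel}, and it gives merely $\limsup_t\norm{x_j^m\psi_t}^2/(\R{LD}t)^{2m}\le\norm{\varphi}^2$, a constant in $m$---so Chebyshev on the exact box yields nothing. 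The paper avoids this precisely by inserting the $\delta$-slack \emph{before} the Chebyshev step, turning the harmless bound $(\R{LD})^{2m}$ into the ratio $(\R{LD}/(\R{LD}+\delta))^{2m}$ that does go to zero. You correctly flagged the passage to the exact box as the delicate point, but your proposed fix is model-specific and does not cover the stated generality.
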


Physically, this means that if $X_t$ is a random vector with distribution $\abs{\eul^{-\ii tH}\psi(v)}^2$, $v\in G$, then $X_t/t$ is asymptotically confined in the compact set $\Lambda_{\R{LD}}$, which is independent of $V$, regardless of the transport behavior. If $G=\D{Z}^d$ then $\R{L}=1$ and $\R{D}=2d$.

A quantum walk analog of this result previously appeared in \cite{Sa21}*{Theorem 3.10} for ``smooth'' $\psi$, with a different proof.

\begin{proof}
This follows from the moment upper bounds, Theorem~\ref{thm:formel}. Fix some $o\in G$ and denote $\abs{x}\coloneqq d(x,o)$ where $d(\,\cdot\,,\,\cdot\,)$ is the graph distance. We first assume that $\norm{x^m\psi}<\infty$ for all $m$. We then know from Remark~\ref{rem:oddup} that 
\begin{equation}\label{eq:disup}
\limsup_{t\to+\infty}\frac{1}{t^{k}}\sum_{x\in G}\abs{x}^{k}|\eul^{-\ii tH}\psi(x)|^2\leq\R{D}^{k}
\end{equation}
for any $k$. To pass from the graph distance to the Euclidean distance $\abs{\,\cdot\,}_2$, we note that given $x\in G$, if $\abs{x}=n$, then there is a path $(v_0,\dots,v_n)$ such that $v_0=o$ and $v_n=x$. So $\abs{x-v_0}_2\leq\sum_{j=0}^{n-1}|v_{j+1}-v_j|_2\leq n\R{L}$ by our assumptions. Thus, $\abs{x-v_0}_2\leq\R{L}\abs{x}$. Since $\abs{x^m}_2^2=\sum_{j=1}^d x_j^{2m} \leq(\sum_{j=1}^d x_j^2)^m=\abs{x}_2^{2m}\leq(\abs{x-v_0}_2+|v_0|_2)^{2m}$, we get $\abs{x^m}_2^2\leq(\R{L}\abs{x}+|o|_2)^{2m}$. It now follows from \eqref{eq:disup} that
\[
\limsup_{t\to+\infty}\expect_{\nu_t^\psi}(\abs{x^m}_2^2)=\limsup_{t\to+\infty} \frac{1}{t^{2m}}\expect_{\mu_t^\psi}(\abs{x^m}_2^2)=\limsup_{t\to+\infty}\frac{1}{t^{2m}}\sum_{x\in G} \abs{x^m}_2^2|\eul^{-\ii tH}\psi(x)|^2 \leq(\R{LD})^{2m}.
\]

Now let $K=\Lambda_{\R{LD}}$, $K_\delta=\croch{-\R{LD}-\delta,\R{LD}+\delta}^d$ and $B\subset K_\delta^c$. Then $x\in B$ implies $|x_i|>\R{LD}+\delta$ for some $i$, which implies $\abs{x^m}_2^2=\sum_{j=1}^d x_j^{2m} > (\R{LD}+\delta)^{2m}$. Thus, $\expect_{\nu_t^\psi}(\abs{x^m}_2^2) \geq\int_B \abs{x^m}_2^2\,\dd\nu_t^\psi(x)\geq(\R{LD}+\delta)^{2m}\nu_t^\psi(B)$. It follows that $\limsup_{t\to+\infty}\nu_t^\psi(B) \leq(\frac{\R{LD}}{\R{LD}+\delta})^{2m}$ for all $m$. Taking $m\to \infty$ yields $\lim_{t\to+\infty} \nu_t^\psi(B)=0$. As $\delta$ is arbitrary and $\nu_t^\psi$ is a probability measure, this shows that $\lim_{t\to+\infty} \nu_t^\psi(K)=1$. This completes the proof when $\norm{x^m\psi}<\infty$ for all $m$. 

For general $\psi$, we argue by approximation. In fact, if $\psi_N\to \psi$, $\psi_N$ has compact support, then $|\nu_t^\psi(K)-1|\leq |\nu_t^{\psi_N}(K)-1| + |\nu_t^\psi(K) - \nu_t^{\psi_N}(K)|$. Using Lemma~\ref{lem:nutcont} then taking $t\to+\infty$, we get $\limsup_{t\to+\infty} |\nu_t^\psi(K)-1| \leq(\norm{\psi_N} + \norm{\psi})\norm{\psi-\psi_N}$. Since $\norm{\psi_N}\to\norm{\psi}$, the claim follows by taking $N\to\infty$.
\end{proof}

\begin{remark}
One may be tempted to deduce the same result in case of the continuous Schr\"odinger operator with smooth potential, evoking Theorem~\ref{thm:momconts}. This does not work however because the constant $C_m$ in Theorem~\ref{thm:momconts} depends on $m$, in contrast to the discrete case. This is not just an artefact, the result is wrong in general in the continuous case. In fact, for the simplest case $H=-\Delta$ on $\D{R}^d$, we have by \eqref{eq:semicts},
\[
\nu_t^\psi(B)=\mu_t^\psi(tB)=\int_{tB}|\eul^{\ii t\Delta}\psi(x)|^2\,\dd x=\frac{1}{(2t)^d}\int_{tB}\Big|\widehat{\phi}_t\Big(\frac{x}{2t}\Big)\Big|^2\,\dd x=\int_{B/2}|\widehat{\phi}_t(y)|^2\,\dd y
\]
for any measurable $B\subset\D{R}^d$, with $\phi_t(y)\coloneqq\eul^{\ii y^2/4t}\psi(y)$. As $t\to+\infty$ we have $\phi_t(y)\to \psi(y)$ pointwise, so $\phi_t\to \psi$ in $L^2$ by dominated convergence, and $\widehat{\phi}_t\to \hat\psi$ in $L^2$ by Parseval. In particular, $\norm{\B{1}_B(\widehat{\phi}_t-\hat\psi)}_2\to 0$. We thus get for any measurable $B\subset\D{R}^d$ and $\psi\in L^2(\D{R}^d)$,
\begin{equation}\label{eq:freectsden}
\nu_t^\psi(B) \to \int_{B/2} |\hat\psi(y)|^2\,\dd y.
\end{equation}
This has no reason to vanish for $B$ outside some compact region. For example, if $\psi(x)=\eul^{-x^2/2}$, then $\hat\psi(y)=\eul^{-y^2/2}$ and $\lim_{t\to+\infty}\nu_t^\psi(B)>0$ for any $B$ of positive measure.
\end{remark}

On a different note, \eqref{eq:freectsden} also implies the following result.

\begin{lemma}\label{lem:feectslim}
Consider $H=-\Delta$ on $\D{R}^d$. Then for any $\psi\in L^2(\D{R}^d)$, $\norm{\psi}_2=1$, $\nu_t^\psi$ converges weakly to the AC measure $\nu_\infty^\psi$ given by $\dd\nu_\infty^\psi(x)=2^{-d}|\hat\psi(\frac{x}{2})|^2\,\dd x$.
\end{lemma}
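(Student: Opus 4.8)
The plan is to exploit the explicit dispersive formula \eqref{eq:semicts} together with the $L^2$-convergence $\hat\phi_t\to\hat\psi$ already established in the Remark preceding the statement. In fact the quickest route is to observe that the displayed limit \eqref{eq:freectsden} in that Remark, after the change of variables $y=x/2$, asserts $\nu_t^\psi(B)\to\nu_\infty^\psi(B)$ for \emph{every} Borel set $B\subset\D{R}^d$; setwise convergence of finite measures is stronger than weak convergence (apply it to closed and to open sets and invoke the portmanteau theorem), so the lemma follows at once. Nevertheless I would spell out the direct argument against bounded continuous test functions, since it is barely longer and more transparent.

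So, fix $f\in C_b(\D{R}^d)$. Unwinding \eqref{eq:mutpsi}--\eqref{eq:nutpsi}, I would write $\int f\,\dd\nu_t^\psi=\int_{\D{R}^d}f(x/t)\,\abs{\eul^{\ii t\Delta}\psi(x)}^2\,\dd x$, then insert the representation \eqref{eq:semicts}, which gives $\abs{\eul^{\ii t\Delta}\psi(x)}^2=(2t)^{-d}\bigl\lvert\hat\phi_t(x/2t)\bigr\rvert^2$ with $\phi_t(y)=\eul^{\ii y\cdot y/4t}\psi(y)$. The change of variables $y=x/(2t)$ collapses this to the clean identity $\int f\,\dd\nu_t^\psi=\int_{\D{R}^d}f(2y)\,\bigl\lvert\hat\phi_t(y)\bigr\rvert^2\,\dd y$.

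It then remains to pass to the limit. As in the preceding Remark, $\phi_t\to\psi$ pointwise with $\abs{\phi_t}=\abs{\psi}$, hence $\phi_t\to\psi$ in $L^2$ by dominated convergence, hence $\hat\phi_t\to\hat\psi$ in $L^2$ by Plancherel; consequently $\bigl\lVert\,\abs{\hat\phi_t}^2-\abs{\hat\psi}^2\,\bigr\rVert_1\le\bigl\lVert\hat\phi_t-\hat\psi\bigr\rVert_2\bigl(\norm{\hat\phi_t}_2+\norm{\hat\psi}_2\bigr)\to0$. Since $y\mapsto f(2y)$ is bounded and measurable, the integrals converge: $\int f\,\dd\nu_t^\psi\to\int_{\D{R}^d}f(2y)\,\abs{\hat\psi(y)}^2\,\dd y$, and a last substitution $x=2y$ identifies the limit as $\int_{\D{R}^d}f(x)\,2^{-d}\abs{\hat\psi(x/2)}^2\,\dd x=\int f\,\dd\nu_\infty^\psi$; that $\nu_\infty^\psi$ is a probability measure is Plancherel again, using $\norm{\psi}_2=1$. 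There is no genuine obstacle here: the one point meriting care is that $f$ is only bounded continuous, not integrable, so one cannot dominate $f(2\cdot)\abs{\hat\phi_t}^2$ by a fixed $L^1$ function uniformly in $t$ — which is exactly why I would upgrade $\hat\phi_t\to\hat\psi$ to the $L^1$-convergence of the squared moduli, absorbing the boundedness of $f$ cleanly.
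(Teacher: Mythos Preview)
Your proposal is correct, and your first paragraph is exactly the paper's proof: invoke the portmanteau theorem on the setwise convergence \eqref{eq:freectsden} already established in the preceding Remark. The additional direct test-function argument you spell out is also valid and amounts to re-deriving the same $L^1$-convergence of $|\hat\phi_t|^2$ that underlies \eqref{eq:freectsden}, so it is a more explicit variant of the same idea rather than a different route.
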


\begin{proof}
This follows immediately from the portemanteau theorem.
\end{proof}

Lemma~\ref{lem:feectslim} says that if $X_t$ is a random vector with distribution $|\eul^{\ii t\Delta}\psi|^2\,\dd x$ then $\frac{X_t}{t}$ converges in distribution to a random vector $Y$ with distribution $2^{-d}|\hat\psi(\frac{x}{2})|^2\,\dd x$.

\subsection{Periodic models}
We start with the discrete case. Given a periodic graph $\Gamma$ as in Section~\ref{subsec:perdiscrete}, and a $\D{Z}_{\F{a}}^d$-periodic Schr\"odinger operator
\[
H=\C{A}+Q,
\]
let $\psi\in \ell^2(\Gamma)$ satisfy $\norm{\psi}=1$. Let $\Omega\coloneqq\D{T}_\ast^d\times\accol{1,\dots,\nu}$ and define $\mu^\psi$ on $\Omega$ by
\begin{equation}\label{eq:muinftyper}
\int_\Omega f(\theta,n)\,\dd\mu^\psi(\theta,n)=\int_{\D{T}_\ast^d}\sum_{n=1}^\nu f(\theta,n)\norm{P_n(\theta_{\F{b}})(U\psi)_{\theta_{\F{b}}}}^2 \dd\theta.
\end{equation}
This is a probability measure since 
\[
\mu^\psi(\Omega)=\int_{\D{T}_\ast^d}\sum_{n=1}^\nu\norm{P_n(\theta_{\F{b}})(U\psi)_{\theta_{\F{b}}}}^2 \dd\theta =\norm{U\psi}^2=\norm{\psi}^2=1.
\]
Next, consider the map $h\colon\Omega \to \D{R}^d$ given by $h(\theta,n)\coloneqq\frac{1}{2\pi}\nabla_{\theta_{\F{a}}} E_n(\theta_{\F{b}})$ and let $\nu_\infty^\psi$ be the image measure on $\D{R}^d$ given by
\begin{equation}\label{eq:nuinftyper}
\nu_\infty^\psi\coloneqq h_\star\mu^\psi.
\end{equation}

\begin{theorem}\label{thm:perlim}
Let $\Gamma$ be a periodic discrete graph endowed with a periodic Schr\"odinger operator $H$. Let $X_t$ be a random vector with distribution $\abs{\eul^{-\ii tH}\psi(v)}^2$, $v\in \Gamma$. Then $\frac{X_t}{t}$ converges to a random vector $Y$ on $\D{R}^d$ with distribution $\nu_\infty^\psi$ given by \eqref{eq:muinftyper}-\eqref{eq:nuinftyper}.
\end{theorem}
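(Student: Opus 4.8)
The plan is to prove that $\nu_t^\psi$ converges weakly to $\nu_\infty^\psi$ by the method of moments, treating general $\psi$ by approximation with compactly supported states.

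First I would reduce to compactly supported $\psi$. Given $\psi$ with $\norm\psi=1$, pick normalized compactly supported $\psi_N\to\psi$ in $\ell^2(\Gamma)$ (e.g.\ normalized truncations). Lemma~\ref{lem:nutcont} gives $\abs{\nu_t^\psi(B)-\nu_t^{\psi_N}(B)}\le 2\norm{\psi-\psi_N}$ for all $t$ and all $B$, and the same bound holds for the limiting measures: from \eqref{eq:muinftyper} one gets $\abs{\mu^\psi(A)-\mu^{\psi_N}(A)}\le 2\norm{\psi-\psi_N}$ using $\abs{\norm a^2-\norm b^2}\le\norm{a-b}(\norm a+\norm b)$, two applications of Cauchy--Schwarz, the identity $\sum_n\norm{P_n(\theta_{\F{b}})g}^2=\norm g^2$ and unitarity of $U$, hence $\abs{\nu_\infty^\psi(B)-\nu_\infty^{\psi_N}(B)}\le 2\norm{\psi-\psi_N}$ since $\nu_\infty^\bullet=h_\star\mu^\bullet$. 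For any bounded continuous $g$,
\[
\limsup_{t\to+\infty}\Bigl\lvert\int g\,\dd\nu_t^\psi-\int g\,\dd\nu_\infty^\psi\Bigr\rvert\le C\norm g_\infty\norm{\psi-\psi_N}+\limsup_{t\to+\infty}\Bigl\lvert\int g\,\dd\nu_t^{\psi_N}-\int g\,\dd\nu_\infty^{\psi_N}\Bigr\rvert ,
\]
so it suffices to treat compactly supported $\psi$, for which $\norm{x^{\vec k}\psi}<\infty$ for every multi-index $\vec k$.

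Next, for such $\psi$ and a fixed $\xi\in\D{R}^d$, I would rerun the proof of Theorem~\ref{thm:dicreper} with the ``directional'' position operator $x_\xi\colon\phi\mapsto(\xi\cdot v)\phi(v)$ in place of $x^m$. Here $x_\xi^m\eul^{-\ii tH}\psi\in\ell^2(\Gamma)$ by Theorem~\ref{thm:formel}; identity \eqref{eq:disfouper} becomes $(Ux_\xi^m\phi)_{\theta_{\F{b}}}=\frac{\ii^m}{(2\pi)^m}(\xi\cdot\nabla_{\theta_{\F{a}}})^m(U\phi)_{\theta_{\F{b}}}$, where $\xi\cdot\nabla_{\theta_{\F{a}}}=\sum_l(\xi\cdot\F{a}_l)\partial_{\theta_l}$ is a single scalar constant-coefficient differential operator in the $\theta_l$; and the Leibniz/Fa\`a di Bruno expansion of \eqref{eq:disna} with its $\ord(t^{m-1})$ remainder controlled by dominated convergence is literally the one used there (and even simpler, the fibre sums $\sum_{n=1}^\nu$ being finite). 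As in \eqref{eq:disna}--\eqref{eq:dislim} this yields the strong limit
\[
\frac{Ux_\xi^m(t)\psi}{t^m}\xrightarrow[t\to+\infty]{}\Bigl(\int_{\D{T}_\ast^d}^\oplus\sum_{n=1}^\nu\Bigl(\frac{\xi\cdot\nabla_{\theta_{\F{a}}}E_n(\theta_{\F{b}})}{2\pi}\Bigr)^mP_n(\theta_{\F{b}})\,\dd\theta\Bigr)U\psi ,
\]
and pairing with $U\psi$, and using $\mu_t^\psi(v)=\abs{\eul^{-\ii tH}\psi(v)}^2$, \eqref{eq:muinftyper}, $h(\theta,n)=\frac1{2\pi}\nabla_{\theta_{\F{a}}}E_n(\theta_{\F{b}})$ and \eqref{eq:nuinftyper}, gives
\[
\expect_{\nu_t^\psi}\bigl((\xi\cdot y)^m\bigr)\xrightarrow[t\to+\infty]{}\expect_{\nu_\infty^\psi}\bigl((\xi\cdot y)^m\bigr)\qquad\text{for every }m\in\D{N} .
\]

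Finally, each band $\theta_{\F{b}}\mapsto E_n(\theta_{\F{b}})$ is Lipschitz on the compact torus (Weyl's inequality for the analytic Hermitian family $H(\theta_{\F{b}})$), so $h$ is bounded, $\nu_\infty^\psi$ is compactly supported, and so is the law of $\xi\cdot Y$, which is therefore determined by its moments. By the one-dimensional method of moments \cite{Bill95}*{Theorem 30.2}, $\xi\cdot(X_t/t)\Rightarrow\xi\cdot Y$ for every $\xi$, and the Cram\'er--Wold device upgrades this to $X_t/t\Rightarrow Y$; together with the reduction in the second paragraph this proves the theorem. The only genuinely new ingredient is the directional refinement of Theorem~\ref{thm:dicreper} in the third paragraph, and since $\xi\cdot\nabla_{\theta_{\F{a}}}$ is one scalar operator in the $\theta_l$ it is a verbatim rerun of the argument given there; the remaining pieces --- normalized truncation, moment-determinacy of compactly supported laws, Cram\'er--Wold --- are standard, and one could alternatively invoke tightness of $\{\nu_t^\psi\}$ (from the Proposition above) together with the moment convergence to identify every subsequential weak limit with $\nu_\infty^\psi$.
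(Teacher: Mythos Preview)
Your proposal is correct and follows essentially the same route as the paper: reduce to nice $\psi$ by approximation (the paper invokes Remark~\ref{rem:approxlim} for this), compute the limits of the directional moments $\expect_{\nu_t^\psi}((\xi\cdot y)^m)$ by rerunning the Floquet argument of Theorem~\ref{thm:dicreper}, and conclude via moment-determinacy of compactly supported laws plus Cram\'er--Wold. Your treatment of the directional moment via the single scalar operator $\xi\cdot\nabla_{\theta_{\F{a}}}$ is a mild streamlining of the paper's version, which instead expands $(\sum_i c_i\alpha_i(x))^m$ by the multinomial theorem and handles each monomial $\alpha_1^{l_1}\cdots\alpha_d^{l_d}$ separately through $\partial_{\theta_1}^{l_1}\cdots\partial_{\theta_d}^{l_d}$ before recombining.
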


This agrees with Theorem~\ref{thm:dicreper}, as $\expect_{\nu_\infty^\psi}(\abs{x^m}_2^2)=\sum_{i=1}^d \expect_{\nu_\infty^\psi}(x_i^{2m})=\sum_{i=1}^d\expect_{\mu^\psi}(h_i^{2m}(\theta,n))$ is the RHS of \eqref{eq:dicreper}.

We prove this result using our analysis in Section~\ref{sec:periodic} and the idea from \cite{GJS04} to use the Cram\'er-Wold device. Perhaps the theory in \cite{Sa21} could also be adapted to prove a theorem of this kind.
 
\begin{proof}
We may assume $\norm{x^m\psi}<\infty$ for all $m$, see Remark~\ref{rem:approxlim}.

We first note that $\nu_\infty^\psi$ is supported on a compact set since $\Omega$ is compact and $h$ is continuous. In one dimension this implies that $Y$ is characterized by its moments. In higher dimensions and lattices things are a bit more complicated. 

Denote $x=\sum_{i=1}^d \alpha_i(x)\F{a}_i$ and let $p_m^{(c)}(x)\coloneqq\bigl(\sum_{i=1}^d c_i\alpha_i(x)\bigr)^m$ where $c=(c_1,\dots,c_d)$. Suppose we showed that $\expect_{\nu_t^\psi}(p_m^{(c)}(x))\to\expect_{\nu_\infty^\psi}(p_m^{(c)}(x))$ for any $m=1,2,\dots$ and $c$. Denoting $x=(x_1,\dots,x_d)$ and $q_m^{(\ell)}(x)\coloneqq\bigl(\sum_{j=1}^d \ell_j x_j\bigr)^m$, since $x_j=x\cdot\F{e}_j=\sum_{i=1}^d\alpha_i(x)\F{a}_i\cdot\F{e}_j$, we get 
\[
q_m^{(\ell)}(x)=\sum_{i,j=1}^d \ell_j \F{a}_i\cdot\F{e}_j\alpha_i(x)=\sum_{i=1}^d c_i\alpha_i(x)=p_m^{(c)}(x)
\]
for $c_i\coloneqq\sum_{j=1}^d\ell_j \F{a}_i\cdot\F{e}_j$. Consequently, we have $\expect_{\nu_t^\psi}(q_m^{(\ell)}(x)) \to \expect_{\nu_\infty^\psi}(q_m^{(\ell)}(x))$ for any $m$ and $\ell$, so $\expect(q_m^{(\ell)}(\frac{X_t}{t}))\to\expect(q_m^{(\ell)}(Y))$ for $Y$ as in the statement. This implies that $\sum_{j=1}^d \ell_j \frac{X_{t,j}}{j}$ converges in distribution to $\sum_{j=1}^d \ell_j Y_j$ for any $\ell$ by \cite{Bill95}*{Theorem 30.2}, where $X_t=(X_{t,1},\dots,X_{t,d})$ and $Y=(Y_1,\dots,Y_d)$. By the Cram\'er-Wold device \cite{Bill95}*{Theorem 29.4}, this implies that $\frac{X_t}{t}$ converges in distribution to $Y$.

So consider $p_m^{(c)}(x)$. Since $h(\theta,n)=\frac{1}{2\pi}\nabla_{\theta_{\F{a}}}E_n(\theta_{\F{b}})=\frac{1}{2\pi}\sum_{i=1}^d\F{a}_i\partial_{\theta_i}E_n(\theta_{\F{b}})$, we have
\begin{align*}
\expect_{\nu_\infty^\psi}(p_m^{(c)}(x))=\expect_{\mu^\psi}(p_m^{(c)}(h)) 
&= \int_{\D{T}_\ast^d}\sum_{n=1}^\nu \Big(\frac{1}{2\pi}\sum_{j=1}^d c_j\partial_{\theta_j}E_n(\theta_{\F{b}})\Big)^m\norm{P_n(\theta_{\F{b}})(U\psi)_{\theta_{\F{b}}}}^2 \dd\theta\\
&= \Bigl\langle U\psi,\Bigl(\int_{\D{T}_\ast^d}^\oplus \sum_{n=1}^\nu \Bigl(\frac{1}{2\pi}\sum_{j=1}^d c_j\partial_{\theta_j}E_n(\theta_{\F{b}})\Bigr)^mP_n(\theta_{\F{b}})\dd\theta\Bigr)U\psi\Bigr\rangle.
\end{align*}
On the other hand, $\expect_{\nu_t^\psi}(p_m^{(c)}(x))=\expect_{\mu_t^\psi}(p_m^{(c)}(x/t))=\sum_{x\in\Gamma} (\sum_{j=1}^d c_j\frac{\alpha_j(x)}{t})^m|\eul^{-\ii tH}\psi(x)|^2=\frac{1}{t^m} \langle \eul^{-\ii tH}\psi, (\sum_{j=1}^d c_j \alpha_j(x))^m \eul^{-\ii tH}\psi\rangle=\frac{1}{t^m}\langle U\psi, U\eul^{\ii tH}(\sum_{j=1}^d c_j \alpha_j(x))^m \eul^{-\ii tH}\psi\rangle$. It thus suffices to prove that
\begin{equation}\label{eq:targetgen}
\lim_{t\to+\infty} \frac{U\eul^{\ii tH}(\sum_{j=1}^d c_j \alpha_j(x))^m \eul^{-\ii tH}\psi}{t^m}=\Big(\int_{\D{T}_\ast^d}^\oplus \sum_{n=1}^\nu \Big(\frac{1}{2\pi}\sum_{j=1}^d c_j\partial_{\theta_j}E_n(\theta_{\F{b}})\Big)^m P_n(\theta_{\F{b}}) \dd\theta\Big)U\psi.
\end{equation}

By the multinomial theorem we have 
\[
\biggl(\sum_{j=1}^dc_j\alpha_j(x)\biggr)^m =\sum_{l_1+\dots+l_r=m} \frac{m!}{l_1!\cdots l_r!} c_1^{l_1}\cdots c_r^{l_r} \alpha_1(x)^{l_1}\cdots \alpha_r(x)^{l_r}.
\]
We showed before \eqref{eq:disfouper} that $\partial_{\theta_i}\eul^{-\ii\theta_{\F{b}}\cdot(k_{\F{a}}+v_n)}=-2\pi\ii(k_i+s_i)\eul^{-\ii\theta_{\F{b}}\cdot(k_{\F{a}}+v_n)}$ if $k_{\F{a}}=\sum_{i=1}^d k_i\F{a}_i$ and $v_n=\sum_{i=1}^d s_i\F{a}_i$, i.e., if $\alpha_i(k_{\F{a}}+v_n)=k_i+s_i$. We thus have $\alpha_i(k_{\F{a}}+v_n)^{l_i}\eul^{-\ii\theta_{\F{b}}\cdot(k_{\F{a}}+v_n)}=(\frac{\ii}{2\pi})^{l_i}\partial_{\theta_i}^{l_i}\eul^{-\ii\theta_{\F{b}}\cdot(k_{\F{a}}+v_n)}$. This yields
\begin{align*}
(U\alpha_1(x)^{l_1}\cdots\alpha_r(x)^{l_r}\phi)_{\theta_{\F{b}}}(v_n) 
&= \sum_{k_{\F{a}}\in\D{Z}^d_{\F{a}}}\eul^{-\ii\theta_{\F{b}} \cdot (k_{\F{a}}+v_n)}\alpha_1(k_{\F{a}}+v_n)^{l_1}\cdots\alpha_r(k_{\F{a}}+v_n)^{l_r}\phi(k_{\F{a}}+v_n) \\
&=\frac{\ii^m}{(2\pi)^m}\partial_{\theta_1}^{l_1}\cdots \partial_{\theta_r}^{l_r}(U\phi)_{\theta_{\F{b}}}(v_n)
\end{align*}
and so,
\begin{align*}
&\biggl(U\eul^{\ii tH}\biggl(\sum_{j=1}^d c_j\alpha_j(x)\biggr)^m \eul^{-\ii tH}\psi\biggr)_{\theta_{\F{b}}}=\eul^{\ii tH(\theta_{\F{b}})}\biggl(U\biggl(\sum_{j=1}^d c_j \alpha_j(x)\biggr)^m\eul^{-\ii tH}\psi\biggr)_{\theta_{\F{b}}}\\
&\qquad\qquad\qquad=\sum_{l_1+\dots+l_r=m} \frac{m!}{l_1!\cdots l_r!} c_1^{l_1}\cdots c_r^{l_r} \frac{\ii^m}{(2\pi)^m} \eul^{\ii tH(\theta_{\F{b}})}\partial_{\theta_1}^{l_1}\cdots \partial_{\theta_r}^{l_r}\eul^{-\ii tH(\theta_{\F{b}})}(U\psi)_{\theta_{\F{b}}}\,.
\end{align*}
As in \eqref{eq:disna},
\begin{align*}
&\partial_{\theta_1}^{l_1}\cdots \partial_{\theta_r}^{l_r}\eul^{-\ii tH(\theta_{\F{b}})}(U\psi)_{\theta_{\F{b}}} \\
&\quad= \sum_{n=1}^\nu \eul^{-\ii tE_n(\theta_{\F{b}})}(-\ii t\partial_{\theta_1}E_n(\theta_{\F{b}}))^{l_1}\cdots(-\ii t\partial_{\theta_r} E_n(\theta_{\F{b}}))^{l_r}P_n(\theta_{\F{b}})(U\psi)_{\theta_{\F{b}}} + \ord_{n,\theta}(t^{l_1+\dots+l_r-1})\,,
\end{align*}
where the error term contains as usual derivatives of $E_n(\theta_{\F{b}})$, $P_n(\theta_{\F{b}})$, and $(U\psi)_{\theta_{\F{b}}}$. Since $l_1+\dots+l_r=m$, using the multinomial theorem again we conclude that \eqref{eq:targetgen} holds, the details are the same as \eqref{eq:detail1}-\eqref{eq:detail2}.
\end{proof}

\begin{example}[the integer lattice]
If $H=\C{A}$ on $\D{Z}^d$, this gives, for Borel $f\colon[-2,2]^d\to\D{C}$, 
\[
\expect_{\nu_\infty^\psi}(f)=\expect_{\mu^\psi}(f(h))=\int_{\D{T}^d}f(-2\sin\theta_1,\dots,-2\sin\theta_d)|\hat\psi(\theta)|^2\dd\theta.
\]
\end{example}

\begin{example}[the triangular lattice]
In this case, we similarly get
\[
\expect_{\nu_\infty^\psi}(f)=\int_{\D{T}^2} f(-2\sin\theta_1-2\sin(\theta_1+\theta_2),-2\sin\theta_2-2\sin(\theta_1+\theta_2))|\hat\psi(\theta)|^2\dd\theta.
\]
\end{example}

\begin{example}[the hexagonal lattice]
The hexagonal distribution can be calculated similarly using the information in Example~\ref{exa:freezd3}. Here we also need $P_{\pm}(\theta_{\F{b}})$. This is simple: if $\eul^{\ii\theta_{\F{b}}\cdot v}\xi(\theta_{\F{b}})=|\xi(\theta_{\F{b}})|\eul^{\ii\phi(\theta_{\F{b}})}$, then the normalized eigenvectors of $H(\theta_{\F{b}})$ are $w_{\pm}(\theta_{\F{b}})=\frac{1}{\sqrt{2}}(1,\pm \eul^{-\ii\phi(\theta_{\F{b}})})^T$, so $P_{\pm}(\theta_{\F{b}})=\scal{w_{\pm},\,\cdot\,}w_{\pm}$. In particular, $\norm{P_{\pm}(\theta_{\F{b}})(U\psi)_{\theta_{\F{b}}}}^2=\frac{|(U\psi)_{\theta_{\F{b}}}(0) \pm \eul^{\ii\phi(\theta_{\F{b}})} (U\psi)_{\theta_{\F{b}}}(v)|^2}{2}$.
\end{example}

We may now tackle the continuous case.

\begin{theorem}
Consider a Schr\"odinger operator $H$ on $\D{R}^d$ with smooth periodic potential $V$ having bounded derivatives and let $\psi\in L^2(\D{R}^d)$. Let $X_t$ be a random vector with distribution $|\eul^{-\ii tH}\psi(x)|^2\,\dd x$. Then $\frac{X_t}{t}$ converges to a random vector $Y$ on $\D{R}^d$ with distribution $\nu_\infty^\psi$ having the analogous expression \eqref{eq:muinftyper}-\eqref{eq:nuinftyper} (so $\sum_{n=1}^\nu$ becomes $\sum_{n=1}^\infty$).
\end{theorem}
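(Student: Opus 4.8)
The plan is to bypass the method of moments used for Theorem~\ref{thm:perlim} -- which is unavailable here because $\nu_\infty^\psi$ need not be compactly supported, the band gradients $\nabla_\theta E_n(\theta)$ growing with $n$, as the free case $E_k(\theta)=\abs{k+\theta}^2$ already shows -- and instead to prove $\nu_t^\psi\Rightarrow\nu_\infty^\psi$ via characteristic functions and L\'evy's continuity theorem. Assuming $\norm{\psi}=1$ (so that $X_t$ is defined and $\nu_\infty^\psi$, given by the $\sum_{n=1}^\infty$ analogue of \eqref{eq:muinftyper}--\eqref{eq:nuinftyper}, is a probability measure), the task reduces to showing, for each fixed $\xi\in\D{R}^d$,
\[
\widehat{\nu_t^\psi}(\xi)=\scal{\eul^{-\ii tH}\psi,\,\eul^{\ii(\xi/t)\cdot x}\eul^{-\ii tH}\psi}\longrightarrow\widehat{\nu_\infty^\psi}(\xi),
\]
which then yields the conclusion since $\widehat{\nu_\infty^\psi}$ is continuous at $0$.

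First I would reduce to states supported on finitely many bands. Using the approximants $\psi_N=U^{-1}\bigl(\int_{\D{T}_\ast^d}^{\oplus}\sum_{n\le N}P_n(\theta)\,\dd\theta\bigr)U\psi$ of Step 3 of the proof of Theorem~\ref{thm:contper}, and choosing $N$ at a ``gap level'', i.e.\ with $E_N(\theta)<E_{N+1}(\theta)$ for a.e.\ $\theta$ (possible since the $H(\theta)$ have compact resolvent), the operator $\Pi_N(\theta)\coloneqq\sum_{n\le N}P_n(\theta)$ is continuous in $\theta$ off a nullset. By Lemma~\ref{lem:nutcont}, $\abs{\widehat{\nu_t^\psi}(\xi)-\widehat{\nu_t^{\psi_N}}(\xi)}\le 2(\norm{\psi}+\norm{\psi_N})\norm{\psi-\psi_N}$ uniformly in $t$, while on the limiting side $\mu^{\psi_N}\to\mu^\psi$ in total variation (the tail $\int_{\D{T}_\ast^d}\sum_{n>N}\norm{P_n(\theta)(U\psi)_\theta}^2\,\dd\theta\to0$), so $\widehat{\nu_\infty^{\psi_N}}(\xi)\to\widehat{\nu_\infty^\psi}(\xi)$. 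It therefore suffices to prove the convergence for each fixed finite-band $\psi_N$ and then let $N\to\infty$.

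The core of the argument is a Bloch-space computation for a fixed finite-band state $\varphi$. From \eqref{eq:upsi} one reads off the covariance $(U\eul^{\ii\eta\cdot x}f)_\theta=(Uf)_{\theta-\eta}$ together with the quasi-periodicity $(Uf)_{\theta+m}=\eul^{-\ii m\cdot x}(Uf)_\theta$, $m\in\D{Z}^d$ (only needed where $\theta-\xi/t$ leaves $\D{T}_\ast^d$, a set of measure $\to0$). Combined with \eqref{eq:uh} and unitarity of $U$ this gives $\widehat{\nu_t^\varphi}(\xi)=\int_{\D{T}_\ast^d}\scal{\eul^{-\ii tH(\theta)}(U\varphi)_\theta,\,\eul^{-\ii tH(\theta-\xi/t)}(U\varphi)_{\theta-\xi/t}}\,\dd\theta$. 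I would then (i) replace $(U\varphi)_{\theta-\xi/t}$ by $(U\varphi)_\theta$, the error being bounded by $\norm{\varphi}\bigl(\int_{\D{T}_\ast^d}\norm{(U\varphi)_{\theta-\xi/t}-(U\varphi)_\theta}^2_{L^2(\D{T}^d)}\,\dd\theta\bigr)^{1/2}\to0$ (continuity of translation in $L^2$); (ii) insert $(U\varphi)_\theta=\Pi_N(\theta)(U\varphi)_\theta$ and replace $\Pi_N(\theta)$ by $\Pi_N(\theta-\xi/t)$, the error tending to $0$ by the a.e.\ continuity of $\Pi_N$ and dominated convergence; after which the integrand becomes the \emph{finite} sum $\sum_{n,m=1}^N\eul^{\ii t(E_n(\theta)-E_m(\theta-\xi/t))}\scal{P_n(\theta)(U\varphi)_\theta,\,P_m(\theta-\xi/t)(U\varphi)_\theta}$, dominated by $N\norm{(U\varphi)_\theta}^2$. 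On the full-measure set where the $E_n$, $P_n$ ($n\le N$) are smooth, the off-diagonal terms vanish as $t\to+\infty$ since the amplitude $\scal{P_n(\theta)(U\varphi)_\theta,P_m(\theta-\xi/t)(U\varphi)_\theta}$ tends to $\scal{P_n(\theta)(U\varphi)_\theta,P_m(\theta)(U\varphi)_\theta}=0$, while the diagonal terms tend to $\eul^{\ii\xi\cdot\nabla_\theta E_n(\theta)}\norm{P_n(\theta)(U\varphi)_\theta}^2$ by the Taylor expansion $t(E_n(\theta)-E_n(\theta-\xi/t))\to\xi\cdot\nabla_\theta E_n(\theta)$; dominated convergence then gives $\widehat{\nu_t^\varphi}(\xi)\to\int_{\D{T}_\ast^d}\sum_n\eul^{\ii\xi\cdot\nabla_\theta E_n(\theta)}\norm{P_n(\theta)(U\varphi)_\theta}^2\,\dd\theta=\widehat{\nu_\infty^\varphi}(\xi)$, recognizing $h(\theta,n)=\nabla_\theta E_n(\theta)$ as the $\frac{1}{2\pi}\nabla_{\theta_{\F{a}}}E_n(\theta_{\F{b}})$ of \eqref{eq:nuinftyper} in the normalization $\F{a}_i=2\pi\F{e}_i$, $\theta_{\F{b}}=\theta$. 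Assembling the pieces and letting $N\to\infty$ yields $\widehat{\nu_t^\psi}(\xi)\to\widehat{\nu_\infty^\psi}(\xi)$ for every $\xi$, and L\'evy's theorem finishes the proof.

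The main obstacle is precisely the non-compact support of $\nu_\infty^\psi$: it forces the switch from moments to characteristic functions and it makes the infinite sum over bands dangerous, since no uniform-in-$n$ estimate is available; I handle this by projecting onto finitely many bands \emph{before} performing the $t\to+\infty$ asymptotics. The remaining difficulties are technical and routine: the interplay of the shift $\theta\mapsto\theta-\xi/t$ with the Brillouin-zone boundary (handled by the quasi-periodicity twist, the bad $\theta$-set having vanishing measure) and possible band crossings at the cut-off level (handled by choosing $N$ at a gap level so that $\Pi_N$ is continuous a.e.).
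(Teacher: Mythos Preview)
Your proof is correct and takes a genuinely different route from the paper's. Both arguments share the outer approximation by finite-band states $\psi_N$, using Lemma~\ref{lem:nutcont} to control $\nu_t^{\psi_N}-\nu_t^\psi$ uniformly in $t$ and the tail $\sum_{n>N}\norm{P_n(\theta)(U\psi)_\theta}^2\to0$ to control $\nu_\infty^{\psi_N}-\nu_\infty^\psi$. They differ in how they show $\nu_t^{\psi_N}\Rightarrow\nu_\infty^{\psi_N}$. The paper observes that for $\psi_N$ the limiting measure is compactly supported (since $h$ is bounded on $\Omega_N=\D{T}_\ast^d\times\{1,\dots,N\}$) and simply reruns the method-of-moments proof of Theorem~\ref{thm:perlim}. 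You instead compute the characteristic function directly via the Bloch-shift covariance $(U\eul^{\ii\eta\cdot x}f)_\theta=(Uf)_{\theta-\eta}$, extract the group velocity $\nabla_\theta E_n$ from the Taylor expansion of the phase $t(E_n(\theta)-E_n(\theta-\xi/t))$, and invoke L\'evy's continuity theorem. Your route is more self-contained and requires no moment bounds on $\psi_N$ (which the paper's route needs, entailing a further density step via Remark~\ref{rem:approxlim}); the paper's route is more economical in that it recycles Theorem~\ref{thm:perlim} wholesale. One minor point: the ``gap level'' selection is unnecessary, since for every $N$ the crossing set $\{\theta:E_N(\theta)=E_{N+1}(\theta)\}$ is already a nullset by the full-measure statement for $S_N$ around~\eqref{eq:(4)}, so $\Pi_N$ is continuous a.e.\ automatically.
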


\begin{proof}
Here $\Omega\coloneqq\D{T}_\ast^d\times\D{N}^\ast$ is no longer compact so we cannot use the method of moments directly. However we can argue by approximation. Namely, given $\psi\in L^2$, we know that $\psi_N\coloneqq U^{-1}\bigl(\int_{\D{T}_\ast^d}^\oplus\sum_{n=1}^NP_n(\theta)\,\dd\theta\bigr)U\psi$ satisfies $\norm{\psi_N-\psi}_2\to 0$ (in fact, in the proof of Theorem~\ref{thm:contper}, Step 3 we showed the stronger fact $\norm{\psi_N-\psi}_{H^{2m}}\to 0$ if $\psi\in H^{2m}(\D{R}^d)$).

For fixed $N$, we may take $\Omega_N\coloneqq\D{T}_\ast^d\times \accol{1,\dots,N}$. Then the proof of Theorem~\ref{thm:perlim} shows that $\nu_t^{\psi_N}$ converges weakly to $\nu_\infty^{\psi_N}$ as $t\to+\infty$. Given $B\subset\Omega$, it remains to control $|\nu_t^{\psi_N}(B)-\nu_t^{\psi}(B)|$ and $|\nu_\infty^{\psi_N}(B)-\nu_\infty^{\psi}(B)|$. The former vanishes uniformly in $t,B$ by Lemma~\ref{lem:nutcont}. Similarly,
\begin{align*}
|\nu_\infty^{\psi}(B)-\nu_\infty^{\psi_N}(B)| 
&=\Bigl\lvert\int_{\D{T}_\ast^d}\sum_{n=1}^\infty\B{1}_{h^{-1}B}(\theta,n)(\norm{P_n(\theta)(U\psi_N)_{\theta}}^2-\norm{P_n(\theta)(U\psi)_\theta}^2)\dd\theta\Bigr\rvert\\
&\leq\int_{\D{T}_\ast^d}\sum_{n=1}^\infty(\norm{P_n(\theta)(U\psi_N)_\theta} + \norm{P_n(\theta)(U\psi)_\theta})\norm{P_n(\theta)((U\psi)_\theta-(U\psi_N)_\theta)}\dd\theta\\
&\leq(\norm{\psi_N}+\norm{\psi})\norm{\psi-\psi_N}
\end{align*}
where we used Cauchy--Schwarz for $\int_{\Omega}$ in the last step and the fact that $\norm{\phi}^2=\norm{U\phi}^2=\int_{\D{T}_\ast^d}\sum_{n=1}^\infty\norm{P_n(\theta)(U\phi)_{\theta}}^2\,\dd\theta$. This completes the proof.
\end{proof}

\begin{remark}\label{rem:approxlim}
The same approximation trick works for any dense subspace. In fact we did not use the explicit form of $\psi_N$ in the previous proof.
\end{remark}

\appendix
\section{Upper bounds and derivatives}\label{sec:app}

Here we first prove that the $m$-moments grow at most like $t^m$. This already appeared in various forms: for continuous Schr\"odinger operators see \cite{RaSi78} for $m=1,2$; for discrete Schr\"odinger operators, upper bounds can be deduced from \cite{AiWa12}*{Appendix B} for general moments but $\psi=\delta_x$. Using the upper bounds, we then give rigorous proofs of the moment derivative formulas \eqref{eq:deridis} and \eqref{eq:dercont}. 

We start with some general remarks on lower bounds.

\subsection{Lower bounds}\label{subsec:lowerbounds}

On $L^2(X)$, if $x_0\in X$ is fixed and $\abs{x}\coloneqq d(x,x_0)$, then we have
\begin{align}\label{eq:withmod}
\norm{\abs{x}^j\phi}^2=\scal{\phi,\abs{x}^{2j}\phi}=\int_X\abs{x}^{2j}|\phi|^2 
&\leq\Bigl(\int_X\abs{x}^{2m}\abs{\phi}^2\Bigr)^{j/m}\Bigl(\int_X|\phi|^2\Bigr)^{(m-j)/m}\notag \\
&=\norm{\abs{x}^m\phi}^{2j/m}\norm{\phi}^{2(m-j)/m},
\end{align}
where we used H\"older's inequality with $f=\abs{x}^{2j}|\phi|^{2j/m}$ and $g=|\phi|^{(2m-2j)/m}$. It follows that $\norm{\abs{x}^j\phi}^m\leq \norm{\abs{x}^m\phi}^j\norm{\phi}^{m-j}$. In particular,
\begin{equation}\label{eq:mcc}
\liminf_{t\to+\infty}\frac{\norm{\abs{x}^m\eul^{-\ii tH}\psi}^2}{t^{2m}} \geq\Big(\liminf_{t\to+\infty} \frac{\norm{\abs{x}\eul^{-\ii tH}\psi}^{2}}{t^2}\Big)^m\norm{\psi}^{2-2m}.
\end{equation}

On $\D{R}^d$ we usually consider $\norm{x^j\phi}$ instead of $\norm{\abs{x}^j\phi}$, where $x^j\phi\coloneqq(x_1^j\phi,\dots,x_d^j\phi)$. We have $\norm{x_k^j\phi}^2\leq\norm{x_k^m\phi}^{2j/m}\norm{\phi}^{2(m-j)/m}$ by the same argument. Using the Plancherel identity gives the Gagliardo--Nirenberg inequality $\norm{D_k^j\phi}\leq\norm{D_k^m\phi}^{j/m}\norm{\phi}^{(m-j)/m}$.

If $H$ is a Schr\"odinger operator, $x^j_k(t)\coloneqq\eul^{\ii tH}x^j_k\eul^{-\ii tH}$ and $D^j_k(t)\coloneqq\eul^{\ii tH}(-\ii\partial_{x_k})^j\eul^{-\ii tH}$, then this implies
\begin{equation}\label{eq:holnor}
\begin{split}
\norm{x_k^j(t)\psi}&\leq\norm{x_k^m(t)\psi}^{j/m}\norm{\psi}^{(m-j)/m},\\
\norm{D_k^j(t)\psi}&\leq\norm{D_k^m(t)\psi}^{j/m}\norm{\psi}^{(m-j)/m}.
\end{split}
\end{equation}

Lastly in this connection, recall the uncertainty principle $\norm{\phi}^2\leq 2\norm{x_k\phi}\norm{\partial_{x_k}\phi}$. The above yields the generalization $\norm{\phi}^2\leq 2\norm{x_k^m\phi}^{1/m}\norm{\phi}^{(m-1)/m}\norm{\partial_{x_k}^m\phi}^{1/m}\norm{\phi}^{(m-1)/m}$, i.e.
\begin{equation}\label{eq:uncert}
\norm{\phi}^2 \leq 2^m \norm{x_k^m\phi}\cdot\norm{D_k^m\phi}.
\end{equation}
Applying this to $\phi=\eul^{-\ii tH}\psi$, we thus get
\begin{equation}\label{eq:produp}
\norm{x_k^j(t)\psi}\cdot\norm{D_k^j(t)\psi}\leq 2^{m-j}\norm{x_k^m(t)\psi}\cdot\norm{D_k^m(t)\psi}.
\end{equation}
More generally, for $j,n\leq m$,
\begin{equation}\label{eq:produp2}
\norm{x_k^j(t)\psi}\cdot\norm{D_k^n(t)\psi}\leq 2^{\frac{2m-j-n}{2}}\norm{x_k^m(t)\psi}^{\frac{2m+j-n}{2m}}\norm{D_k^m(t)\psi}^{\frac{2m-j+n}{2m}}.
\end{equation}
The preceding estimates provide useful lower bounds for $x^m(t)\psi$ and $D^m(t)\psi$ in terms of lower moments. 

We now consider upper bounds.

\subsection{Discrete case}  \label{subsec:discret}

In the following, given a countable graph $G$, we fix some vertex $o\in G$ regarded as an origin and denote $\abs{x}\coloneqq d(x,o)$ and $x^m\psi(x)\coloneqq\abs{x}^m\psi(x)$. 

\begin{theorem}                    \label{thm:formel}
Let $H=\C{A}+V$ be a Schr\"odinger operator on a countable graph $G$ with maximal degree $\leq\R{D}$. We assume the potential $V$ is bounded. Then for any $t\geq 0$ and $m\in\D{N}$, if $\norm{x^m\psi}<\infty$, then \begin{equation}  \label{eq:discretpolynom}
\norm{x^m\eul^{-\ii tH}\psi}\leq\sum_{r=0}^m p_{m-r}(t)\norm{x^r\psi},
\end{equation}
where $p_k(t)$ is a polynomial in $t$ of degree $k$ with $p_0(t)=1$, and the leading term of the top polynomial $p_m(t)$ is $\R{D}^mt^m$. In particular,
\begin{equation}  \label{eq:discretbound}
\limsup_{t\to+\infty}\frac{\norm{x^m\eul^{-\ii tH}\psi}}{t^m}\leq\R{D}^m \norm{\psi}.
\end{equation}
\end{theorem}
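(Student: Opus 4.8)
The plan is a Duhamel/commutator argument together with an induction on $m$. Write $\psi_t\coloneqq\eul^{-\ii tH}\psi$ and let $O_k$ denote multiplication by $\abs{x}^k$, so that $\norm{x^k\eul^{-\ii tH}\psi}=\norm{O_k\psi_t}$. First I reduce to finitely supported $\psi$. Since $H=\C{A}+V$ is bounded on $\ell^2(G)$ (maximal degree $\leq\R{D}$, $V$ bounded), for finitely supported $\psi$ the series $\eul^{-\ii tH}\psi=\sum_{n\geq 0}\frac{(-\ii t)^n}{n!}H^n\psi$ converges, and $H^n\psi$ is supported within graph distance $n$ of $\R{supp}\,\psi$; hence $\norm{O_kH^n\psi}\leq(R+n)^k\norm{H}^n\norm{\psi}$ with $R$ the radius of $\R{supp}\,\psi$, so $t\mapsto O_k\psi_t$ is a real-analytic $\ell^2$-valued map and every $\norm{O_k\psi_t}$ is finite. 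For general $\psi$ with $\norm{x^m\psi}<\infty$ one truncates $\psi^{(N)}\coloneqq\B{1}_{\accol{\abs{x}\leq N}}\psi$; then $\norm{x^r\psi^{(N)}}\to\norm{x^r\psi}$ for $0\leq r\leq m$ by dominated convergence (and $\norm{x^r\psi}<\infty$ for these $r$ by \eqref{eq:withmod}), while $\psi^{(N)}_t\to\psi_t$ in $\ell^2$, so Fatou's lemma gives $\norm{O_m\psi_t}^2\leq\liminf_N\norm{O_m\psi^{(N)}_t}^2$, and \eqref{eq:discretpolynom} for the $\psi^{(N)}$ passes to $\psi$ (in particular $\norm{O_m\psi_t}<\infty$). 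So it suffices to treat finitely supported $\psi$.

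For such $\psi$, $\frac{\dd}{\dd t}(O_m\psi_t)=-\ii HO_m\psi_t+\ii[H,O_m]\psi_t$, so $\eul^{\ii tH}O_m\psi_t=O_m\psi+\ii\int_0^t\eul^{\ii sH}[H,O_m]\psi_s\,\dd s$ and hence
\[
\norm{O_m\psi_t}\leq\norm{x^m\psi}+\int_0^t\norm{[H,O_m]\psi_s}\,\dd s.
\]
Since $V$ commutes with $O_m$, $[H,O_m]=[\C{A},O_m]$, acting by $([\C{A},O_m]\phi)(v)=\sum_{w\sim v}(\abs{w}^m-\abs{v}^m)\phi(w)$. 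For $w\sim v$ we have $\bigl\lvert\abs{w}-\abs{v}\bigr\rvert\leq 1$, whence $\bigl\lvert\abs{w}^m-\abs{v}^m\bigr\rvert\leq m(1+\abs{w})^{m-1}=\sum_{j=0}^{m-1}a_{m,j}\abs{w}^j$ with $a_{m,j}=m\binom{m-1}{j}\geq 0$ and $a_{m,m-1}=m$. Applying Cauchy--Schwarz over the at most $\R{D}$ neighbours of each $v$ and summing in $v$ (each $w$ occurring at most $\R{D}$ times) yields
\[
\norm{[\C{A},O_m]\phi}\leq\R{D}\,\bigl\lVert\textstyle\sum_{j=0}^{m-1}a_{m,j}\abs{x}^j\phi\bigr\rVert\leq\R{D}\sum_{j=0}^{m-1}a_{m,j}\norm{x^j\phi}.
\]

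Now induct on $m$. For $m=0$, $\norm{\psi_t}=\norm{\psi}$, so $p_0\equiv 1$. Assume \eqref{eq:discretpolynom} at every level $<m$, with each $p^{(j)}_k$ ($0\leq k\leq j$) a polynomial of degree $k$ with nonnegative coefficients, $p^{(j)}_0\equiv 1$, and $p^{(j)}_j$ of leading coefficient $\R{D}^j$. Substituting the bounds on $\norm{x^j\psi_s}$, $j\leq m-1$, into the displayed integral inequality and collecting, for each $0\leq r\leq m$, the coefficient of $\norm{x^r\psi}$ gives $p^{(m)}_0\equiv 1$ and, for $r\leq m-1$,
\[
p^{(m)}_{m-r}(t)=\int_0^t\R{D}\sum_{j=r}^{m-1}a_{m,j}\,p^{(j)}_{j-r}(s)\,\dd s,
\]
which has nonnegative coefficients and degree $(m-1-r)+1=m-r$, the top degree coming from the $j=m-1$ term $\R{D}\,a_{m,m-1}p^{(m-1)}_{m-1-r}(s)$. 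In particular the top polynomial picks up $\R{D}\,a_{m,m-1}p^{(m-1)}_{m-1}(s)=m\R{D}(\R{D}^{m-1}s^{m-1}+\cdots)$, so $p^{(m)}_m(t)=\R{D}^mt^m+\cdots$. This proves \eqref{eq:discretpolynom}; then \eqref{eq:discretbound} is immediate, since in $t^{-m}\sum_{r=0}^m p_{m-r}(t)\norm{x^r\psi}$ only the $r=0$ term survives as $t\to+\infty$, with limit $\R{D}^m\norm{\psi}$. The one point needing genuine care is the justification of the Duhamel identity and of differentiating through the unbounded operator $O_m$ — which is exactly why one works first with finitely supported $\psi$, where boundedness of $H$ makes $t\mapsto O_m\psi_t$ honestly real-analytic, and then removes the restriction via Fatou; the rest (the elementary estimate on $\lvert\abs{w}^m-\abs{v}^m\rvert$, Cauchy--Schwarz with bounded degree, and the degree bookkeeping) is routine.
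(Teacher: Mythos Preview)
Your proof is correct and follows essentially the same route as the paper's: induction on $m$, Duhamel integration of the commutator $[\C{A},|x|^m]$, the mean-value estimate $\bigl\lvert\abs{w}^m-\abs{v}^m\bigr\rvert\leq m(1+\abs{w})^{m-1}$, and Fatou to remove the regularization. The only difference is cosmetic: the paper regularizes the unbounded multiplication operator via $f_\epsilon(\lambda)=\lambda^m/(1+\epsilon\lambda^m)$ and lets $\epsilon\downarrow 0$, whereas you truncate $\psi$ to finite support and let $N\to\infty$; both devices serve to make the Duhamel differentiation rigorous and both conclude with the same Fatou step.
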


Note that each $p_k(t)$ also depends on $m$, that is, for each fixed $m$ there is a set of polynomials $p_{0,m}(t),\dots,p_{m,m}(t)$ with $p_{k,m}$ of degree $k$ such that \eqref{eq:discretpolynom} is satisfied with $p_k\equiv p_{k,m}$. See Remark~\ref{rem:oddup} for a further comment.

\begin{proof}
By induction on $m$. The statement is trivial for $m=0$ since $\eul^{-\ii tH}$ is unitary.

Let $m=1$. For an operator $O$, recall we denote $O(t)\coloneqq\eul^{\ii tH}O\eul^{-\ii tH}$. In particular, $x(t)\coloneqq\eul^{\ii tH}x\eul^{-\ii tH}$ when $O$ is the operator of multiplication by $x$. We formally have
\begin{align}\label{eq:diffform}
\frac{\dd}{\dd t} x(t)\psi 
&=\ii H\eul^{\ii tH}x\eul^{-\ii tH}\psi-\ii\eul^{\ii tH}x H\eul^{-\ii tH}\psi\notag\\
&=\ii\eul^{\ii tH}[H,x]\eul^{-\ii tH}\psi\notag\\
&=\ii\eul^{\ii tH}[\C{A},x]\eul^{-\ii tH}\psi\,.
\end{align}
This calculation is formal because the first derivative $\ii H\eul^{\ii tH}x\eul^{-\ii tH}\psi$ requires $x\eul^{-\ii tH}\psi\in\ell^2(G)$, while the second $-\ii\eul^{\ii tH}x H\eul^{-\ii tH}\psi$ requires $\lim_{\delta\to 0}x\frac{\eul^{-\ii(t+\delta)H}-\eul^{-\ii tH}}{\delta}\psi=-\ii xH\eul^{-\ii tH}\psi$. See, e.g., \cite{Ha13}*{Lemma 10.17}. None of these facts is a priori clear (in fact the first point is partly what the theorem tries to prove, we only know that $x\psi\in\ell^2(G)$, a priori). Note that this formal calculation is justified however if instead of $x$ we multiply by a bounded function.

So, similar to \cite{RaSi78}, given $\epsilon>0$, we consider $f_{\epsilon}(\lambda)\coloneqq\frac{\lambda}{1+\epsilon\lambda}$ for $\lambda\geq 0$. Then multiplication by $f_{\epsilon}(\abs{x})$ is a bounded operator and we have $\frac{\dd}{\dd t} [f_{\epsilon}(\abs{x})](t)=\ii\eul^{\ii t H}[\C{A},f_{\epsilon}(\abs{x})]\eul^{-\ii tH}\psi$. But
\[
[\C{A},f_{\epsilon}(\abs{x})]\phi(x)=\sum_{y\sim x}[f_\epsilon(|y|) - f_\epsilon(\abs{x})]\phi(y) =:\sum_{y\sim x}\alpha_{x,y}\phi(y).
\]
Here $|\alpha_{x,y}|=|f_{\epsilon}'(\lambda)|$ for some $\lambda \in [\abs{x}-1,\abs{x}+1]$. Hence, $|\alpha_{x,y}|\leq \frac{1}{(1+\epsilon\lambda)^2}\leq 1$. We thus get
\[
\norm{[\C{A},f_{\epsilon}(\abs{x})]\phi}^2\leq\R{D}\sum_x\sum_{y\sim x}|\phi(y)|^2\leq\R{D}^2\norm{\phi}^2.
\]
Applying this to $\phi=\eul^{-\ii tH}\psi$, we get $\norm{[\C{A},f_{\epsilon}(\abs{x})]\eul^{-\ii tH}\psi}\leq \R{D} \norm{\psi}$. So using \cite{Ca67}*{Theorem~5.6.1},
\begin{align*}
\norm{(f_\epsilon(\abs{x}))(t)\psi}  
&= \Bigl\lVert(f_\epsilon(\abs{x}))(0)\psi + \int_0^t\frac{\dd}{\dd s}(f_\epsilon(\abs{x}))(s)\psi\,\dd s\Bigr\rVert\\
&\leq  \norm{f_\epsilon(\abs{x})\psi} + \int_0^t \norm{[\C{A},f_\epsilon(\abs{x})]\eul^{-\ii sH}\psi}\,\dd s\\
&\leq \norm{x\psi} + t\R{D}\norm{\psi}\,.
\end{align*}
Since $\epsilon>0$ is arbitrary, taking $\epsilon\downarrow 0$ and using Fatou's lemma, we get
\begin{align}\label{eq:fatou}
\norm{x\eul^{-\ii tH}\psi}^2=\sum_x\abs{x}^2|(\eul^{-\ii tH}\psi)(x)|^2
&\leq\liminf_{\epsilon\downarrow 0}\sum_x\frac{\abs{x}^2}{(1+\epsilon \abs{x})^2}|(\eul^{-\ii tH}\psi)(x)|^2\notag\\
&= \liminf_{\epsilon\downarrow 0} \norm{f_\epsilon(\abs{x})\eul^{-\ii tH}\psi}^2\notag\\
&\leq (\norm{x\psi} + t\R{D}\norm{\psi})^2,
\end{align}
where we used $\norm{f_\epsilon(\abs{x})\eul^{-\ii tH}\psi}=\norm{(f_\epsilon(\abs{x}))(t)\psi}$ since $\eul^{\ii tH}$ is unitary.
This settles $m=1$.

Now assume the statement holds for all $k<m$. Let $f_\epsilon(\lambda)=\frac{\lambda^m}{1+\epsilon\lambda^m}$. Here $|f_\epsilon'(\lambda)|\leq m|\lambda|^{m-1}$. Arguing as before, we get
\begin{align*}
\norm{[\C{A},f_{\epsilon}(\abs{x})]\phi}^2
&\leq \R{D}m^2\sum_x(\abs{x}+1)^{2(m-1)}\sum_{y\sim x}|\phi(y)|^2\\
&\leq \R{D}^2m^2 \norm{(\abs{x}+2)^{m-1}\phi}^2\,.
\end{align*}
Hence,
\[
\norm{(f_\epsilon(\abs{x}))(t)\psi}\leq \norm{f_\epsilon(\abs{x})\psi} + \R{D}m\sum_{q=0}^{m-1}\binom{m-1}{q}2^{m-1-q}\int_0^t \norm{x^q\eul^{-\ii sH}\psi}\,\dd s.
\]
Since $f_\epsilon(\abs{x})\leq\abs{x}^m$, using the induction hypothesis we get
\[
\norm{(f_\epsilon(\abs{x}))(t)\psi}\leq \norm{x^m\psi}+\R{D}m\sum_{q=0}^{m-1}\binom{m-1}{q}2^{m-1-q}\sum_{r=0}^q\tilde{p}_{q-r+1}(t)\norm{x^r\psi}\,.
\]
As the RHS is independent of $\epsilon$, taking $\epsilon \downarrow 0$ and using Fatou's lemma again yields $\norm{x^m(t)\psi}\leq \sum_{s=0}^m p_{m-s}(t)\norm{x^s\psi}$. The above also shows the coefficient of $\norm{x^m\psi}$ is $p_0(t)=1$. The top polynomial $p_m(t)$ is found by taking $q=m-1$ and $r=0$ and equals $\R{D}m\tilde{p}_m(t)$, where $\tilde{p}_m(t)\coloneqq\int_0^t p_{m-1}(s)\,\dd s$. As the leading term of $p_{m-1}(s)$ is $\R{D}^{m-1}s^{m-1}$ by hypothesis, the leading term of $\R{D}m\int_0^t p_{m-1}(s)\,\dd s$ is $\R{D}^m t^m$.  
\end{proof}

A posteriori, the formal differentiation \eqref{eq:diffform} is actually valid. Recall $x(t)\coloneqq\eul^{\ii tH}x\eul^{-\ii tH}$.

\begin{corollary}\label{cor:deri}
Under the same assumptions, $\lim\limits_{s\to t}x^m(s)\psi=x^m(t)\psi$ and 
\begin{equation}\label{eq:deridis}
\frac{\dd}{\dd t}x^m(t)\psi=\ii\eul^{\ii tH}[\C{A},x^m]\eul^{-\ii tH}\psi\,.
\end{equation}
\end{corollary}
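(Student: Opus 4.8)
The plan is to render the formal computation \eqref{eq:diffform} rigorous for the $m$-th power by the very truncation device already used in the proof of Theorem~\ref{thm:formel}, but now retaining the derivative identity rather than only norm bounds, and then to pass to the limit using the a priori estimates \eqref{eq:discretpolynom}. Throughout I use that $\norm{x^k\psi}\leq\norm{x^m\psi}^{k/m}\norm{\psi}^{(m-k)/m}<\infty$ for all $k\leq m$, by \eqref{eq:withmod}.

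First I would fix, as in the proof of Theorem~\ref{thm:formel}, the bounded multiplication operators $f_\epsilon(\abs{x})$ with $f_\epsilon(\lambda)=\lambda^m/(1+\epsilon\lambda^m)$. Since $H=\C{A}+V$ is a \emph{bounded} operator (degrees $\leq\R{D}$ and $V$ bounded) and $f_\epsilon(\abs{x})$ is bounded, the map $t\mapsto\eul^{\ii tH}f_\epsilon(\abs{x})\eul^{-\ii tH}\psi$ is norm-differentiable with derivative $\ii\eul^{\ii tH}[\C{A},f_\epsilon(\abs{x})]\eul^{-\ii tH}\psi$ (note $[V,f_\epsilon(\abs{x})]=0$), whence the exact identity
\begin{equation*}
\eul^{\ii tH}f_\epsilon(\abs{x})\eul^{-\ii tH}\psi=f_\epsilon(\abs{x})\psi+\ii\int_0^t\eul^{\ii sH}[\C{A},f_\epsilon(\abs{x})]\eul^{-\ii sH}\psi\,\dd s.
\end{equation*}
Now I let $\epsilon\downarrow 0$. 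On the left, $f_\epsilon(\abs{x})\eul^{-\ii tH}\psi\to x^m\eul^{-\ii tH}\psi$ in $\ell^2(G)$ by dominated convergence, the dominating function being $\abs{x}^m\abs{\eul^{-\ii tH}\psi}$, square-summable by Theorem~\ref{thm:formel}; applying the unitary $\eul^{\ii tH}$ gives $\eul^{\ii tH}f_\epsilon(\abs{x})\eul^{-\ii tH}\psi\to x^m(t)\psi$. The constant term converges to $x^m\psi$ for the same reason, using $\norm{x^m\psi}<\infty$. For the integrand, the bound $\abs{f_\epsilon'(\lambda)}\leq m\lambda^{m-1}$ from the proof of Theorem~\ref{thm:formel} gives $\norm{[\C{A},f_\epsilon(\abs{x})]\phi}\leq\R{D}m\norm{(\abs{x}+2)^{m-1}\phi}$ uniformly in $\epsilon$, and the same estimate holds with $f_\epsilon(\abs{x})$ replaced by $x^m$; combined with the entrywise convergence $[\C{A},f_\epsilon(\abs{x})]\eul^{-\ii sH}\psi(x)\to[\C{A},x^m]\eul^{-\ii sH}\psi(x)$ (a finite sum over neighbours of $x$), dominated convergence on $\ell^2(G)$ yields convergence of the integrand in $\ell^2$ for each $s$. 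Since by \eqref{eq:discretpolynom} the quantity $\norm{(\abs{x}+2)^{m-1}\eul^{-\ii sH}\psi}$ is bounded by a polynomial in $s$, hence locally bounded independently of $\epsilon$, a further dominated convergence in the $s$-integral produces
\begin{equation*}
x^m(t)\psi=x^m\psi+\ii\int_0^t\eul^{\ii sH}[\C{A},x^m]\eul^{-\ii sH}\psi\,\dd s.\tag{$\star$}
\end{equation*}

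From $(\star)$ the first assertion is immediate, and for every index $k\leq m$: the integrand has locally bounded norm, so $t\mapsto x^k(t)\psi$ is locally Lipschitz, in particular $\lim_{s\to t}x^k(s)\psi=x^k(t)\psi$. To obtain \eqref{eq:deridis} it remains to see that $s\mapsto G(s)\coloneqq\ii\eul^{\ii sH}[\C{A},x^m]\eul^{-\ii sH}\psi$ is norm-continuous, since the fundamental theorem of calculus applied to $(\star)$ then gives $\frac{\dd}{\dd t}x^m(t)\psi=G(t)$. Writing $G(s)-G(s')$ as the sum of a term governed by $\eul^{\ii sH}-\eul^{\ii s'H}$ — which tends to $0$ in operator norm, $H$ being bounded, acting on the norm-bounded family $[\C{A},x^m]\eul^{-\ii s'H}\psi$ — and the term $\ii\eul^{\ii sH}[\C{A},x^m](\eul^{-\ii sH}\psi-\eul^{-\ii s'H}\psi)$, the latter is $\leq\R{D}m\norm{(\abs{x}+2)^{m-1}(\eul^{-\ii sH}\psi-\eul^{-\ii s'H}\psi)}$; since $x^j\eul^{-\ii sH}\psi=\eul^{-\ii sH}x^j(s)\psi$, this reduces, via strong continuity of the group, to the continuity of $s\mapsto x^j(s)\psi$ for $j\leq m-1$, already obtained from $(\star)$. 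Hence $G$ is norm-continuous and \eqref{eq:deridis} follows.

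The only genuinely delicate point is the passage $\epsilon\downarrow 0$ in the integral term: one needs an $\epsilon$-uniform $\ell^2$-domination for $[\C{A},f_\epsilon(\abs{x})]\eul^{-\ii sH}\psi$ and then an $s$-uniform one, both supplied precisely by the moment bounds of Theorem~\ref{thm:formel} together with $\abs{f_\epsilon'}\leq m(\cdot)^{m-1}$. Note also the mild bootstrap: identity $(\star)$ is first derived for every $k\leq m$ (using only Theorem~\ref{thm:formel}), which yields continuity of all the lower moments $x^j(\cdot)\psi$, and this is then fed back to get norm-continuity of $G$ for the given $m$; no separate induction on $m$ is needed.
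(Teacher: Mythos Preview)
Your proof is correct and takes a genuinely different route from the paper. You first pass to the limit $\epsilon\downarrow 0$ in the \emph{integral} identity for the truncated moment (using dominated convergence in $\ell^2$ and then in the Bochner integral over $s$), obtaining the representation $(\star)$; continuity is then immediate from local boundedness of the integrand, and the derivative formula follows from the fundamental theorem of calculus once you check norm-continuity of $G$, which you reduce via the commutator bound to the already-established continuity of the lower moments $x^j(\cdot)\psi$.

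The paper instead argues the two conclusions separately and more directly: for continuity it splits $\norm{x^m(s)\psi-x^m(t)\psi}$ by the triangle inequality and controls each piece with Fatou's lemma and the bound $\norm{f_\epsilon(\abs{x})H\eul^{-\ii\alpha H}\psi}\leq\norm{x^m\eul^{-\ii\alpha H}H\psi}$; for the derivative it works with the difference quotient, decomposes it into four terms, and needs a second-order Taylor estimate involving $\norm{x^rH^2\psi}$. Your approach is more economical: it uses dominated convergence (rather than Fatou) to upgrade the norm inequality from Theorem~\ref{thm:formel} to the exact vector identity $(\star)$, after which no second-derivative bound is needed. The paper's approach, on the other hand, is closer in spirit to what is done in the continuous case (Corollary~\ref{cor:deriv}), where $H$ is unbounded and one cannot rely on operator-norm continuity of $\eul^{\ii sH}$; so its extra machinery is not wasted there. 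A minor stylistic remark: in your treatment of the first piece of $G(s)-G(s')$ the phrase ``norm-bounded family'' is slightly misleading since for fixed $s'$ it is a single vector, but the conclusion is unaffected.
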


\begin{proof}
We have
\begin{align}\label{eq:contmoms}
\norm{x^m(s)\psi - x^m(t)\psi}
&= \norm{\eul^{\ii sH}x^m\eul^{-\ii sH}\psi - \eul^{\ii tH}x^m\eul^{-\ii tH}\psi}\notag \\
&\leq \norm{\eul^{\ii sH} x^m \eul^{-\ii sH}\psi - \eul^{\ii sH} x^m \eul^{-\ii tH}\psi} + \norm{\eul^{\ii sH}x^m\eul^{-\ii tH}\psi - \eul^{\ii tH}x^m\eul^{-\ii tH}\psi}\notag\\
&= \norm{ x^m \eul^{-\ii sH}\psi - x^m \eul^{-\ii tH}\psi} + \norm{\eul^{\ii sH}x^m\eul^{-\ii tH}\psi - \eul^{\ii tH}x^m\eul^{-\ii tH}\psi}\,.
\end{align}

We know from Theorem~\ref{thm:formel} that $\phi=x^m\eul^{-\ii tH}\psi\in D(H)=\ell^2(G)$ for any $t\geq 0$, so $\lim_{s\to t}\eul^{\ii sH}x^m\eul^{-\ii tH}\psi=\eul^{\ii tH}x^m\eul^{-\ii tH}\psi$. This settles the second term in the RHS. 

For the first term, we use Fatou's lemma as in \eqref{eq:fatou}. Let $f_\epsilon(\lambda)=\frac{\lambda^m}{1+\epsilon\lambda^m}$. We have $f_{\epsilon}(\abs{x})\eul^{-\ii tH}\psi=f_\epsilon(\abs{x})\eul^{-\ii sH}\psi+\int_s^t\frac{\dd}{\dd\alpha}f_\epsilon(\abs{x})\eul^{-\ii\alpha H}\psi\,\dd\alpha$. Now
\[
\Bigl\lVert\frac{\dd}{\dd \alpha}f_\epsilon(\abs{x})\eul^{-\ii\alpha H}\psi\Bigr\rVert= \norm{f_\epsilon(\abs{x})H\eul^{-\ii\alpha H}\psi} \leq \norm{x^m \eul^{-\ii\alpha H}H\psi} \leq \sum_{r=0}^{m} p_{m-r}(\alpha)\norm{x^r H\psi}
\]
for some polynomials $p_k$, by Theorem~\ref{thm:formel}. These are uniformly bounded by some $M(t)$ for all $\alpha\in [t-1,t+1]$ and we get $\norm{f_\epsilon(\abs{x}) \eul^{-\ii sH}\psi - f_\epsilon(\abs{x}) \eul^{-\ii tH}\psi} \leq \left|t-s\right|M(t)\sum_{r=0}^{m}\norm{x^rH\psi}$, with $M(t)$ independent of $\epsilon$. 

By Fatou's lemma, $\norm{x^m \eul^{-\ii sH}\psi - x^m \eul^{-\ii tH}\psi}^2 \leq \liminf\limits_{\epsilon\to 0}\norm{f_\epsilon(\abs{x}) \eul^{-\ii sH}\psi - f_\epsilon(\abs{x}) \eul^{-\ii tH}\psi}^2$. Thus, $\norm{x^m \eul^{-\ii sH}\psi - x^m \eul^{-\ii tH}\psi}\leq \left|t-s\right|M(t)\sum_{r=0}^{m-1}\norm{x^rH\psi}\to 0$ as $s\to t$. Recalling \eqref{eq:contmoms}, this completes the proof of the first claim.

For the derivative we first make some simplifications. Given $\delta>0$,
\begin{align}\label{eq:derfo}
&\Bigl\lVert\frac{x^m(t+\delta)\psi - x^m(t)\psi}{\delta} - \ii\eul^{\ii tH}[H,x^m]\eul^{-\ii tH}\psi\Bigr\rVert \notag\\
&\quad =\Bigl\lVert\frac{\eul^{\ii\delta H} x^m\eul^{-\ii(t+\delta)H}\psi - x^m\eul^{-\ii tH}\psi}{\delta} - \ii [H,x^m]\eul^{-\ii tH}\psi\Bigr\rVert \notag\\
&\quad \leq  \Bigl\lVert\frac{\eul^{\ii\delta H}  - I}{\delta}x^m\eul^{-\ii(t+\delta)H}\psi - \ii H x^m\eul^{-\ii(t+\delta)H}\psi\Bigr\rVert \notag\\
&\qquad + \norm{Hx^m(\eul^{-\ii tH}\psi-\eul^{-\ii(t+\delta)H})\psi} + \Bigl\lVert x^m\Bigl(\frac{\eul^{-\ii(t+\delta)H} - \eul^{-\ii tH}}{\delta}\psi +\ii H\eul^{-\ii tH}\psi\Bigr)\Bigr\rVert\notag\\
&\quad \leq \Bigl\lVert\Bigl(\frac{\eul^{\ii\delta H}  - I}{\delta}-\ii H\Bigr)x^m\eul^{-\ii tH}\psi\Bigr\rVert+\Bigl\lVert\Bigl(\frac{\eul^{\ii\delta H}  - I}{\delta}-\ii H\Bigr)x^m(\eul^{-\ii(t+\delta)H}\psi - \eul^{-\ii tH}\psi)\Bigr\rVert\notag\\
&\qquad + \norm{Hx^m(\eul^{-\ii tH}\psi-\eul^{-\ii(t+\delta)H})\psi} + \Bigl\lVert x^m\Big(\frac{\eul^{-\ii(t+\delta)H} - \eul^{-\ii tH}}{\delta}\psi +\ii H\eul^{-\ii tH}\psi\Big)\Bigr\rVert\,.
\end{align}

For the first term, we know from Theorem~\ref{thm:formel} that $x^m\eul^{-\ii tH}\psi\in D(H)=\ell^2(G)$, so this term vanishes as $\delta\to 0$. For the second term, we use the spectral theorem: $\norm{\frac{\eul^{\ii\delta H}-I}{\delta}\phi}^2=\int|\frac{\eul^{\ii\delta\lambda}-1}{\delta}|^2\,\dd\mu_\phi(\lambda)\leq \int\lambda^2\,\dd\mu_\phi(\lambda)=\norm{H\phi}^2$. With this bound, we see the second and third terms vanish as $\delta\to 0$ by the argument of \eqref{eq:contmoms} (note that $H$ is bounded; see also Corollary~\ref{cor:deriv} for unbounded operators). So it remains to control the last term. For this we first use Fatou's lemma to replace $x^m$ by $f_\epsilon(\abs{x})$ as follows.

We know that 
\[
\frac{\dd}{\dd s} f_\epsilon(\abs{x})\eul^{-\ii sH}\psi=f_\epsilon(\abs{x})(-\ii H\eul^{-\ii sH})\psi\quad\text{and}\quad\frac{\dd}{\dd s} f_\epsilon(\abs{x})(-\ii H\eul^{-\ii sH}\psi)=-f_\epsilon(\abs{x}) H^2\eul^{-\ii sH}\psi. 
\]
It follows from \cite{Ca67}*{Theorem~5.6.2} that for small $\delta$,
\[
\Bigl\lVert f_\epsilon(\abs{x})\Big(\frac{\eul^{-\ii(t+\delta)H} - \eul^{-\ii tH}}{\delta}\psi +\ii H\eul^{-\ii tH}\psi\Big)\Bigr\rVert \leq \frac{\abs{\delta}}{2}\sup_{s\in [t-1,t+1]} \norm{x^mH^2\eul^{-\ii sH}\psi},
\]
where we used that $f_\epsilon(\abs{x})\leq\abs{x}^m$. Using Theorem~\ref{thm:formel} again, we may bound the RHS by $\frac{\abs{\delta}}{2}M(t)\sum_{r=0}^m\norm{x^r H^2\psi}$, with $M(t)$ independent of $\epsilon$. Fatou's lemma implies as before that the last term in \eqref{eq:derfo} is now bounded by $\frac{\abs{\delta}}{2}M(t)\sum_{r=0}^m\norm{x^rH^2\psi}$. Taking $\delta\to 0$ finally completes the proof.
\end{proof}

\begin{remark}\label{rem:oddup}
Theorem~\ref{thm:formel} implies that 
\[
\limsup_{t\to+\infty}\frac{1}{t^{2m}}\sum_{x\in G}\abs{x}^{2m}|\eul^{-\ii tH}\psi(x)|^2 \leq \R{D}^{2m}\norm{\psi}^2.
\]
This also implies a control for odd powers. Namely, if $\norm{x^m\psi}<\infty$, letting $\psi_t\coloneqq\eul^{-\ii tH}\psi$, we have by Cauchy--Schwarz that $\sum \abs{x}^m |\psi_t(x)|^2 \leq (\sum \abs{x}^{2m}|\psi_t(x)|^2)^{1/2}(\sum|\psi_t(x)|^2)^{1/2}$. So $\frac{1}{t^m}\sum \abs{x}^m |\psi_t(x)|^2 \leq (\frac{1}{t^{2m}}\sum \abs{x}^{2m}|\psi_t(x)|^2)^{1/2}\norm{\psi}$ and thus 
\[
\limsup_{t\to+\infty}\frac{1}{t^{m}}\sum_{x\in G}\abs{x}^{m}|\psi_t(x)|^2 \leq\R{D}^{m}\norm{\psi}^2.
\]
\end{remark}

\subsection{Continuous case}  \label{subsec:continu}

Assume now that on $\D{R}^d$, we have a potential $V\in C^{m-1}$ such that $V$ and its partial derivatives of order $<m$ are bounded. Let $H=H_0+V=-\Delta+V$. Then we claim that
\begin{equation}\label{eq:pre}
\norm{D^m\phi}^2\leq C_{m,V}\sum_{k=0}^m\norm{H^k\phi}.
\end{equation}
Indeed, using $X^m-Y^m=\sum_{p=0}^{m-1}X^p(X-Y)Y^{m-1-p}$, we have
\[
\norm{D^m\phi}^2=\scal{\phi,D^{2m}\phi}\leq\scal{\phi,H_0^m\phi}= \scal{\phi,H^m\phi}-\sum_{p=0}^{m-1}\scal{\phi,H_0^pVH^{m-1-p}\phi}.
\]
with the convention $\sum_{p=0}^{-1}\coloneqq 0$.

Now \eqref{eq:pre} is clear for $m=0$. If \eqref{eq:pre} holds for all $p<m$, then using Cauchy--Schwarz, Leibniz formula, and our assumption on $V$, we get
\[
|\scal{\phi, H_0^p VH^{m-1-p}\phi}|\leq c_{d,p}\norm{D^p\phi}\cdot\norm{D^p VH^{m-1-p}\phi}\leq c_{V,m} \sum_{q=0}^p \norm{D^p\phi}\cdot\norm{D^qH^{m-1-p}\phi}.
\]
So by induction hypothesis,
\[
\norm{D^m\phi}^2 \leq \norm{\phi}\cdot\norm{H^m\phi} + c_{V,m}'\sum_{p=0}^{m-1}\sum_{r=0}^p\sum_{q=0}^p\sum_{s=0}^q \norm{H^r\phi}\cdot\norm{H^{m-1-p+s}\phi}.
\]

Using $ab\leq\frac{1}{2}(a^2+b^2)$, we thus get $\norm{D^m\phi}^2 \leq c_{V,m}''\sum_{k=0}^m\norm{H^k\phi}^2$, implying \eqref{eq:pre}. This implies that for $D(t)\coloneqq\eul^{\ii tH}D\eul^{-\ii tH}$ and $\psi_t\coloneqq\eul^{-\ii tH}\psi$, we have
\begin{equation}\label{eq:ptr}
\norm{D(t)^r\psi}=\norm{D^r\psi_t}\leq C_{r,V}\sum_{k=0}^r\norm{H^k\psi} \leq C_{r,V}'\norm{\psi}_{H^{2r}},
\end{equation}
independently of $t$, where we used that $H^k\eul^{-\ii tH}=\eul^{-\ii tH}H^k$. 

\begin{theorem}\label{thm:momconts}
If $V\in C^{m-1}(\D{R}^d)$, if $V$ and its partial derivatives of order $<m$ are bounded, and if $\psi\in H^{2m}(\D{R}^d)$, then $\norm{x^m\eul^{-\ii tH}\psi}\leq 2^{m-1}(\norm{x^m\psi}+C_mt^m \sum_{k=0}^m\norm{H^k\psi})$.
\end{theorem}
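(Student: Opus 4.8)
The plan is to transplant the Heisenberg-picture argument of Theorem~\ref{thm:formel} to $\D{R}^d$, the new feature being that $[H,x_k^m]=[-\Delta,x_k^m]$ now produces a \emph{first}-order term, so the estimate becomes a differential inequality rather than a direct integration. Write $\psi_t\coloneqq\eul^{-\ii tH}\psi$ and set $a(t)\coloneqq\norm{x^m\psi_t}$, so that $a(t)^2=\sum_{k=1}^d\scal{\psi_t,x_k^{2m}\psi_t}$. Using $[-\Delta,x_k^m]=-2m\,x_k^{m-1}\partial_{x_k}-m(m-1)x_k^{m-2}$ and integrating by parts in $x_k$ (the purely imaginary contributions cancel, as they must for a real quantity), one obtains
\[
\frac{\dd}{\dd t}a(t)^2=4m\sum_{k=1}^d\Im\scal{\psi_t,x_k^{2m-1}\partial_{x_k}\psi_t}\le 4m\,a(t)\,N(t),\qquad N(t)\coloneqq\Bigl(\sum_{k=1}^d\norm{x_k^{m-1}\partial_{x_k}\psi_t}^2\Bigr)^{1/2},
\]
where the inequality uses $x_k^{2m-1}=x_k^m\cdot x_k^{m-1}$ together with Cauchy--Schwarz in the integral and then in $k$. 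Hence $a'(t)\le 2m\,N(t)$ and $a(t)\le\norm{x^m\psi}+2m\int_0^t N(s)\,\dd s$. To make this rigorous for the unbounded multiplication operator $x_k^m$, I would first run the same truncation $f_\epsilon(x_k)$ with $\abs{f_\epsilon}\le\abs{x_k}^m$ as in the proof of Theorem~\ref{thm:formel}, derive the inequality for $\norm{f_\epsilon(x_k)\psi_t}$ (all quantities then finite), and let $\epsilon\downarrow 0$ by monotone convergence/Fatou (which also shows $x_k^m\psi_t\in L^2$).

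The heart of the matter is controlling $N(s)$, i.e.\ the weighted mixed norm $\norm{x_k^{m-1}\partial_{x_k}\phi}$. Here I would invoke the weighted Gagliardo--Nirenberg inequality
\[
\norm{x_k^{m-1}\partial_{x_k}\phi}\le C_m\,\norm{x_k^m\phi}^{\frac{m-1}{m}}\norm{\partial_{x_k}^m\phi}^{\frac1m}+R_m(\phi),
\]
where $R_m(\phi)$ collects weighted norms $\norm{x_k^p\partial_{x_k}^q\phi}$ of strictly lower total order $p+q<m$; this follows from the standard iterated integration by parts in $x_k$, i.e.\ from log-convexity of $j\mapsto\norm{x_k^{m-j}\partial_{x_k}^{j}\phi}$ along the segment joining $\norm{x_k^m\phi}$ to $\norm{\partial_{x_k}^m\phi}$ (the remaining coordinates being inert), combined with the elementary interpolations \eqref{eq:holnor}. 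Taking $\phi=\psi_s$: the factor $\norm{\partial_{x_k}^m\psi_s}\le\norm{D^m\psi_s}\le C_{m,V}\sum_{k=0}^m\norm{H^k\psi}$ is bounded \emph{uniformly in $s$} by \eqref{eq:pre}--\eqref{eq:ptr}, while the lower-order terms in $R_m(\psi_s)$ are controlled by Theorem~\ref{thm:momconts} \emph{at level $m-1$} (induction on $m$) by polynomials in $s$ of degree $<m$ with coefficients $\le C_m(\norm{x^m\psi}+\sum_k\norm{H^k\psi})$. Writing $M\coloneqq\sum_{k=0}^m\norm{H^k\psi}$, this gives $N(s)\le C_m\,a(s)^{\frac{m-1}{m}}M^{\frac1m}+g(s)$ with $g$ a polynomial in $s$ of degree $<m$ and coefficients $\le C_m(\norm{x^m\psi}+M)$.

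Finally, feeding this back into $a(t)\le\norm{x^m\psi}+2m\int_0^t N(s)\,\dd s$ gives a Bihari-type integral inequality; because the nonlinear term carries the exponent $\tfrac{m-1}{m}<1$, a comparison argument shows that $a(t)^{1/m}$ grows at most linearly, $a(t)^{1/m}\le\norm{x^m\psi}^{1/m}+C_m\,t\,M^{1/m}+(\text{lower order in }t)$, and then the elementary inequality $(A+B)^m\le 2^{m-1}(A^m+B^m)$ produces exactly the claimed constant:
\[
\norm{x^m\eul^{-\ii tH}\psi}=a(t)\le 2^{m-1}\Bigl(\norm{x^m\psi}+C_m\,t^m\sum_{k=0}^m\norm{H^k\psi}\Bigr).
\]
The cases $m=0,1$ are immediate ($m=1$ reduces to $a'(t)\le 2\norm{D\psi_t}$ with $\norm{D\psi_t}$ uniformly bounded by \eqref{eq:ptr}, after summing over coordinates). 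The two places where I expect to do real work are: (i) the rigorous differentiation of $t\mapsto x_k^m(t)\psi$ for the unbounded operator $x_k^m$, handled by the $f_\epsilon$-truncation and Fatou exactly as in Theorem~\ref{thm:formel}; and (ii) the sharp weighted interpolation bound for $\norm{x_k^{m-1}\partial_{x_k}\psi_s}$, whose exponent $\tfrac{m-1}{m}$ is precisely what keeps the growth polynomial and, via $(A+B)^m\le 2^{m-1}(A^m+B^m)$, pins down the constant $2^{m-1}$.
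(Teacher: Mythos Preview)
Your outline is essentially the paper's own strategy: differentiate $\norm{x^m\psi_t}^2$, bound the resulting $\norm{x_k^{m-1}\partial_{x_k}\psi_t}$ by a weighted interpolation against $\norm{x_k^m\psi_t}$ and $\norm{D_k^m\psi_t}$, control the latter by \eqref{eq:ptr}, integrate, and finish with $(A+B)^m\le 2^{m-1}(A^m+B^m)$; the $f_\epsilon$/Fatou regularization is also the same. Where the two diverge is in how the interpolation is obtained. The paper does \emph{not} use induction on $m$: it iterates the commutator identity \eqref{eq:toapply}--\eqref{eq:toapply2} about $\log_2 m$ times and then invokes the uncertainty inequalities \eqref{eq:produp}--\eqref{eq:produp2} to absorb \emph{every} leftover mixed term into $\norm{x_k^m\psi_t}$ and $\norm{D_k^m\psi_t}$, arriving at the clean differential inequality \eqref{eq:goon}, $\frac{\dd}{\dd t}\norm{x^m(t)\psi}^2\le C\norm{x^m(t)\psi}^{2-1/m}M^{1/m}$, with no remainder at all. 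This integrates directly to $\norm{x^m(t)\psi}^{1/m}\le\norm{x^m\psi}^{1/m}+CtM^{1/m}$, and no Bihari comparison is needed.

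Your induction route can be made to work, but be careful with the form of $R_m$: the remainder that actually falls out of iterated integration by parts is \emph{not} a sum of norms $\norm{x_k^p\partial_{x_k}^q\phi}$ with $p+q<m$ as you state. Already for $m=3$ one finds products like $\norm{x_k^3\phi}\,\norm{\partial_{x_k}\phi}$ (carry it out by hand), which still contain the top weight $\norm{x_k^m\phi}$ and so are not covered by the level-$(m{-}1)$ hypothesis. The fix is exactly \eqref{eq:produp2}, which converts such products back into $\norm{x_k^m\phi}^{(m-1)/m}\norm{D_k^m\phi}^{1/m}$ up to constants---and once you use it, your remainder disappears and the argument collapses to the paper's, with the induction and the Bihari step becoming superfluous.
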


A sketch of an earlier result can also be found in \cite{RaSi78}*{Theorem~4.1}. We first give a formal proof, then indicate how to make it rigorous.

\begin{proof}[Proof (formal)]
Recall that $x^m=(x_1^m,\dots,x_d^m)$. In this proof we denote $x^{2m}= x_1^{2m}+\dots+x_d^{2m}$ instead of $\abs{x^m}_2^2$ to avoid too cumbersome formulas.

Formally, $\frac{\dd}{\dd t} x^{2m}(t)\psi=\ii\eul^{\ii tH}[-\Delta, x^{2m}]\eul^{-\ii tH}\psi$ for $\psi\in D(H)$. But, for $F$ smooth on $\D{R}^d$ we have $[-\Delta,F]\phi=-(\Delta F)\phi - 2\nabla F\cdot\nabla\phi=-\nabla\cdot[(\nabla F)\phi]-\nabla F\cdot\nabla\phi$. In particular, for $F(x)=x^{2m}$, since $\nabla x^{2m}=2m(x_1^{2m-1},\dots,x_d^{2m-1})$, we get
\begin{align*}
\frac{\dd}{\dd t}\norm{x^m(t)\psi}^2=\frac{\dd}{\dd t}\scal{\psi,x^{2m}(t)\psi} 
&= \ii\scal{\psi_t,[-\Delta,x^{2m}] \psi_t}  \\
&= \scal{\psi, D(t)\cdot[(\nabla x^{2m})(t)\psi] + (\nabla x^{2m})(t)\cdot D(t)\psi}\\
&\leq 4m \sum_{j=1}^d\norm{x_j^m(t)\psi}\cdot\norm{x_j^{m-1}(t)D_j(t)\psi},
\end{align*}
where $D_j\coloneqq-\ii\partial_{x_j}$ and $D_j(t)\coloneqq\eul^{\ii tH}D_j\eul^{-\ii tH}$. We have in general
\begin{align}
\norm{x_j^n(t)D_j^k(t)\psi}&=\scal{\psi,D_j^k(t)x_j^{2n}(t)D_j^k(t)\psi}^{1/2}\notag\\
\label{eq:toapply}
&\leq\sum_{p=0}^k\binom{k}{p}(2n)\cdots(2n-p)\abs{\scal{\psi,x_j^{2n-p}(t)D_j^{2k-p}(t)\psi}}^{1/2}\\
\label{eq:toapply2}
&\leq\sum_{p=0}^k c_{p,k,n}\norm{x_j^m(t)\psi}^{1/2}\norm{x_j^{2n-p-m}(t)D_j^{2k-p}(t)\psi}^{1/2}.
\end{align}

We may apply the same inequality to $\norm{x_j^{2n-p-m}(t)D_j^{2k-p}(t)\psi}$. Doing this $\ell-1$ times we see that the term with highest power is
\[
C_{k,n}\norm{x_j^m(t)\psi}^{\frac{1}{2}+\frac{1}{4}+\dots+\frac{1}{2^{\ell-1}}}\norm{x_j^{2^{\ell-1}(n-m)+m}(t)D_j^{2^{\ell-1}k}(t)\psi}^{\frac{1}{2^{\ell-1}}}.
\]
\begin{mpoweroftwo*}
Let $n=m-1=2^\ell-1$. Then by applying \eqref{eq:toapply2} $\ell-1$ times, we get
\begin{equation}\label{eq:compactform}
\frac{\dd}{\dd t}\norm{x(t)^m\psi}^2\leq C_m\sum_{j=1}^d\norm{x_j^m(t)\psi}\cdot\norm{x_j^m(t)\psi}^{1-\frac{1}{2^{\ell-1}}}\sum_{r=0}^{2^{\ell-1}}c_{r,m}\norm{x_j^r(t)D_j^r(t)\psi}^{\frac{1}{2^{\ell-1}}}.
\end{equation}
In fact, the terms have the general form
\[
x_j^{2^{\ell-1}(n-m)+m-2^{\ell-2}p_1-2^{\ell-3}p_2-\dots-p_{\ell-1}}(t)D_j^{2^{\ell-1}k-2^{\ell-2}p_1-\dots-p_{\ell-1}}(t)\psi.
\]
For $m=2^\ell$, $n=m-1$, $k=1$, we see the powers of $x_j(t)$ and $D_j(t)$ match indeed.

We next apply \eqref{eq:toapply} plus Cauchy--Schwarz to get
\begin{align}\label{eq:compa}
&\frac{\dd}{\dd t}\norm{x^m(t)\psi}^2\notag\\
&\qquad\leq C_m\sum_{j=1}^d\norm{x_j^m(t)\psi}^{2-\frac{1}{2^{\ell-1}}}\sum_{r=0}^{2^{\ell-1}} c_{r,m}\sum_{p=0}^rc_{p,r}\norm{x_j^{2r-p}(t)\psi}^{\frac{1}{2^\ell}}\norm{D_j^{2r-p}(t)\psi}^{\frac{1}{2^{\ell}}}\,.
\end{align}
Recalling \eqref{eq:produp} and \eqref{eq:ptr}, we conclude that for $m=2^\ell$,
\begin{align}\label{eq:goon}
\frac{\dd}{\dd t}\norm{x^m(t)\psi}^2 
&\leq C_{m,d}\sum_{j=1}^d\norm{x_j^m(t)\psi}^{2-\frac{1}{m}}\norm{D_j^m(t)\psi}^{1/m}\notag\\
&\leq C_{m,d,V}\norm{x^m(t)\psi}^{2-\frac{1}{m}}\sum_{k=0}^m\norm{H^k\psi}^{1/m}.
\end{align}
Thus, 
\[
\frac{\dd}{\dd t}\scal{\psi,x^{2m}(t)\psi}^{\frac{1}{2m}}=\frac{1}{2m}\scal{\psi,x^{2m}(t)\psi}^{\frac{1}{2m}-1}\frac{\dd}{\dd t}\scal{\psi,x^{2m}(t)\psi}\leq c\sum_{k=0}^m\norm{H^k\psi}^{1/m}.
\]
We thus have $\norm{x^m(t)\psi}^{1/m}\leq\norm{x^m\psi}^{1/m}+ Ct\sum_{k=0}^m\norm{H^k\psi}^{1/m}$. The result follows in this case.
\end{mpoweroftwo*}
\begin{marbitrary*}
For general $m$ we let $\ell$ such that $2^{\ell} \leq m < 2^{\ell+1}$. Say $m=2^\ell+q$ with $0\leq q<2^\ell$. Then following the scheme, we apply \eqref{eq:toapply2} $\ell-1$ times. Then $\norm{x_j^r(t)D_j^r(t)\psi}^{\frac{1}{2^{\ell-1}}}$ in \eqref{eq:compactform} is replaced by $\norm{x_j^{r+q}(t)D_j^r(t)\psi}^{\frac{1}{2^{\ell-1}}}$. The proof must be slightly modified as now $2r+2q\leq 2^\ell+2q=m+q$, i.e., the powers of $x_j(t)$ in \eqref{eq:compa} can exceed $m$. So to the $q$ highest terms $r=2^{\ell-1}-q+1,\dots,2^{\ell-1}$, we apply \eqref{eq:toapply2} once more to get $\sum_{p=0}^r c_{p,q,r}\norm{x_j^m(t)\psi}^{\frac{1}{2^\ell}}\norm{x_j^{2r+2q-p-m}(t)D_j^{2r-p}(t)\psi}^{\frac{1}{2^\ell}}$. We can now apply \eqref{eq:toapply} plus Cauchy--Schwarz to this and the lower terms as before. Then \eqref{eq:compa} is replaced by
\begin{align} \label{eq:e}
\frac{\dd}{\dd t}\norm{x(t)^m\psi}^2 
&\leq C_m\sum_{j=1}^d\norm{x_j^m(t)\psi}^{2-\frac{1}{2^{\ell-1}}}\Big(\sum_{r=2^{\ell-1}-q+1}^{2^{\ell-1}}\sum_{p=0}^r c_{p,q,r}\norm{x_j^m(t)\psi}^{\frac{1}{2^\ell}}\notag\\
&\qquad\times\sum_{p'=0}^{2r-p}c_{p',r,p,q,m}\norm{x_j^{4r+4q-2p-2m-p'}(t)\psi}^{\frac{1}{2^{\ell+1}}}\norm{D_j^{4r-2p-p'}(t)\psi}^{\frac{1}{2^{\ell+1}}}\notag\\
&\qquad+\sum_{r=0}^{2^{\ell-1}-q} c_{r,m}\sum_{p=0}^rc_{p,r}\norm{x_j^{2r+2q-p}(t)}^{\frac{1}{2^\ell}}\norm{D_j^{2r-p}(t)\psi}^{\frac{1}{2^{\ell}}}\Big)\,.
\end{align}
We may now apply \eqref{eq:produp2} and \eqref{eq:ptr} to get
\[
\norm{x_j^{4r+4q-2p-2m-p'}(t)\psi}\cdot\norm{D_j^{4r-2p-p'}(t)\psi}\leq 2^{2m-r-2q}\norm{x_j^m(t)\psi}^{\frac{2q}{m}}\norm{D_j^m(t)\psi}^{\frac{2m-2q}{m}}.
\]
Recall $q=m-2^{\ell}$, so $\frac{2q}{2^{\ell+1}m}=\frac{1}{2^\ell}-\frac{1}{m}$. In the first sum of \eqref{eq:e}, $\norm{x^m_j(t)\psi}$ thus gets elevated to the power $2-\frac{1}{2^{\ell-1}}+\frac{1}{2^\ell}+\frac{1}{2^\ell}-\frac{1}{m}=2-\frac{1}{m}$ as required. For the lower terms, the power is similarly $2-\frac{1}{2^{\ell-1}}+\frac{2m+2q}{2^{\ell+1}m}=2-\frac{1}{m}$. This completes the formal proof.
\end{marbitrary*}
\renewcommand{\qed}{}
\end{proof}

\begin{proof}[Proof (completed)]
To make the formal proof rigorous we consider the operator of multplication by
\[
F_\epsilon(x)\coloneqq\frac{x^{2m}}{1+\epsilon x^{2m}}=\frac{x_1^{2m}+\dots+x_d^{2m}}{1+\epsilon(x_1^{2m}+\dots+x_d^{2m})},\quad\epsilon>0.
\]
This is a bounded operator. For $F_\epsilon(x)(t)\coloneqq\eul^{\ii tH}F_\epsilon(x)\eul^{-\ii tH}$ we get
\[
\frac{\dd}{\dd t}F_\epsilon(x)(t)\psi=\ii\eul^{\ii tH}[-\Delta, F_\epsilon(x)]\eul^{-\ii tH}\psi\text{ for }\psi\in D(H). 
\]
Again $[-\Delta,F_\epsilon]\phi=-(\Delta F_\epsilon)\phi - 2\nabla F_\epsilon\cdot\nabla\phi=-\nabla\cdot[(\nabla F_\epsilon)\phi]-\nabla F_\epsilon\cdot\nabla\phi$. On the other hand $\nabla F_\epsilon=2m(\frac{x_1^{2m-1}}{(1+\epsilon x^{2m})^2},\dots,\frac{x_d^{2m-1}}{(1+\epsilon x^{2m})^2})$. So 
\begin{align*}
&\frac{\dd}{\dd t}\scal{\psi,F_\epsilon(x)(t)\psi}=\scal{\psi,D(t)\cdot [(\nabla F_\epsilon)(x)(t))\psi]+(\nabla F_\epsilon)(x)(t)\cdot D(t)\psi}\\
&\qquad\quad=2m\sum_{j=1}^d\Bigl\langle\frac{x_j^{m-1}}{1+\epsilon x^{2m}}(t)D_j(t)\psi,\frac{x_j^{m}}{1+\epsilon x^{2m}}(t)\psi\Bigr\rangle+ \Bigl\langle\frac{x_j^{m}}{1+\epsilon x^{2m}}(t)\psi,\frac{x_j^{m-1}}{1+\epsilon x^{2m}}(t)D_j(t)\psi\Bigr\rangle\\
&\qquad\quad\leq 4m\sum_{j=0}^d \Bigl\lVert\frac{x_j^{m}}{1+\epsilon x^{2m}}(t)\psi\Bigr\rVert\Bigl\lVert\frac{x_j^{m-1}}{1+\epsilon x^{2m}}(t)D_j(t)\psi\Bigr\rVert.
\end{align*}
We have $\bigl\lVert\frac{x_j^{m}}{1+\epsilon x^{2m}}(t)\psi\bigr\rVert=\bigl\langle\psi_t,\frac{x_j^{2m}}{(1+\epsilon x^{2m})^2}\psi_t\bigr\rangle^{1/2}\leq\scal{\psi_t, F_\epsilon(x)\psi_t}^{1/2}=\scal{\psi, F_\epsilon(x)(t)\psi}^{1/2}$ where $\psi_t\coloneqq\eul^{-\ii tH}\psi$. On the other hand,
\begin{align*}
\Bigl\lVert\frac{x_j^r}{1+\epsilon x^{2m}}(t)D_j^k(t)\psi\Bigl\lVert 
&=\Bigl\langle\psi,D_j^k(t)\frac{x_j^{2r}}{(1+\epsilon x^{2m})^2}(t)D_j^k(t)\psi\Bigr\rangle^{1/2}\\
&\leq\sum_{p=0}^k \binom{k}{p}\Bigl\lvert\Bigl\langle\psi_t,D_j^p\frac{x_j^{2r}}{(1+\epsilon x^{2m})^2}D_j^{2k-p}\psi_t\Bigr\rangle\Bigr\rvert^{1/2}
\end{align*}
and
\[
\partial_{x_j}^p\frac{x_j^{2r}}{(1+\epsilon x^{2m})^2}=\sum_{\ell=0}^p\binom{p}{\ell}(2r)\cdots(2r-\ell+1) x_j^{2r-\ell}\partial_{x_j}^{p-\ell}\frac{1}{(1+\epsilon x^{2m})^2}.
\]
If $f_\epsilon(u)\coloneqq\frac{1}{(1+\epsilon u)^2}$ and $g(x)\coloneqq x^{2m}$ then by the Fa\`a di Bruno formula,
\[
\partial_{x_j}^n\frac{1}{(1+\epsilon x^{2m})^2}=\partial_{x_j}^nf_\epsilon(g(x))=\sum_{\substack{(m_1,\dots,m_n)\\\sum_{i=1}^nim_i=n}}c_{n,m_i}f_\epsilon^{(m_1+\dots+m_n)}(g(x))\prod_{i=1}^n(\partial_{x_j}^ig(x))^{m_i}\,.
\]
But $f_\epsilon^{(q)}(u)=(-\epsilon)^q(q+1)!(1+\epsilon u)^{-2-q}$ and $\partial_{x_j}^ig(x)=(2m)\cdots(2m-i+1)x_j^{(2m-i)}$. Thus,
\begin{align*}
\partial_{x_j}^n\frac{1}{(1+\epsilon x^{2m})^2} &=\sum_{\substack{(m_1,\dots,m_n)\\\sum_{i=1}^nim_i=n}}\tilde{c}_{n,m_i} \frac{\epsilon^{m_1+\dots+m_n}}{(1+\epsilon x^{2m})^{2+m_1+\dots+m_n}}(2m)x_j^{(2m-1)m_1}\\
&\qquad\quad\times (2m)(2m-1)x_j^{(2m-2)m_2}\cdots(2m)\cdots(2m-n+1)x_j^{(2m-n)m_n}\\
&=\sum_{\substack{(m_1,\dots,m_n)\\\sum_{i=1}^nim_i=n}}C_{n,m_i} \frac{\epsilon^{m_1+\dots+m_n}}{(1+\epsilon x^{2m})^{2+m_1+\dots+m_n}}x_j^{2m(m_1+\dots+m_n)-n}\,,
\end{align*}
where we used $\sum im_i=n$ in the last step. But 
\[
\epsilon ^{m_1+\dots+m_n}x_j^{2m(m_1+\dots+m_n)}\leq (1+\epsilon x^{2m})^{m_1+\dots+m_n}.
\]
Thus,
\begin{align*}
\partial_{x_j}^p\frac{x_j^{2r}}{(1+\epsilon x^{2m})^2}
&\leq\sum_{\ell=0}^p \binom{p}{\ell}(2r)\cdots(2r-\ell+1)x_j^{2r-\ell}\sum_{(m_1,\dots,m_{p-\ell})} C_{p-\ell,m_i}\frac{x_j^{\ell-p}}{(1+\epsilon x^{2m})^2}\\
&= \sum_{\ell=0}^p c_{p,\ell,r}\frac{x_j^{2r-p}}{(1+\epsilon x^{2m})^2}.
\end{align*}
It follows that
\begin{align*}
\Bigl\lVert\frac{x_j^r}{1+\epsilon x^{2m}}(t)D_j^k(t)\psi\Bigr\rVert
&\leq\sum_{p=0}^k c_{p,k,r}\Bigl\lvert\Bigl\langle\psi,\frac{x_j^{2r-p}}{(1+\epsilon x^{2m})^2}(t)D_j^{2k-p}(t)\psi\Bigr\rangle\Bigr\rvert^{1/2}\\
&\leq\sum_{p=0}^k c_{p,k,r}\Bigl\lVert\frac{x_j^m}{1+\epsilon x^{2m}}(t)\psi\Bigl\lVert^{1/2}\Bigl\lVert\frac{x_j^{2r-p-m}}{1+\epsilon x^{2m}}(t)D_j^{2k-p}(t)\psi\Bigr\rVert^{1/2}.
\end{align*}
This proves the analog of \eqref{eq:toapply}-\eqref{eq:toapply2}. From here, the proof goes as before and we get 
\[
\Bigl\lVert\frac{x_j^m}{1+\epsilon x^{2m}}\psi\Bigr\rVert^{1/m}\leq\norm{x^m\psi}^{1/m}+Ct\sum_{k=0}^m\norm{H^k\psi}^{1/m},
\]
independently of $\epsilon$. The result follows by taking $\epsilon\downarrow 0$, using Fatou's lemma.
\end{proof}

\begin{corollary}\label{cor:deriv}
Under the same assumptions on $V$, if $\psi\in H^{2m+4}(\D{R}^d)$ and $x^m\psi\in L^2(\D{R}^d)$, then
\begin{equation}\label{eq:dercont}
\frac{\dd}{\dd t}x^m(t)\psi=\ii\eul^{\ii tH}[-\Delta,x^m]\eul^{-\ii tH}\psi.
\end{equation}
\end{corollary}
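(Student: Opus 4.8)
The plan is to adapt the proof of Corollary~\ref{cor:deri} to the continuous setting, working componentwise: I fix $j\in\{1,\dots,d\}$ and prove $\frac{\dd}{\dd t}x_j^m(t)\psi=\ii\eul^{\ii tH}[-\Delta,x_j^m]\eul^{-\ii tH}\psi$, the vector identity \eqref{eq:dercont} following by collecting the $d$ components. As in the completion of the proof of Theorem~\ref{thm:momconts}, the unbounded multiplier $x_j^m$ is replaced by the bounded $C^\infty$ multiplier $g_\epsilon(x_j)\coloneqq x_j^m/(1+\epsilon x_j^{2m})$, $\epsilon>0$, which satisfies $\abs{g_\epsilon(x_j)}\leq\abs{x_j}^m$ together with uniform-in-$\epsilon$ bounds $\abs{g_\epsilon'(x_j)}\leq m\abs{x_j}^{m-1}$ and $\abs{g_\epsilon''(x_j)}\leq c_m(1+\abs{x_j}^{m-2})$, and converges pointwise to $x_j^m$ with $g_\epsilon'(x_j)\to mx_j^{m-1}$ and $g_\epsilon''(x_j)\to m(m-1)x_j^{m-2}$ as $\epsilon\downarrow 0$ (the cases $m\leq 1$ being trivial, as then $\Delta x_j^m\equiv 0$).

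First I would note that for each fixed $\epsilon$, since $g_\epsilon(x_j)$ is a bounded $C^2$ multiplier and $\psi\in D(H)=H^2(\D{R}^d)$, the classical Heisenberg differentiation (cf.\ \cite{Ha13}*{Lemma~10.17}), combined with $[-\Delta,g_\epsilon(x_j)]\phi=-g_\epsilon''(x_j)\phi-2g_\epsilon'(x_j)\partial_{x_j}\phi$ and $[V,g_\epsilon(x_j)]=0$, shows that $t\mapsto g_\epsilon(x_j)(t)\psi$ is $C^1$ with $\frac{\dd}{\dd t}g_\epsilon(x_j)(t)\psi=\ii\eul^{\ii tH}[-\Delta,g_\epsilon(x_j)]\eul^{-\ii tH}\psi$, hence
\[
g_\epsilon(x_j)(t)\psi=g_\epsilon(x_j)\psi+\ii\int_0^t\eul^{\ii sH}\bigl(-g_\epsilon''(x_j)-2g_\epsilon'(x_j)\partial_{x_j}\bigr)\eul^{-\ii sH}\psi\,\dd s.
\]
Then I would let $\epsilon\downarrow 0$. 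By Theorem~\ref{thm:momconts}, $x_j^m\eul^{-\ii sH}\psi\in L^2(\D{R}^d)$ for all $s$, so dominated convergence (using $\abs{g_\epsilon(x_j)}\leq\abs{x_j}^m$) gives $g_\epsilon(x_j)(t)\psi\to x_j^m(t)\psi$ and $g_\epsilon(x_j)\psi\to x_j^m\psi$ in $L^2$. For the integral term the key inputs are that $x_j^{m-2}\eul^{-\ii sH}\psi\in L^2$ (Theorem~\ref{thm:momconts} together with H\"older \eqref{eq:withmod}) and $x_j^{m-1}\partial_{x_j}\eul^{-\ii sH}\psi\in L^2$, the latter obtained by chaining \eqref{eq:toapply}--\eqref{eq:toapply2} to bound $\norm{x_j^{m-1}(s)D_j(s)\psi}$ by finitely many terms $\norm{x_j^m(s)\psi}^a\norm{D_j^m(s)\psi}^b$, finite by Theorem~\ref{thm:momconts} and \eqref{eq:ptr}; with the uniform bounds on $g_\epsilon',g_\epsilon''$ these give both the pointwise-in-$s$ $L^2$-convergence of the integrand to $\eul^{\ii sH}[-\Delta,x_j^m]\eul^{-\ii sH}\psi$ and a uniform-in-$\epsilon$ $L^2$-bound $M(t)$ on $[0,t]$, so dominated convergence governs the $\dd s$-integral as well. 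This yields $x_j^m(t)\psi=x_j^m\psi+\ii\int_0^t\eul^{\ii sH}[-\Delta,x_j^m]\eul^{-\ii sH}\psi\,\dd s$.

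Finally, the integrand $s\mapsto\eul^{\ii sH}[-\Delta,x_j^m]\eul^{-\ii sH}\psi$ is $L^2$-continuous, so the fundamental theorem of calculus gives \eqref{eq:dercont}. I would prove this continuity exactly as the first half of Corollary~\ref{cor:deri}: split $\norm{x_j^{m-2}(s)\psi-x_j^{m-2}(t)\psi}$ and $\norm{x_j^{m-1}(s)D_j(s)\psi-x_j^{m-1}(t)D_j(t)\psi}$ each into a term handled by strong continuity of $\eul^{\ii sH}$ on the $L^2$ vectors $x_j^{m-2}\eul^{-\ii tH}\psi$, resp.\ $x_j^{m-1}D_j\eul^{-\ii tH}\psi$, plus a term of the form $\norm{\int_t^s x_j^{m-2}\eul^{-\ii\alpha H}H\psi\,\dd\alpha}$, resp.\ $\norm{\int_t^s x_j^{m-1}\partial_{x_j}\eul^{-\ii\alpha H}H\psi\,\dd\alpha}$ (the weights pulled inside via the same Fatou/$g_\epsilon$-regularization device), which is bounded by Theorem~\ref{thm:momconts} and \eqref{eq:ptr} applied to $H\psi$ — legitimate since $\psi\in H^{2m+4}$ forces $H\psi\in H^{2m+2}$ and $x_j^{m-1}H\psi\in L^2$.

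The main obstacle is precisely this last family of estimates: obtaining control of the weighted quantities $\norm{x_j^{m-1}\partial_{x_j}\eul^{-\ii sH}\psi}$, $\norm{x_j^{m-2}\eul^{-\ii sH}\psi}$ (and of $\norm{x_j^{m-1}\partial_{x_j}\eul^{-\ii\alpha H}H\psi}$) that is simultaneously uniform in $\epsilon$ and locally uniform in time, which requires carefully iterating \eqref{eq:toapply}--\eqref{eq:toapply2} and invoking Theorem~\ref{thm:momconts} both for $\psi$ and for $H\psi$; this is exactly where the extra regularity $\psi\in H^{2m+4}$ (beyond the $\psi\in H^{2m}$ needed for Theorem~\ref{thm:momconts} itself) is consumed. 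Everything else — the Heisenberg differentiation for the bounded multiplier, the dominated-convergence passages, and the closing application of the fundamental theorem of calculus — is routine.
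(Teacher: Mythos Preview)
Your approach is valid and takes a somewhat different organizational route from the paper's. The paper follows the difference-quotient scheme \eqref{eq:derfo} directly, handling the unboundedness of $H$ by writing $Hx^m=x^mH+[-\Delta,x^m]$ and controlling each piece separately; you instead pass to the limit in the regularized Heisenberg identity to obtain the integral formula $x_j^m(t)\psi=x_j^m\psi+\ii\int_0^t\eul^{\ii sH}[-\Delta,x_j^m]\eul^{-\ii sH}\psi\,\dd s$ and then differentiate via the fundamental theorem of calculus. Both routes rest on the same technical pillars --- Theorem~\ref{thm:momconts}, the bound \eqref{eq:ptr}, and the iterations \eqref{eq:toapply}--\eqref{eq:toapply2} --- and your organization has the minor advantage of not requiring a separate check that $x_j^m\eul^{-\ii tH}\psi\in D(H)$.

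There is, however, a gap in your final continuity step. The assertion ``$x_j^{m-1}H\psi\in L^2$'' does not follow from the stated hypotheses: since $H\psi=-\Delta\psi+V\psi$ you would need $x_j^{m-1}\partial_{x_k}^2\psi\in L^2$ for every $k$, but $x^m\psi\in L^2$ together with $\psi\in H^{2m+4}$ gives no such \emph{mixed} weight--derivative control when $k\neq j$. Hence the bound you want on $\norm{x_j^{m-1}\partial_{x_j}\eul^{-\ii\alpha H}H\psi}$, which would ultimately rest on Theorem~\ref{thm:momconts} applied to $H\psi$, is not justified as written. The fix is exactly the device the paper uses for the $[-\Delta,x^m]$ piece: apply \eqref{eq:toapply}--\eqref{eq:compa} (and \eqref{eq:produp}) directly to the difference $\phi^\delta\coloneqq\eul^{-\ii sH}\psi-\eul^{-\ii tH}\psi$ to obtain $\norm{x_j^{m-1}\partial_{x_j}\phi^\delta}\leq C\,\norm{x_j^m\phi^\delta}^{1-1/m}\norm{D_j^m\phi^\delta}^{1/m}$. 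Then $\norm{x_j^m\phi^\delta}\to 0$ follows immediately from your own integral formula (which shows $s\mapsto x_j^m(s)\psi$ is even Lipschitz), while $\norm{D_j^m\phi^\delta}$ stays bounded by \eqref{eq:ptr}. This yields the required continuity of the integrand without invoking any weighted estimate on $H\psi$.
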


\begin{proof}
The proof is the same as that of Corollary~\ref{cor:deri}, using Theorem~\ref{thm:momconts}. In more details, the fact that $x^m(s)\psi\to x^m(t)\psi$ as $s\to t$ for $\psi\in H^{2m+2}(\D{R}^d)$ is proved the same way by considering $f_\epsilon(x)\coloneqq\frac{x^m}{\sqrt{1+\epsilon x^{2m}}}$ instead. For the derivative, to deal with the second and third terms at the end of \eqref{eq:derfo}, we use that $Hx^m=x^m H + [-\Delta,x^m]$ instead. The term $x^mH$ is dealt with as before. For the second term $[-\Delta,x^m]$, let $\phi_t^\delta\coloneqq\eul^{-\ii tH}\psi-\eul^{-\ii(t+\delta)H}\psi$. We have $\norm{[-\Delta,x^m]\phi_t^\delta}\leq\sum_{i=1}^d [m(m-1)\norm{x_i^{m-2}\phi_t^\delta}+ 2m\norm{x_i^{m-1}\partial_{x_i}\phi_t^\delta}]$. The calculations \eqref{eq:toapply}--\eqref{eq:compa} and their later generalization to all $m$ imply that we may bound the second term by $C\norm{x^m\phi_t^{\delta}}^{1-\frac{1}{m}}\norm{D^m\phi_t^{\delta}}^{\frac{1}{m}}$. The norm $\norm{x^p\phi_t^{\delta}}\to 0$ as $\delta\to 0$ for $p=m-2,m$ using the analog of \eqref{eq:contmoms}, while $\norm{D^m\phi_t^\delta}$ is controlled using \eqref{eq:ptr}. Finally the last term in \eqref{eq:derfo} is controlled using the same Fatou argument and we get \eqref{eq:dercont}.
\end{proof}

\begin{bibdiv}
\begin{biblist}
\bib{AiWa12}{article}{
   author={Aizenman, Michael},
   author={Warzel, Simone},
   title={Absolutely continuous spectrum implies ballistic transport for
   quantum particles in a random potential on tree graphs},
   journal={J. Math. Phys.},
   volume={53},
   date={2012},
   number={9},
   pages={095205, 15},
}
\bib{AnISW21}{article}{
   author={Anantharaman, Nalini},
   author={Ingremeau, Maxime},
   author={Sabri, Mostafa},
   author={Winn, Brian},
   title={Absolutely continuous spectrum for quantum trees},
   journal={Comm. Math. Phys.},
   volume={383},
   date={2021},
   number={1},
   pages={537--594},
}
\bib{AnS19a}{article}{
   author={Anantharaman, Nalini},
   author={Sabri, Mostafa},
   title={Poisson kernel expansions for Schr\"{o}dinger operators on trees},
   journal={J. Spectr. Theory},
   volume={9},
   date={2019},
   number={1},
   pages={243--268},
}
\bib{AnS19b}{article}{
   author={Anantharaman, Nalini},
   author={Sabri, Mostafa},
   title={Recent results of quantum ergodicity on graphs and further
   investigation},
   journal={Ann. Fac. Sci. Toulouse Math. (6)},
   volume={28},
   date={2019},
   number={3},
   pages={559--592},
}
\bib{AsKn98}{article}{
   author={Asch, Joachim},
   author={Knauf, Andreas},
   title={Motion in periodic potentials},
   journal={Nonlinearity},
   volume={11},
   date={1998},
   number={1},
   pages={175--200},
}
\bib{Be82}{book}{
   author={Berthier, A.-M.},
   title={Spectral theory and wave operators for the Schr\"{o}dinger equation},
   series={Research Notes in Mathematics},
   volume={71},
   publisher={Pitman (Advanced Publishing Program), Boston, Mass.-London},
   date={1982},
   pages={iv+306},
}
\bib{Bill95}{book}{
   author={Billingsley, Patrick},
   title={Probability and measure},
   series={Wiley Series in Probability and Mathematical Statistics},
   edition={3},
   note={A Wiley-Interscience Publication},
   publisher={John Wiley \& Sons, Inc., New York},
   date={1995},
   pages={xiv+593},
}
\bib{Ca67}{book}{
   author={Cartan, Henri},
   title={Calcul diff\'{e}rentiel},
   language={French},
   publisher={Hermann, Paris},
   date={1967},
   pages={178},
}
\bib{DGS21}{article}{
   author={Duerinckx, Mitia},
   author={Gloria, Antoine},
   author={Shirley, Christopher},
   title={Approximate normal forms via Floquet-Bloch theory: Nehoro\v{s}ev
   stability for linear waves in quasiperiodic media},
   journal={Comm. Math. Phys.},
   volume={383},
   date={2021},
   number={2},
   pages={633--683},
}
\bib{Fil}{article}{
   author={Fillman, Jake},
   title={Ballistic transport for periodic Jacobi operators on $\D{Z}^d$},
   conference={
      title={From Operator Theory to Orthogonal Polynomials, Combinatorics, and Number Theory},
   },
   book={
      series={Oper. Theory Adv. Appl.},
      volume={285},
      publisher={Birkh\"{a}user, Cham},
   },
   date={2021},
   pages={57--68},
}
\bib{GJS04}{article}{
   author={Grimmett, Geoffrey R.},
   author={Janson, Svante},
   author={Scudo, Petra F.},
   title={Weak limits for quantum random walks},
   journal={Phys. Rev. E},
   volume={69},
   date={2004},
   number={2},
   pages={026119},
}
\bib{Ha13}{book}{
   author={Hall, Brian C.},
   title={Quantum theory for mathematicians},
   series={Graduate Texts in Mathematics},
   volume={267},
   publisher={Springer, New York},
   date={2013},
   pages={xvi+554},
}
\bib{Ka04}{book}{
   author={Kanwal, Ram P.},
   title={Generalized functions},
   edition={3},
   note={Theory and applications},
   publisher={Birkh\"{a}user Boston, Inc., Boston, MA},
   date={2004},
   pages={xviii+476},
}
\bib{KPS21}{article}{
   author={Karpeshina, Yulia},
   author={Parnovski, Leonid},
   author={Shterenberg, Roman},
   title={Ballistic transport for Schr\"{o}dinger operators with quasi-periodic
   potentials},
   journal={J. Math. Phys.},
   volume={62},
   date={2021},
   number={5},
   pages={Paper No. 053504, 12},
}
\bib{Kato}{book}{
   author={Kato, Tosio},
   title={Perturbation theory for linear operators},
   series={Classics in Mathematics},
   note={Reprint of the 1980 edition},
   publisher={Springer-Verlag, Berlin},
   date={1995},
   pages={xxii+619},
}
\bib{KLWa12}{article}{
   author={Keller, Matthias},
   author={Lenz, Daniel},
   author={Warzel, Simone},
   title={Absolutely continuous spectrum for random operators on trees of
   finite cone type},
   journal={J. Anal. Math.},
   volume={118},
   date={2012},
   number={1},
   pages={363--396},
}
\bib{Kl96}{article}{
   author={Klein, Abel},
   title={Spreading of wave packets in the Anderson model on the Bethe
   lattice},
   journal={Comm. Math. Phys.},
   volume={177},
   date={1996},
   number={3},
   pages={755--773},
}
\bib{KorSa14}{article}{
   author={Korotyaev, Evgeny},
   author={Saburova, Natalia},
   title={Schr\"{o}dinger operators on periodic discrete graphs},
   journal={J. Math. Anal. Appl.},
   volume={420},
   date={2014},
   number={1},
   pages={576--611},
}
\bib{KorSa18}{article}{
   author={Korotyaev, E.},
   author={Saburova, N.},
   title={Spectral estimates for the Schr\"{o}dinger operator on periodic
   discrete graphs},
   language={Russian, with Russian summary},
   journal={Algebra i Analiz},
   volume={30},
   date={2018},
   number={4},
   pages={61--106},
   issn={0234-0852},
   translation={
      journal={St. Petersburg Math. J.},
      volume={30},
      date={2019},
      number={4},
      pages={667--698},
   },
}
\bib{Kru11}{article}{
   author={Kr\"{u}ger, Helge},
   title={Periodic and limit-periodic discrete Schr\"odinger operators},
   date={2011},
   eprint={https://arxiv.org/abs/1108.1584},
}
\bib{Ku16}{article}{
   author={Kuchment, Peter},
   title={An overview of periodic elliptic operators},
   journal={Bull. Amer. Math. Soc. (N.S.)},
   volume={53},
   date={2016},
   number={3},
   pages={343--414},
}
\bib{Last96}{article}{
   author={Last, Yoram},
   title={Quantum dynamics and decompositions of singular continuous
   spectra},
   journal={J. Funct. Anal.},
   volume={142},
   date={1996},
   number={2},
   pages={406--445},
}
\bib{LeMS20}{article}{
   author={Le Masson, Etienne},
   author={Sabri, Mostafa},
   title={$L^p$ norms and support of eigenfunctions on graphs},
   journal={Comm. Math. Phys.},
   volume={374},
   date={2020},
   number={1},
   pages={211--240},
}
\bib{RaSi78}{article}{
   author={Radin, Charles},
   author={Simon, Barry},
   title={Invariant domains for the time-dependent Schr\"{o}dinger equation},
   journal={J. Differential Equations},
   volume={29},
   date={1978},
   number={2},
   pages={289--296},
}
\bib{ReSi78}{book}{
   author={Reed, Michael},
   author={Simon, Barry},
   title={Methods of modern mathematical physics. IV. Analysis of operators},
   publisher={Academic Press [Harcourt Brace Jovanovich, Publishers], New
   York-London},
   date={1978},
   pages={xv+396},
}
\bib{SaSa20}{article}{
   author={Saigo, Hayato},
   author={Sako, Hiroki},
   title={Space-homogeneous quantum walks on $\mathbb{Z}$ from the viewpoint of
   complex analysis},
   journal={J. Math. Soc. Japan},
   volume={72},
   date={2020},
   number={4},
   pages={1201--1237},
}
\bib{Sa21}{article}{
   author={Sako, Hiroki},
   title={Convergence theorems on multi-dimensional homogeneous quantum
   walks},
   journal={Quantum Inf. Process.},
   volume={20},
   date={2021},
   number={3},
   pages={Paper No. 94, 24},
}
\bib{Si79}{book}{
   author={Simon, Barry},
   title={Functional integration and quantum physics},
   series={Pure and Applied Mathematics},
   volume={86},
   publisher={Academic Press, Inc. [Harcourt Brace Jovanovich, Publishers],
   New York-London},
   date={1979},
   pages={ix+296},
}
\bib{Te14}{book}{
   author={Teschl, Gerald},
   title={Mathematical methods in quantum mechanics},
   series={Graduate Studies in Mathematics},
   volume={157},
   edition={2},
   note={With applications to Schr\"{o}dinger operators},
   publisher={American Mathematical Society, Providence, RI},
   date={2014},
   pages={xiv+358},
}
\bib{Wi78}{article}{
   author={Wilcox, Calvin H.},
   title={Theory of Bloch waves},
   journal={J. Analyse Math.},
   volume={33},
   date={1978},
   pages={146--167},
}
\end{biblist}
\end{bibdiv}
\end{document}